\documentclass[a4paper,11pt]{amsart}
\usepackage{amssymb,dsfont}
\usepackage{cmap}
\usepackage{hyperxmp}
\usepackage{graphicx}
\usepackage[pdfdisplaydoctitle=true,
      colorlinks=true,
      urlcolor=blue,
      citecolor=blue,
      linkcolor=blue,
      pdfstartview=FitH,
      pdfpagemode= UseNone,
      bookmarksnumbered=true]{hyperref} 
\usepackage[all]{xy}
\usepackage{cleveref}
\vfuzz2pt 
\hfuzz2pt 
\newtheorem{thm}{Theorem}[section]
\newtheorem{cor}[thm]{Corollary}
\newtheorem{lem}[thm]{Lemma}
\newtheorem{prop}[thm]{Proposition}
\theoremstyle{definition}
\newtheorem{defn}[thm]{Definition}
\theoremstyle{remark}
\newtheorem{rem}[thm]{Remark}
\newtheorem{ex}[thm]{Example}
\numberwithin{equation}{section}
\newcommand{\CC}{\mathbb{C}}                
\newcommand{\RR}{\mathbb{R}}                
\newcommand{\ZZ}{\mathbb{Z}}                

\newcommand{\Ela}{\mathbb{E}\mathrm{la}}    
\newcommand{\HH}{\mathbb{H}}                
\newcommand{\TT}{\mathbb{T}}                
\newcommand{\Sym}{\mathbb{S}}               

\newcommand{\Sn}[1]{\mathrm{S}^{#1}}        
\newcommand{\SRn}[1]{\mathrm{S}^{#1}_{\mathbb{\RR}}(\CC^{2})}       
\newcommand{\Hn}[1]{\mathrm{H}^{#1}}        

\newcommand{\SL}{\mathrm{SL}}               
\newcommand{\SU}{\mathrm{SU}}               
\newcommand{\OO}{\mathrm{O}}                
\newcommand{\SO}{\mathrm{SO}}               
\newcommand{\octa}{\mathbb{O}}              
\newcommand{\ico}{\mathbb{I}}               
\newcommand{\tetra}{\mathbb{T}}             
\newcommand{\DD}{\mathbb{D}}                
\newcommand{\perm}{\mathfrak{S}}            
\newcommand{\triv}{\mathds{1}}		        

\newcommand{\be}{\pmb{e}}                   
\newcommand{\nn}{\pmb{n}}                   
\newcommand{\vv}{\pmb{v}}                   
\newcommand{\ww}{\pmb{w}}                   
\newcommand{\xx}{\pmb{x}}                   
\newcommand{\bxi}{\pmb{\xi}}                

\newcommand{\ff}{\mathbf{f}}                
\newcommand{\bg}{\mathbf{g}}                
\newcommand{\bw}{\mathbf{w}}                

\newcommand{\Idd}{\mathbf{1}}               
\newcommand{\qq}{\mathrm{q}}                
\newcommand{\rp}{\mathrm{p}}                
\newcommand{\rh}{\mathrm{h}}                

\newcommand{\ba}{\mathbf{a}}
\newcommand{\bb}{\mathbf{b}}
\newcommand{\bd}{\mathbf{d}}
\newcommand{\bh}{\mathbf{h}}

\newcommand{\bs}{\mathbf{s}}
\newcommand{\bv}{\mathbf{v}}
\newcommand{\blambda}{\pmb \lambda}

\newcommand{\bH}{\mathbf{H}}                
\newcommand{\bT}{\mathbf{T}}                

\DeclareMathOperator{\Ad}{Ad}
\DeclareMathOperator{\tr}{tr}

\newcommand{\norm}[1]{\lVert#1\rVert}       
\newcommand{\abs}[1]{\lvert#1\rvert}        
\newcommand{\set}[1]{\left\{#1\right\}}     

\newcommand{\tq}[1]{{{\mathbf{#1}}}}
\newcommand{\td}[1]{{\mathbf{#1}}}
\renewcommand{\vec}{\pmb}

\hypersetup{
  pdfauthor={M. Olive, B. Kolev, B. Desmorat and R. Desmorat}, 
  pdftitle={Harmonic factorization and reconstruction of the elasticity tensor}, 
  pdfsubject={MSC 2010: 74E10 (15A72, 74B05)}, 
  pdfkeywords={Anisotropy; Sylvester theorem; Harmonic factorization; Harmonic product; Tensorial reconstruction; Covariant tensors}, 
  pdflang=en, 
  }
\begin{document}

\title[Harmonic factorization]{Harmonic factorization and reconstruction of the elasticity tensor}%

\author{M. Olive}
\address[Marc Olive]{LMT  (ENS Cachan, CNRS, UMR 8535, Universit\'{e} Paris Saclay), F-94235 Cachan Cedex, France}
\email{marc.olive@math.cnrs.fr}

\author{B. Kolev}
\address[Boris Kolev]{Aix Marseille Universit\'{e}, CNRS, Centrale Marseille, I2M, UMR 7373, F-13453 Marseille, France}
\email{boris.kolev@math.cnrs.fr}

\author{B. Desmorat}
\address[Boris Desmorat]{Sorbonne Universit\'{e}, UMPC Univ Paris 06, CNRS, UMR 7190, Institut d'Alembert, F-75252 Paris Cedex 05, France}
\email{boris.desmorat@upmc.fr}
\address[Boris Desmorat]{Univ Paris Sud 11, F-91405 Orsay, France}

\author{R. Desmorat}
\address[Rodrigue Desmorat]{LMT (ENS Cachan, CNRS, UMR 8535, Universit\'{e} Paris Saclay), F-94235 Cachan Cedex, France}
\email{desmorat@lmt.ens-cachan.fr}

\subjclass[2010]{74E10 (15A72, 74B05)}%
\keywords{Anisotropy; Sylvester theorem; Harmonic factorization; Harmonic product; Tensorial reconstruction; Covariant tensors}%

\date{\today}%
\begin{abstract}
  In this paper, we study anisotropic Hooke's tensor: we propose a factorization of its fourth-order harmonic part into second-order tensors. We obtain moreover explicit equivariant reconstruction formulas, using second-order covariants, for transverse isotropic and orthotropic fourth-order harmonic tensors, and for trigonal and tetragonal fourth-order harmonic tensors up to a cubic fourth order covariant remainder.
\end{abstract}

\maketitle


\begin{scriptsize}
  \setcounter{tocdepth}{2}
  \tableofcontents
\end{scriptsize}

\section{Introduction}\label{sec:Intro}

Interest in coordinate-free representations formulas for linear anisotropic elastic materials has been an active research area, starting by the formulation of elastic energy functionals in the framework of finite strains~\cite{TN1965,Smi1965,Spe1984,Hol2000} and having a key role in the classification of linear elastic or piezoelectric materials~\cite{FV1996,GW2002}. It has also been of main importance in the Continuum Mechanics representation of cracked/damaged media \cite{Cha1979,LO1980,CS1982,Ona1984}.

One underlying difficulty is that several Elasticity tensors may represent the same linear anisotropic elastic material in different orientations. More precisely, any change of orientation of a material specified by some rotation $g\in \SO(3)$ defines a new Elasticity tensor $\overline{\tq{E}}$ deduced from the previous one by some \emph{group action}
\begin{equation*}
  \tq{E}\mapsto \overline{\tq{E}} = g\star \tq{E},\qquad \overline{E}_{ijkl} = g_{ip}g_{jq}g_{kr}g_{ls} E_{pqrs},
\end{equation*}
where Einstein convention on repeated indices is used. As the material is rotated, the tensor $\tq{E}$ moves on its \emph{orbit} in the space $\Ela$ of Elasticity tensors. Thus, a linear elastic material is represented by the \emph{orbit} of an Elasticity tensor under this group action, and any intrinsic parameter is necessary an \emph{invariant} of this group action.

In a seminal paper, Boehler--Kirilov--Onat~\cite{BKO1994} emphasized the fundamental role played by polynomial \emph{invariants} of the Elasticity tensor and this has been used by Auffray--Kolev--Petitot~\cite{AKP2014} to classify the orbits of the elasticity tensor. The old problem of finding a basis of polynomial invariants for the fourth order Elasticity tensor was finally solved by Olive in 2014~\cite{Oli2014} after several attempts~\cite{BKO1994,BH1995,SB1997,Xia1997,Ost1998}. A \emph{minimal integrity basis of $297$ invariants} was definitively obtained in~\cite{OKA2017}. Note that in 2D, this integrity basis is composed by only $6$ invariants~\cite{Ver1979,Via1997}, which shows the incredible complexity of 3D linear elasticity compared to 2D. Moreover, in 3D, the problem cannot be reduced to the question of finding an integrity basis for second-order tensor-valued functions.

Two main tools have been used to solve this extremely difficult computational problem.

The first one is the so-called \emph{harmonic decomposition} which corresponds to the splitting of the Elasticity tensor into \emph{irreducible pieces}~\cite{Sch1951,Spe1970} and which was first achieved by Backus in 1970~\cite{Bac1970}. Such a decomposition is useful to compute the \emph{material symmetry classes}~\cite{FV1996,FBG1996,GW2002}. As mentioned, it has been used also in the study of effective elastic properties of cracked media~\cite{LO1980,Kan1984,Ona1984,BHL1995,CW2010} and -- without explicit reference to it -- in the homogenization techniques of laminated composites \cite{VV2001,VP2006,MVV2012}. Note that in the later case, the polar decomposition method for 2D media has been used \cite{Ver1979,Van2005}, while its link with 2D harmonic decomposition has been shown in~\cite{FV2014,DD2015}.

The second one is a purely mathematical tool, \emph{binary forms}, which was the cornerstone of Invariant Theory in the nineteenth century~\cite{Cay1861,Gor1868,Gor1875,Gor1987,GY2010} and is connected to spinors. This tool has been brought first to the knowledge of the mechanical community by Backus~\cite{Bac1970}, and then by Boehler--Kirilov--Onat~\cite{BKO1994} but its deep power did not seem to have been widely considered so far. It happens to be extremely useful to solve problems in tensor analysis for higher order tensors in 3D. For instance, this tool was used by Auffray--Olive~\cite{OA2014} to produce a minimal integrity basis for a traceless and totally symmetric third order tensor, which appears in piezoelectricity~\cite{Yan2009} and second-gradient strain elasticity theory~\cite{Min1965}. Note, furthermore, that applications of binary forms are not just bounded to questions in continuum mechanics but are also related to other fields such as quantum computation~\cite{Luq2007} and cryptography~\cite{LR2012}.

The harmonic decomposition of the Elasticity tensor produces a quintuple
\begin{equation*}
  (\alpha,\beta,\ba',\bb',\bH)
\end{equation*}
where $\alpha,\beta,$ are scalar invariants, $\ba',\bb'$ are second order traceless tensors and $\bH$ is a \emph{fourth order harmonic tensor}. The scalar and second order tensor components are linearly related to the Voigt and the dilatation tensors, the two independent traces of the Elasticity tensor. The irreducible fourth order component remains problematic in the study the Elasticity tensor.

In the present work, we try to go further in the representation of the harmonic fourth order component $\bH$. We propose to introduce a secondary decomposition of this tensor, which was first discovered by Sylvester~\cite{Syl1909} and now known as \emph{Maxwell multipoles} (see also~\cite{Bac1970,Bae1998}), through a binary operation called \emph{harmonic product} (a commutative and associative product directly inherited from the product of binary forms). This decomposition aims ideally to reduce $\bH$ to a family of vectors, the Maxwell multipoles, or, as detailed next, to reduce $\bH$ to a family of second order tensors. In 2D, the factorization of the harmonic component $\bH$ by means of second order tensors was already established in~\cite{DD2015}, it results from the decomposition of the Elasticity tensor by Verchery's polar method \cite{Ver1979}. Besides, this factorization has the good taste to be \emph{equivariant}, which means that it commutes with the action of the rotation group.

However, the solution in 3D of such a decomposition -- \emph{i.e.} the four Maxwell multipoles or in an equivalent manner the two second order tensors that factorize $\bH$ -- is far from being unique and constructive. It is thus of poor value in practice. Moreover, there are no global \emph{equivariant sections} for such mappings. We concentrate therefore on producing \emph{equivariant explicit solutions} of the problem that we call a \emph{reconstruction} but which are limited to specific symmetry classes. This process uses the concept of \emph{covariants}, which generalizes the idea of invariants but which are tensors (rather than scalars) depending in an equivariant manner of the original tensor. For obvious reasons, no such reconstruction can be globally defined. As it is well known \cite{Liu1982}, only triclinic, monoclinic, orthotropic, transversely isotropic and isotropic classes have a chance to be reconstructed using second-order covariants.

In this paper, we obtain for the first time explicit reconstruction formulas in the orthotropic and the transversely isotropic cases, using \emph{polynomial} second-order harmonic covariants and rational invariants. Moreover, to overpass the geometrical constraints in the trigonal and in the tetragonal cases, we establish for these symmetry classes reconstruction formulas by means of second order covariants up to a single \emph{cubic fourth order covariant remainder}.

The considered reconstruction problem is closely related to the so-called \emph{isotropic extension} of anisotropic constitutive functions via structural tensors developed by Boehler~\cite{Boe1979} and Liu~\cite{Liu1982} independently (see also~\cite{Boe1987,Man2016}). In the case of a \emph{linear constitutive law}, an isotropic extension is just an \emph{equivariant reconstruction} \emph{limited to a given symmetry class} of the constitutive tensor. Furthermore, our approach is more constructive since we are able to give \emph{explicit equivariant formulas} in which the ``structural tensors'' are polynomial covariant tensors of the constitutive tensor.

\subsection*{Organization of the paper}

In Section~\ref{sec:tensorial-representations}, we recall basic materials on tensorial representations of the orthogonal group, in particular, the link between totally symmetric tensor and homogeneous polynomials and the harmonic decomposition. In Section~\ref{sec:harmonic-factorization}, we introduce the \emph{harmonic product} between harmonic polynomials and formulate an harmonic factorization theorem which proof is postponed in Appendix~\ref{sec:Annexe_Sylvester-theorem-proof}. The general reconstruction problem is formulated in Section~\ref{sec:reconstruction}, where a geometric obstruction for an equivariant reconstruction of fourth-order tensors, by means of second order polynomial covariants is explained. Explicit reconstructions formulas, using rational invariants and second order polynomial covariants are obtained in Section~\ref{sec:second-order-covariants} for transversely isotropic and orthotropic tensors. In Section~\ref{sec:other-results}, we propose similarly equivariant reconstructions for tetragonal and trigonal fourth order harmonic tensors up to a cubic covariant remainder.

\subsection*{Notations}

The following spaces are involved:
\begin{itemize}
  \item $\Ela$: the space of \emph{Elasticity tensors};
  \item $\TT^{n}(\RR^{3})$: the space of $n$-th order tensors on $\RR^{3}$;
  \item $\Sym^{n}(\RR^{3})$: the space of $n$-th order \emph{totally symmetric tensors} on $\RR^{3}$;
  \item $\HH^{n}(\RR^{3})$: the space of $n$-th order \emph{harmonic tensors} on $\RR^{3}$;
  \item $\Sn{n}(\RR^{3})$: the space of homogeneous polynomials of degree $n$ on $\RR^{3}$;
  \item $\Hn{n}(\RR^{3})$: the space of \emph{harmonic polynomials} of degree $n$ on $\RR^{3}$;
  \item $\Sn{n}(\CC^{2})$: the space of binary forms of degree $n$ on $\CC^{2}$;
  \item $\SRn{2n}$: the space of binary forms of degree $2n$ which correspond to real harmonic tensors of degree $n$.
\end{itemize}
In addition, we adopt the following conventions:
\begin{itemize}
  \item $x, y, z, x_i$: coordinates on $\RR^{3}$;
  \item $u,v$: coordinates on $\CC^{2}$;
  \item $\vv$, $\ww$, $\xx$,
        $\vec{n}$, $\vec{e}_i$: vectors of $\RR^{3}$;
  \item $\bxi$: a vector in $\CC^{2}$;
  \item $\ba, \bb$, $\bd_k$, $\td s$, $\bh$, $\bh_{k}$: second-order tensors;
  \item $\bd$: second-order dilatation tensor;
  \item $\bv$: second-order Voigt tensor;
  \item $\td 1$: second order unit tensor;
  \item $\tq E$: elasticity tensor;
  \item $\tq I$: fourth-order unit tensor;
  \item $\bT$, $\tq S$: generic tensors;
  \item $\bH$: fourth-order harmonic tensor;
  \item $\rp$: polynomial on $\RR^{3}$;
  \item $\rh$: harmonic polynomial on $\RR^{3}$;
  \item $\ff, \bg, \bw$: binary forms;
  \item $g$: element of $\SO(3)$;
  \item $\cdot$\,: the scalar product;
  \item $\otimes$: the tensor product;
  \item $\odot$, $\otimes_{(4)}$: the symmetric tensor product;
  \item $\ast$: the harmonic product;
  \item $\star$: the action;
  \item $\norm{\bT} = \sqrt{\bT\cdot \bT}$: the Euclidean norm;
  \item $(.)^t$: the transpose;
  \item $(.)'$: the deviatoric part;
  \item $(.)_{0}$: the harmonic projection;
  \item $\overline {(.)}, \overline \lambda $: the complex conjugate.
\end{itemize}

\section{Tensorial representations of the rotation group}
\label{sec:tensorial-representations}

In this section, we recall classical facts about tensorial representations of the rotation group $\SO(3)$. More details on the subject can be found in~\cite{Ste1994} or~\cite{GSS1988}. We consider thus tensors on the Euclidean space $\RR^{3}$ and, thanks to the Euclidean product, we do not have to distinguish between \emph{upper} and \emph{lower} indices. Therefore, an $n$-th order tensor may always be considered as a $n$-linear mapping
\begin{equation*}
  \bT: \RR^{3} \times \dotsb \times \RR^{3} \to \RR, \qquad (\xx_{1},\dotsc,\xx_n) \mapsto \bT(\xx_{1},\dotsc,\xx_n).
\end{equation*}
Let $\TT^{n}(\RR^{3})$ be the space of $n$-th order tensors. The rotation group $\SO(3)$ acts on $\TT^{n}(\RR^{3})$, by the rule
\begin{equation*}
  (g \star \bT)(\xx_{1},\dotsc,\xx_n) := \bT(g^{-1} \cdot \xx_{1},\dotsc,g^{-1} \cdot \xx_{n}),
\end{equation*}
where $\bT \in \TT^{n}(\RR^{3})$ and $g \in \SO(3)$.

\begin{rem}
  The reason why, in the definition of the action of the rotation group on tensors, we set $g^{-1}$ rather than $g$ on the right hand-side of the defining formula is to get a \emph{left action}, meaning that: $(g_{1}g_{2}) \star \bT = g_{1} \star (g_{2} \star \bT)$, rather than a \emph{right action}: $(g_{1}g_{2}) \star \bT = g_{2} \star (g_{1} \star \bT)$.
\end{rem}

Usually, we are not interested in the whole tensor space $\TT^{n}(\RR^{3})$ but rather in a subspace $\mathbb{V}$ defined by some particular \emph{index symmetries} and \emph{stable} (or invariant) under the action of the rotation group $\SO(3)$, which means that
\begin{equation*}
  g\star \bT\in \mathbb{V}, \qquad \forall \bT\in \mathbb{V}, \quad \forall g\in \SO(3).
\end{equation*}

\begin{ex}\label{ex:symmetric-tensors}
  The permutation group $\perm_{n}$ acts on $\TT^{n}(\RR^{3})$ by the rule
  \begin{equation*}
    (\sigma \star \bT)(\xx_{1},\dotsc,\xx_n) := \bT(\xx_{\sigma^{-1}(1)}, \dotsc, \xx_{\sigma^{-1}(n)}), \qquad \sigma \in \perm_{n},
  \end{equation*}
  and this action commutes with the action of $\SO(3)$. The space of \emph{totally symmetric tensors} of order $n$, denoted by $\Sym^{n}(\RR^{3})$, is the space of tensors $\bT$ which are invariant under this action, in other words, such that
  \begin{equation*}
    (\sigma \star \bT)(\xx_{1},\dotsc,\xx_n) = \bT(\xx_{1},\dotsc,\xx_n), \qquad \forall \sigma \in \perm_{n}.
  \end{equation*}
\end{ex}

\begin{ex}\label{ex:antisymmetric-tensors}
  The space of \emph{alternate tensors} of order $n$, denoted by $\Lambda^{n}(\RR^{3})$, is the space of tensors $\bT$ such that
  \begin{equation*}
    (\sigma \star \bT)(\xx_{1},\dotsc,\xx_n) = \epsilon(\sigma) \bT(\xx_{1},\dotsc,\xx_n), \qquad \forall \sigma \in \perm_{n},
  \end{equation*}
  where $\epsilon(\sigma)$ is the signature of the permutation $\sigma$.
\end{ex}

\subsection{Harmonic tensors}

The representation $\mathbb{V}$ is \emph{irreducible} if the only stable subspaces are $\set{0}$ and $\mathbb{V}$. The irreducible representations of the rotation group $\SO(3)$ are known (up to isomorphism)~\cite{GSS1988}.
One explicit model for
these irreducible representations is given by the so-called \emph{harmonic tensor spaces} which are described below.

Contracting two indices $i,j$ on a totally symmetric tensor $\bT$ does not depend on the particular choice of the pair $i,j$. Thus, we can refer to this contraction without any reference to a particular choice of indices. We will denote this contraction as $\tr \bT$, which is a totally symmetric tensor of order $n-2$ and is called the \emph{trace} of $\bT$. Iterating the process leads to
\begin{equation*}
  \tr^{k} \bT = \tr(\tr(\dotsb (\tr \bT)))
\end{equation*}
which is a totally symmetric tensor of order $n-2k$.

\begin{defn}
  A \emph{harmonic tensor} of order $n$ is a totally symmetric tensor $\bT \in \Sym^{n}(\RR^{3})$ such that $\tr \bT = 0$. The space of harmonic tensors of order $n$ will be denoted by $\HH^{n}(\RR^{3})$. It is a sub-vector space of $\Sym^{n}(\RR^{3})$ of dimension $2n+1$.
\end{defn}

The sub-space $\HH^{n}(\RR^{3})$ of $\Sym^{n}(\RR^{3})$ is invariant under the action of $\SO(3)$ and is irreducible (see for instance~\cite{GSS1988}). Moreover, every finite dimensional irreducible representation of $\SO(3)$ is isomorphic to some $\HH^{n}(\RR^{3})$
and every $\SO(3)$-representation $\mathbb{V}$ splits into a direct sum of \emph{harmonic tensor spaces} $\HH^{n}(\RR^{3})$ (see~\cite{BtD1995,Ste1994}). Such a direct sum is called the \emph{harmonic decomposition} of $\mathbb{V}$.

\begin{ex}
  The harmonic decomposition of a second order symmetric tensor $\bs$ corresponds to its decomposition $\bs' + \frac{1}{3}(\tr \bs)\, \Idd$ into its deviatoric and spheric parts
  \begin{equation*}
    \Sym^{2}(\RR^{3}) \,\simeq\, \HH^{2}(\RR^{3})\oplus \HH^{0}(\RR^{3}).
  \end{equation*}
  In this formula, the symbol $\simeq$ means that there exists a linear isomorphism between the two vector spaces $\Sym^{2}(\RR^{3})$ and $\HH^{2}(\RR^{3})\oplus \HH^{0}(\RR^{3})$ which, moreover, commutes with the action of the rotation group on both sides (and is thus called an \emph{equivariant isomorphism}).
\end{ex}

There is a well-known correspondence between totally symmetric tensors of order $n$ and homogeneous polynomial of degree $n$ on $\RR^{3}$, which extends the well-known correspondence between a symmetric bilinear form and a quadratic form via polarization. Indeed, to each symmetric tensor $\bT \in \Sym^{n}(\RR^{3})$ corresponds a homogeneous polynomial of degree $n$ given by
\begin{equation*}\label{eq:isomorphism-Sn-Pn}
  \rp(\xx) := \bT(\xx,\dotsc,\xx), \qquad \xx := (x, y, z)\in \RR^{3}.
\end{equation*}

This correspondence defines a linear isomorphism $\phi$ between the tensor space $\Sym^{n}(\RR^{3})$ and the polynomial space $\Sn{n}(\RR^{3})$ of homogeneous polynomials of degree $n$ on $\RR^{3}$. 
The linear isomorphism $\phi$ between these spaces commutes with the action of the rotation group:
\begin{equation*}
  \phi(g \star \bT) = g \star \phi(\bT), \qquad g \in \SO(3),
\end{equation*}
and is thus an \emph{equivariant isomorphism}.

\begin{rem}
  The rotation group $\SO(3)$ acts on the polynomial space $\Sn{n}(\RR^{3})$, by the rule
  \begin{equation*}
    (g \star \rp)(\xx) := \rp(g^{-1} \cdot \xx), \qquad g \in \SO(3).
  \end{equation*}
\end{rem}

The inverse $\bT = \phi^{-1}(\rp)$ can be recovered by \emph{polarization}. More precisely, the expression $\rp(t_{1}\xx_{1} + \dotsb + t_{n}\xx_{n})$ is a homogeneous polynomial in the variables $t_{1},\dotsc,t_n$ and we get
\begin{equation*}
  \bT(\xx_{1},\dotsc,\xx_{n}) = \frac{1}{n!} \left.\frac{\partial^{n}}{\partial t_{1} \dotsb \partial t_{n}}\right|_{t_{1} = \dotsb = t_{n} = 0}\rp(t_{1}\xx_{1} + \dotsb + t_{n}\xx_{n}).
\end{equation*}

\begin{rem}
  Note that the \emph{Laplacian operator} of $\rp = \phi(\bT)$ writes as
  \begin{equation*}
    \triangle \phi(\bT) = n(n-1) \phi(\tr \bT).
  \end{equation*}
  Thus, \emph{totally symmetric tensors with vanishing trace} correspond \textit{via} $\phi$ to \emph{harmonic polynomials} (polynomials with vanishing Laplacian). This justifies the denomination of \emph{harmonic tensors} for elements of $\HH^{n}(\RR^{3})$. More generally, for any non-negative integer $k$, we get
  \begin{equation}\label{eq:Lapl-k}
    \triangle^{k} \phi(\bT) = \frac{n!}{(n-2k)!} \phi(\tr^{k} \bT).
  \end{equation}
\end{rem}

The equivariant isomorphism $\phi$ sends $\HH^{n}(\RR^{3})$ to $\Hn{n}(\RR^{3})$, the space of homogeneous harmonic polynomials of degree $n$.
Thus, the spaces $\Hn{n}(\RR^{3})$ provide a second model for \emph{irreducible representations} of the rotation group $\SO(3)$. A third model of these irreducible representations is provided by the spaces of \emph{binary forms} $\SRn{2n}$, whose construction is detailed in Appendix~\ref{sec:Annexe_binary-forms}.

\subsection{Harmonic decomposition of totally symmetric tensors}
\label{subsec:totally-symmetric-tensors}

The harmonic decomposition of a homogeneous polynomial of degree $n$ is described by the following proposition.

\begin{prop}\label{prop:symmetric-harmonic-decomposition}
  Every homogeneous polynomial $\rp \in \Sn{n}(\RR^{3})$ can be decomposed uniquely as
  \begin{equation}\label{eq:symmetric-harmonic-decomposition}
    \rp = \rh_{0} + \qq\,\rh_{1} + \dotsb + \qq^{r}\rh_{r},
  \end{equation}
  where
  $\qq(x,y,z) = x^{2} + y^{2} + z^{2}$, $r=[n/2]$ -- with $[\cdot]$ integer part -- and $\rh_{k}$ is a harmonic polynomial of degree $n-2k$.
\end{prop}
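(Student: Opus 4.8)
The plan is to prove both the existence and the uniqueness of the decomposition \eqref{eq:symmetric-harmonic-decomposition} by induction on the degree $n$, exploiting the equivariant isomorphism $\phi$ to translate everything into a statement about harmonic tensors whenever convenient. The base cases $n=0$ and $n=1$ are immediate, since every polynomial of degree $0$ or $1$ is already harmonic. For the inductive step, suppose the result holds for all degrees strictly less than $n$. The key structural fact I would isolate first is a \emph{single-step splitting}: every $\rp \in \Sn{n}(\RR^{3})$ can be written uniquely as $\rp = \rh_{0} + \qq\, \rp_{1}$, with $\rh_{0} \in \Hn{n}(\RR^{3})$ harmonic and $\rp_{1} \in \Sn{n-2}(\RR^{3})$. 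Granting this, the inductive hypothesis applied to $\rp_{1}$ yields $\rp_{1} = \rh_{1} + \qq\,\rh_{2} + \dotsb + \qq^{r-1}\rh_{r}$ with $\rh_{k}$ harmonic of degree $(n-2)-2(k-1) = n-2k$, and substituting gives the claimed decomposition; uniqueness propagates in the same way, since $\rh_{0}$ is determined by $\rp$ and then $\rp_{1} = (\rp - \rh_{0})/\qq$ is determined, whence the $\rh_{k}$ for $k\ge 1$ are determined by the uniqueness at degree $n-2$.

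So the real content is the single-step splitting, and this is where I expect the main work to lie. The cleanest route is dimension counting together with an injectivity argument. Consider the linear map $M_{\qq}\colon \Sn{n-2}(\RR^{3}) \to \Sn{n}(\RR^{3})$ given by multiplication by $\qq$. It is injective because $\qq$ is not a zero divisor in the polynomial ring. Hence its image $\qq\cdot\Sn{n-2}(\RR^{3})$ has dimension $\dim \Sn{n-2}(\RR^{3})$. One then checks that $\Hn{n}(\RR^{3}) \cap \qq\cdot\Sn{n-2}(\RR^{3}) = \set{0}$: if $\qq\,\rp_{1}$ is harmonic, then applying $\triangle$ and using the identity $\triangle(\qq\,\rp_1) = 2(2(n-2)+3)\,\rp_1 + \qq\,\triangle \rp_1$ (a direct computation, since $\triangle \qq = 6$ and $\nabla \qq = 2\xx$) forces $\rp_1$ to satisfy a relation which, iterated, shows $\rp_1=0$; alternatively one observes that $M_{\qq}$ followed by harmonic projection, composed appropriately, is the identity up to a nonzero scalar. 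Combined with the known dimension formula $\dim \Hn{m}(\RR^{3}) = 2m+1$ and the Pascal-type identity $\dim \Sn{n}(\RR^{3}) = \binom{n+2}{2}$, a short arithmetic check gives $\dim \Hn{n}(\RR^{3}) + \dim \Sn{n-2}(\RR^{3}) = \dim \Sn{n}(\RR^{3})$, so the direct sum $\Hn{n}(\RR^{3}) \oplus \qq\cdot\Sn{n-2}(\RR^{3})$ exhausts $\Sn{n}(\RR^{3})$, which is precisely the single-step splitting with both existence and uniqueness.

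An alternative, more hands-on construction of $\rh_0$ that avoids the dimension count is to define the harmonic projection directly: seek $\rh_0 = \rp - \qq\,\rp_1$ with $\rp_1$ chosen so that $\triangle(\rp - \qq\,\rp_1) = 0$, i.e. $\triangle \rp = \triangle(\qq\,\rp_1)$. Using the commutation relation above, this is a linear equation for $\rp_1$ of the form $2(2n-1)\rp_1 + \qq\,\triangle\rp_1 = \triangle\rp$ in degree $n-2$, and one solves it by yet another downward induction on degree (the ``top'' harmonic part of $\triangle \rp$ fixes the top part of $\rp_1$, and so on). Either way, the main obstacle is purely this splitting-off of the leading harmonic summand; once it is in place the proposition follows by a routine induction. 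I would present the dimension-counting version as the primary argument since it delivers existence and uniqueness simultaneously and cleanly, relegating the explicit recursive formula for $\rh_0$ to a remark. It is worth noting that, via $\phi$ and \eqref{eq:Lapl-k}, this is exactly the classical statement that $\Sym^{n}(\RR^{3}) \simeq \bigoplus_{k=0}^{[n/2]} \HH^{n-2k}(\RR^{3})$, so the proposition can also simply be quoted from the standard references on $\SO(3)$-representations cited earlier; I would mention this for the reader's orientation but still include the short self-contained argument above.
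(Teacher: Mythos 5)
Your argument is correct. Note first that the paper itself offers no proof of this proposition: it is treated as classical, and what follows the statement in the text is only the explicit recursion \eqref{eq:hr-poly}--\eqref{eq:hk-poly} for computing the $\rh_{k}$, quoted from~\cite{OKA2017}. So there is nothing in the paper to match step for step; what you supply is a genuine, self-contained proof. Your reduction to the single-step splitting $\Sn{n}(\RR^{3}) = \Hn{n}(\RR^{3}) \oplus \qq\cdot\Sn{n-2}(\RR^{3})$ followed by induction is the standard textbook route, and the three ingredients are all in place: injectivity of multiplication by $\qq$, triviality of the intersection, and the dimension count $\binom{n+2}{2} - \binom{n}{2} = 2n+1$. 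The only spot worth tightening is the intersection argument: the identity $\triangle(\qq\,\rp_{1}) = 2(2n-1)\rp_{1} + \qq\,\triangle\rp_{1}$ uses Euler's relation $\xx\cdot\nabla\rp_{1} = (n-2)\rp_{1}$ in addition to $\triangle\qq = 6$ and $\nabla\qq = 2\xx$, and the phrase ``iterated, shows $\rp_{1}=0$'' deserves one more line — the relation forces $\qq \mid \rp_{1}$, then $\qq^{2}$ divides it, and so on until the degree is exhausted; each step needs the relevant constant $2k(2d+2k+1)$ to be nonzero, which it is. Your ``hands-on'' alternative for producing $\rh_{0}$ is essentially the mechanism behind the paper's recursive formulas, so the two presentations are complementary: yours delivers existence and uniqueness abstractly, the paper's gives the constructive projection operators actually used later in the text.
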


The harmonic polynomial $(\rp)_{0}:=\rh_{0}$ is the orthogonal projection of $\rp$ onto $\Hn{n}(\RR^{3})$ (for the induced Euclidean structure of $\RR^{3}$). Its explicit expression can be obtained recursively as follows (see~\cite{OKA2017}). Let first define
\begin{equation}\label{eq:hr-poly}
  \rh_{r} = \left\{
  \begin{array}{ll}
    \displaystyle{\frac{1}{(2r+1)!}}\triangle^{r} \rp,              & \hbox{if $n$ is even;} \\
    \displaystyle{\frac{3!\times (r+1)}{(2r+3)!}}\triangle^{r} \rp, & \hbox{if $n$ is odd.}
  \end{array}
  \right.
\end{equation}
Then $\rh_{k}$ is obtained inductively by the formula
\begin{equation}\label{eq:hk-poly}
  \rh_{k} = \mu(k) \triangle^{k}\left(\rp - \sum_{j=k+1}^{r} \qq^{j} \rh_{j}\right)
\end{equation}
with
\begin{equation*}
  \mu(k) := \frac{(2n-4k+1)!(n-k)!}{(2n-2k+1)!k!(n-2k)!},
\end{equation*}
which leads to $\rh_{0}$ after $r$ iterations.

Introducing the symmetric tensor product
\begin{equation*}
  \bT_{1} \odot \bT_{2} := \frac{1}{(n_{1} + n_{2})!} \sum_{\sigma \in \perm_{n_{1} + n_{2}}} \sigma \star \left(\bT_{1} \otimes \bT_{2}\right),
\end{equation*}
for $\bT_{1} \in \Sym^{n_{1}}(\RR^{3})$ and $\bT_{2} \in \Sym^{n_{2}}(\RR^{3})$, we can recast proposition~\ref{prop:symmetric-harmonic-decomposition} in tensorial form:
\begin{equation*}
  \Sym^{n}(\RR^{3})
  \simeq
  \HH^{n}(\RR^{3}) \oplus \HH^{n-2}(\RR^{3}) \oplus \dotsb \oplus \HH^{n-2r}(\RR^{3})
\end{equation*}
where $r=[n/2]$. More precisely, every totally symmetric tensor $\bT \in \Sym^{n}(\RR^{3})$ of order $n$ can be decomposed uniquely as
\begin{equation*}
  \bT = \bH_{0} + \Idd \odot \bH_{1} + \dotsb + \Idd^{\odot r-1}\odot \bH_{r-1}+ \Idd^{\odot r} \odot\bH_{r},
\end{equation*}
where $\bH_{k}$ is an harmonic tensor of degree $n-2k$ and $\Idd^{\odot k}$ means the symmetrized tensorial product of $k$ copies of $\Idd$. The harmonic part $(\bT)_{0}:=\bH_{0}$ is the orthogonal projection of $\bT$ onto $\HH^{n}(\RR^{3})$. Using~\eqref{eq:Lapl-k}, the tensorial form of~\eqref{eq:hr-poly} reads
\begin{equation*}
  \bH_{r} = \left\{
  \begin{array}{ll}
    \displaystyle{\frac{1}{2r+1}}\tr^r \bT, & \hbox{if $n$ is even;} \\
    \displaystyle{\frac{3}{2r+3}}\tr^r \bT, & \hbox{if $n$ is odd.}
  \end{array}
  \right.
\end{equation*}
which is a scalar for even $n$ or a vector for odd $n$. Finally, the tensorial form of~\eqref{eq:hk-poly} is given by:
\begin{equation*}
  \bH_{k}=\phi^{-1}(\rh_{k})=\mu(k) \frac{n!}{(n-2k)!} \tr^{k}\left[ \bT - \sum_{j=k+1}^{r} \Idd^{\odot j}\odot \bH_{j} \right]
\end{equation*}
Note that in this equation, $\tr^{k}\left[ \bT - \sum_{j=k+1}^{r} \Idd^{\odot j}\odot \bH_{j} \right]$ is the orthogonal projection of $\tr^{k} \bT$ on $\HH^{n-2k}(\RR^{3})$.

\subsection{Harmonic decomposition of the elasticity tensor}
\label{subsec:elasticity-tensor}

Harmonic decomposition of the space $\Ela$ of Elasticity tensors was first obtained by Backus~\cite{Bac1970} and is given by
\begin{equation*}
  \Ela
  \simeq
  \HH^{0}(\RR^{3})\oplus \HH^{0}(\RR^{3})\oplus \HH^{2}(\RR^{3})\oplus \HH^{2}(\RR^{3})\oplus \HH^{4}(\RR^{3}).
\end{equation*}

More precisely, an elasticity tensor $\tq{E}$ admits the following harmonic decomposition~\cite{Bac1970}:
\begin{equation}\label{eq:HarmE}
  \tq{E} = \alpha \, \Idd \otimes_{(4)} \Idd + \beta \, \Idd \otimes_{(2,2)}\! \Idd
  + \Idd \otimes_{(4)} \td a' + \Idd \otimes_{(2,2)} \! \td b' + \bH.
\end{equation}
In this formula, $\Idd$ is the Euclidean canonical bilinear 2-form represented by the components $(\delta_{ij})$ in any orthonormal basis and the Young-symmetrized tensor products $\otimes_{(4)}$ and $\otimes_{(2,2)}$, between symmetric second-order tensors $\ba,\bb$, are defined as follows:
\begin{equation*}
  (\td a \otimes_{(4)} \td b)_{ijkl} = \frac{1}{6} \big( a_{ij}b_{kl} + b_{ij} a_{kl} + a_{ik}b_{jl} + b_{ik}a_{jl} + a_{il}b_{jk} + b_{il}a_{jk} \big),
\end{equation*}
where $\otimes_{(4)}$ is the totally symmetric tensor product also denoted by $\odot$, and
\begin{equation*}
  (\td a \otimes_{(2,2)}\!\td b)_{ijkl} = \frac{1}{6} \big( 2a_{ij}b_{kl} + 2b_{ij}a_{kl} - a_{ik}b_{jl} - a_{il}b_{jk} - b_{ik}a_{jl} - b_{il}a_{jk} \big).
\end{equation*}
In the harmonic decomposition~\eqref{eq:HarmE}, $\alpha,\beta$ are scalars, $\td{a}',\td{b}'$ are second order harmonic tensors (also known as symmetric deviatoric tensors) and $\bH$ is a fourth-order harmonic tensor.

\begin{rem}
  In this decomposition, $\alpha,\beta$ and $\td{a}',\td{b}'$ are related to the \emph{dilatation tensor} $\td d:=\tr_{12} \tq E$ and the \emph{Voigt tensor} $\td v:=\tr_{13}\tq E$ by the formulas
  \begin{equation*}
    \alpha = \frac{1}{15}\left( \tr \bd + 2 \tr \bv\right)
    \qquad
    \beta = \frac{1}{6}\left( \tr \bd - \tr \bv\right)
  \end{equation*}
  and
  \begin{equation*}
    \td a'=\frac{2}{7} \left( \bd'+2 \bv ' \right)
    \qquad
    \td b'=2 \left( \bd'- \bv ' \right)
  \end{equation*}
  where $\td d':=\td d-\frac{1}{3}\tr( \td d)\,\Idd$ and $\td v':=\td v-\frac{1}{3}\tr(\td v)\,\Idd$ are the deviatoric parts of $\td d$ and $\td v$ respectively.
\end{rem}

\section{Harmonic factorization}
\label{sec:harmonic-factorization}

In this section, we introduce a product between harmonic tensors (which is associative and commutative). Then, we show that every harmonic tensor $\bH\in \HH^{n}(\RR^{3})$ can be factorized as an harmonic product of lower order tensors.

\subsection{Harmonic product}

The product of two harmonic polynomials is not harmonic in general. However, we define the following binary operation on the space of harmonic polynomials:
\begin{equation*}
  \rh_{1}\ast \rh_{2} := (\rh_{1}\rh_{2})_{0}.
\end{equation*}
This product is commutative:
\begin{equation*}
  \rh_{1}\ast \rh_{2}=\rh_{2}\ast \rh_{1},
\end{equation*}
and is moreover associative, which means that
\begin{equation*}
  (\rh_{1} \ast \rh_{2}) \ast \rh_{3} = \rh_{1} \ast (\rh_{2} \ast \rh_{3}).
\end{equation*}
The fact that this binary operation is associative is essential in practice. It means that the final result does not depend on the way we ``put the brackets" to compute it. We can write therefore
\begin{equation*}
  \rh_{1} \ast \rh_{2} \ast \rh_{3} := (\rh_{1} \ast \rh_{2}) \ast \rh_{3} = \rh_{1} \ast (\rh_{2} \ast \rh_{3}),
\end{equation*}
and omit the brackets. Endowed with this product, the vector space
\begin{equation*}
  \Hn{}(\RR^{3}) := \bigoplus_{n\geq 0} \Hn{n}(\RR^{3})
\end{equation*}
of harmonic polynomials becomes a \emph{commutative algebra}. The unity (for the multiplicative operation $\ast$) is represented by the constant harmonic polynomial $1$.

\begin{rem}\label{rem:H1odotH2}
  Using the equivariant isomorphism $\phi$ between $\HH^{n}(\RR^{3})$ and $\Hn{n}(\RR^{3})$, this algebraic structure is inherited on harmonic tensors. More precisely, given two harmonic tensors $\bH_{1}$ and $\bH_{2}$ of order $n_{1}$ and $n_{2}$ respectively, we define an harmonic tensor $\bH_{1} \ast \bH_{2}$ of order $n_{1}+n_{2}$ by the following formula:
  \begin{equation*}
    \bH_{1} \ast \bH_{2} := \phi^{-1}\left( \phi(\bH_{1}) \ast \phi(\bH_{2})\right) = \left(\bH_{1} \odot \bH_{2}\right)_{0}.
  \end{equation*}
\end{rem}

\begin{ex}[Harmonic product of two vectors]\label{ex:v-ast-w}
  For two vectors $\vv_{1},\vv_{2} \in \HH^{1}(\RR^{3})$, we get
  \begin{equation*}
    \vv_{1} \ast \vv_{2} = \frac{1}{2}\left(\vv_{1} \otimes \vv_{2} + \vv_{2} \otimes \vv_{1} \right) - \frac{1}{3} (\vv_{1} \cdot \vv_{2}) \Idd.
  \end{equation*}
\end{ex}

\begin{ex}[Harmonic product of two second order harmonic tensors]\label{ex:a-ast-b}
  For two second order harmonic tensors $\bh_{1}, \bh_{2} \in \HH^{2}(\RR^{3})$, we get
  \begin{equation*}
    \bh_{1} \ast \bh_{2} = \bh_{1} \odot \bh_{2} - \frac{2}{7}\, \Idd \odot (\bh_{1} \bh_{2} + \bh_{2} \bh_{1}) + \frac{2}{35} \tr(\bh_{1} \bh_{2})\, \Idd \odot \Idd.
  \end{equation*}
\end{ex}

\subsection{Harmonic factorization theorem}\label{subsec:Harm_Factor_Thm}

We now introduce the following $\SO(3)$-equivariant mapping, from the direct sum $\RR^{3} \oplus \dotsb \oplus \RR^{3}$ of $n$ copies of $\RR^{3}$ to $\Hn{n}(\RR^{3})$:
\begin{equation}\label{eq:Sylvester-map}
  \Phi : (\ww_{1}, \dotsc , \ww_{n}) \mapsto \rh(\xx) := (\xx \cdot \ww_{1}) \ast \dotsb \ast (\xx \cdot \ww_{n}).
\end{equation}
This mapping is obviously not \emph{one-to-one}. It is however \emph{surjective}. This result is not trivial and seems to have been demonstrated for the first time by Sylvester~\cite{Syl1909} (see also~\cite[Appendix 1]{Bac1970}). We provide our own proof of Sylvester's theorem in Appendix~\ref{sec:Annexe_Sylvester-theorem-proof}. It relies on \emph{binary forms}. Note however, that the use of binary forms in the proof limits the validity of this theorem to dimension $D = 3$ (see \cite{Bac1970} for a counter-example in dimension $D\geq 4$).

\begin{thm}[Sylvester's theorem]\label{thm:Sylvester}
  Let $\rh \in \Hn{n}(\RR^{3})$, then there exist $n$ vectors $\ww_{1}, \dotsc , \ww_{n}$ in $\RR^{3}$ such that
  \begin{equation*}
    \rh (\xx) = (\xx \cdot \ww_{1}) \ast \dotsb \ast (\xx \cdot \ww_{n}).
  \end{equation*}
\end{thm}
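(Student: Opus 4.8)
The plan is to transport the statement to the language of binary forms, where the harmonic product becomes the ordinary multiplication of polynomials, and then invoke the fundamental theorem of algebra. Recall (from Appendix~\ref{sec:Annexe_binary-forms}, which we may assume) that there is an $\SO(3)$-equivariant linear isomorphism between $\Hn{n}(\RR^{3})$ and $\SRn{2n}$, the space of binary forms of degree $2n$ on $\CC^{2}$ which correspond to real harmonic tensors. The crucial algebraic fact, which I would establish first as a lemma, is that this isomorphism sends the harmonic product $\ast$ on $\bigoplus_{n}\Hn{n}(\RR^{3})$ to the \emph{ordinary product} of binary forms: if $\rh_{1}\leftrightarrow \ff_{1}\in\Sn{2n_{1}}(\CC^{2})$ and $\rh_{2}\leftrightarrow \ff_{2}\in\Sn{2n_{2}}(\CC^{2})$, then $\rh_{1}\ast\rh_{2}\leftrightarrow \ff_{1}\ff_{2}$. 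This is precisely why $\ast$ is associative and commutative: it inherits these properties from polynomial multiplication. The point is that the harmonic projection $(\cdot)_{0}$, which is what makes $\ast$ awkward on the $\RR^{3}$ side, corresponds on the $\CC^{2}$ side to doing nothing at all — the product of two binary forms coming from harmonic tensors is already of the correct type.

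\textbf{Key steps.} First, I would record the dictionary between vectors and linear binary forms: to a vector $\ww\in\RR^{3}$ the correspondence associates a specific factor, of the shape $(a u + b v)(\,\overline{a}\,u + \overline{b}\,v)$ up to normalization, i.e. a degree-two binary form which is a product of a linear form and its conjugate, with the pair of roots determined by $\ww$ (this is the classical passage $\RR^{3}\simeq\Hn{1}(\RR^{3})\simeq\SRn{2}$, essentially stereographic projection / the spinor map). In particular the linear polynomial $\xx\cdot\ww$ on $\RR^{3}$ corresponds to such a form $q_{\ww}(u,v)$ of degree $2$. Second, given $\rh\in\Hn{n}(\RR^{3})$, let $\ff\in\SRn{2n}$ be the associated binary form of degree $2n$. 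By the fundamental theorem of algebra, $\ff$ factors over $\CC$ into $2n$ linear factors. Third — and this is where the reality constraint is used — because $\rh$ is a \emph{real} harmonic tensor, $\ff$ satisfies a reality condition (invariance under the antilinear involution coming from complex conjugation on $\RR^{3}$), which forces its $2n$ roots to come in conjugate pairs (a root on the ``real circle'' being its own conjugate partner, so such factors can still be paired). Hence $\ff$ is a product of $n$ quadratic factors, each of the form $q_{\ww_{i}}(u,v)$ for a suitable $\ww_{i}\in\RR^{3}$. Fourth, translating back via the dictionary and the product-to-$\ast$ lemma, $\rh = (\xx\cdot\ww_{1})\ast\dotsb\ast(\xx\cdot\ww_{n})$, which is the claim.

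\textbf{Main obstacle.} The genuine content is not the factorization over $\CC$ — that is just the fundamental theorem of algebra — but rather two bookkeeping points that must be handled carefully. The first is the lemma identifying $\ast$ with ordinary multiplication of binary forms; one must check that the product of a degree-$2n_{1}$ and a degree-$2n_{2}$ form in $\SRn{\bullet}$ lands in $\SRn{2(n_{1}+n_{2})}$ and that, after applying $\phi^{-1}$ composed with the binary-form isomorphism, this really reproduces $(\rh_{1}\rh_{2})_{0}$ and not some other combination. The second, and subtler, is the pairing of roots: one needs that every conjugate pair of roots of $\ff$ (including degenerate pairs where a root lies on the fixed circle of the involution) assembles into a genuine factor of the type $q_{\ww}$ with $\ww$ a \emph{real} vector — i.e. that the reality structure on $\SRn{2n}$ is exactly the one whose rank-one ``atoms'' are the $q_{\ww}$, $\ww\in\RR^{3}$. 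This is the step that fails in dimension $D\geq 4$, so the argument must visibly use $D=3$; concretely it uses that $\HH^{1}(\RR^{3})$ has dimension $3$ and is modeled by $\SRn{2}$, whose projectivization is a conic isomorphic to $\CC\mathrm{P}^{1}$, so that every degree-two factor of the right reality type is automatically of the form $q_{\ww}$. Once these two points are nailed down, the theorem follows immediately.
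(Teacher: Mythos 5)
Your strategy is exactly the paper's: pass to binary forms via the Cartan isomorphism $\psi$, use the multiplicativity $\psi^{-1}(\ff_{1}\ff_{2})=\left(\psi^{-1}(\ff_{1})\,\psi^{-1}(\ff_{2})\right)_{0}$ (Remark~\ref{rem:multiplicative-property}) to turn $\ast$ into the ordinary product, factor over $\CC$, and pair the roots using the reality condition — the paper's Lemma~\ref{lem:root-characterization} is precisely the root-characterization lemma you call for, and its proof isolates the same two points you flag as the real content. One correction is needed, though: the reality structure on $\SRn{2n}$ is $\overline{\ff}(-v,u)=(-1)^{n}\ff(u,v)$, so the induced involution on linear factors sends $au+bv$ to $\overline{b}u-\overline{a}v$ (on roots, $\lambda\mapsto -1/\overline{\lambda}$, the antipodal map of $\CC\mathrm{P}^{1}$), \emph{not} coefficient-wise conjugation; your proposed factor shape $(au+bv)(\overline{a}u+\overline{b}v)$ never lies in $\SRn{2}$, since its extreme coefficients $\abs{a}^{2}$ and $\abs{b}^{2}$ would have to satisfy $\abs{a}^{2}=-\abs{b}^{2}$. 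Relatedly, the antipodal map is fixed-point-free, so there is no ``real circle'' of self-paired roots to worry about: every root $\lambda\neq 0,\infty$ pairs with a distinct partner $-1/\overline{\lambda}$ of equal multiplicity (and $0$ pairs with $\infty$, giving the $uv$ factors), which is exactly what makes the grouping into $n$ real quadratic factors $(u-\lambda v)(\overline{\lambda}u+v)$ clean. With that fix, your argument is the paper's.
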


\begin{ex}\label{ex:multipoles-determination}
  Let $\rh \in \Hn{2}(\RR^{3})$ and $\ff\in \SRn{4}$ be the corresponding binary form (see example~\ref{ex:order2-harmonic-tensors}). The four roots of $\ff$ are
  \begin{equation*}
    \lambda_{1},-\frac{1}{\overline{\lambda_{1}}},\lambda_{2},-\frac{1}{\overline{\lambda_{2}}},
  \end{equation*}
  with the possible couple $(0,\infty)$ among them. Using the stereographic projection $\tau$ (see remark~\ref{ex:order1-harmonic-tensors}), we set
  \begin{equation*}
    \ww_{i} := \tau^{-1}(\lambda_{i}).
  \end{equation*}
  Note that
  \begin{equation*}
    \tau^{-1}\left(-\frac{1}{\overline{\lambda_{i}}}\right) = -\tau^{-1}(\lambda_{i}).
  \end{equation*}
  Then, we have
  \begin{equation*}
    \rh(\xx) = k (\xx \cdot \ww_{1}) \ast (\xx \cdot \ww_{2}),
  \end{equation*}
  for some constant $k \in \RR$.
\end{ex}

The lack of injectivity for the mapping $\Phi$ is detailed in the following proposition, which proof is provided in Appendix~\ref{sec:Annexe_Sylvester-theorem-proof}.

\begin{prop}\label{prop:fiber}
  We have
  \begin{equation*}
    \Phi(\tilde{\ww}_{1}, \dotsc , \tilde{\ww}_{n}) = \Phi(\ww_{1}, \dotsc , \ww_{n}).
  \end{equation*}
  if and only if
  \begin{equation*}
    (\tilde{\ww}_{1}, \dotsc , \tilde{\ww}_{n}) = (c_{1}\ww_{\sigma(1)}, \dotsc , c_{n}\ww_{\sigma(n)})
  \end{equation*}
  where $(\ww_{\sigma(1)}, \dotsc , \ww_{\sigma(n)})$ is a permutation of the vectors $(\ww_{1}, \dotsc , \ww_{n})$ for some $\sigma\in \perm_{n}$ and $c_{1}, \dotsc , c_{n}$ are real numbers such that $c_{1} \dotsb c_{n} = 1$.
\end{prop}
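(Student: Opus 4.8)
The plan is to translate the whole statement into the language of binary forms, where the harmonic product becomes the ordinary product of binary forms and where, as recalled in the appendices, $\Hn{n}(\RR^{3})$ is modeled by $\SRn{2n}$, a real form of the space of binary forms of degree $2n$. Under the stereographic identification $\tau$, a vector $\ww_i$ corresponds to a point $\lambda_i$ on the Riemann sphere, and the linear form $\xx\cdot\ww_i$ corresponds (up to a nonzero scalar) to the binary form $(v - \lambda_i u)(\bar{u} + \overline{\lambda_i}\bar{v})$ — that is, a degree-two binary form whose root pair is $\{\lambda_i, -1/\overline{\lambda_i}\}$, the antipodal pair. Thus $\Phi(\ww_1,\dots,\ww_n)$ corresponds to the degree-$2n$ binary form which is (up to a nonzero real scalar) the product $\prod_{i=1}^n (v-\lambda_i u)(\bar u + \overline{\lambda_i}\bar v)$. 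So the content of the proposition is: two such products, built from root data $\{\lambda_i\}$ and $\{\tilde\lambda_i\}$, agree up to a \emph{positive} scalar precisely when the $\tilde\lambda_i$ are a permutation of the $\lambda_i$; and then I must climb back from the $\lambda$'s to the $\ww$'s and track the scalars to get the conditions $\tilde\ww_i = c_i\ww_{\sigma(i)}$ with $\prod c_i = 1$.

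First I would establish the ``if'' direction, which is the easy half: if $(\tilde\ww_1,\dots,\tilde\ww_n)=(c_1\ww_{\sigma(1)},\dots,c_n\ww_{\sigma(n)})$ with $c_1\cdots c_n=1$, then each factor $\xx\cdot\tilde\ww_i = c_i\,(\xx\cdot\ww_{\sigma(i)})$, so by multilinearity of the harmonic product (the $\ast$-product is bilinear) $\Phi(\tilde\ww_1,\dots,\tilde\ww_n) = \big(\prod_i c_i\big)\,\Phi(\ww_{\sigma(1)},\dots,\ww_{\sigma(n)})$, and since $\ast$ is commutative and associative the permuted product equals $\Phi(\ww_1,\dots,\ww_n)$; the scalar $\prod_i c_i = 1$ does the rest.

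For the ``only if'' direction I would argue via unique factorization in the polynomial ring. Passing to binary forms, the hypothesis $\Phi(\tilde\ww_1,\dots,\tilde\ww_n)=\Phi(\ww_1,\dots,\ww_n)$ says $\prod_i (v-\tilde\lambda_i u)(\bar u+\overline{\tilde\lambda_i}\bar v)$ and $\prod_i (v-\lambda_i u)(\bar u+\overline{\lambda_i}\bar v)$ are equal up to a nonzero \emph{real} scalar (the reality because both lie in $\SRn{2n}$, and positivity of that scalar will need a small check, e.g. comparing a coefficient or using that $(v-\lambda u)(\bar u+\bar\lambda\bar v)$ evaluated suitably is a nonnegative quantity — this is where one must be careful, treating also the degenerate case $\lambda=\infty$, i.e. $\ww$ along the pole, separately). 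By unique factorization of homogeneous polynomials in two variables into linear factors, the multiset of projective roots on each side coincides; one then checks that the antipodal pairing forces the multiset $\{\lambda_i\}$ (as the ``upper-hemisphere representatives'', or more invariantly, as a multiset paired with its antipodal multiset) to be a permutation of $\{\tilde\lambda_i\}$ — here one must handle the subtlety that $\lambda_i$ and $-1/\overline{\lambda_i}$ are never equal (antipodal points are distinct), so the pairing is unambiguous, but a root could have multiplicity, which is fine since we work with multisets. Hence there is $\sigma\in\perm_n$ with $\tilde\lambda_i=\lambda_{\sigma(i)}$, so $\tilde\ww_i=\tau^{-1}(\tilde\lambda_i)=\tau^{-1}(\lambda_{\sigma(i)})=\ww_{\sigma(i)}$ up to a positive scalar — but in fact $\tau^{-1}$ lands on the \emph{unit} sphere, so one should instead phrase it as: $\tilde\ww_i$ and $\ww_{\sigma(i)}$ are positively proportional, write $\tilde\ww_i=c_i\ww_{\sigma(i)}$ with $c_i>0$ a priori, and then recompute the overall scalar: $\Phi(\tilde\ww_1,\dots,\tilde\ww_n)=\big(\prod_i c_i\big)\Phi(\ww_1,\dots,\ww_n)$ from the bilinearity argument above, and equality forces $\prod_i c_i=1$. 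A final remark is needed that the $c_i$ need only be assumed real (not positive) in the statement, which is consistent since flipping signs on an even number of them, or absorbing signs into the scalar, still satisfies $\prod c_i=1$ — one should note that $\xx\cdot(-\ww)=-(\xx\cdot\ww)$ gives the same antipodal root pair, so sign ambiguities in the $\ww_i$ are already built into the root-multiset argument.

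The main obstacle I expect is the bookkeeping of scalars: tracking the nonzero real constants that appear (i) when identifying $\xx\cdot\ww$ with its binary form, (ii) in the harmonic projection $(\cdot)_0$, and (iii) in reconstructing $\ww$ from $\lambda$ via $\tau^{-1}$ — and in particular proving that the scalar relating the two degree-$2n$ binary forms is \emph{positive}, so that it can be distributed as $\prod c_i$ with the $c_i$ real and their product exactly $1$ rather than $\pm1$. The degenerate configurations — some $\lambda_i=\infty$ (vector along the distinguished axis), or the special root pair $(0,\infty)$ mentioned in Example~\ref{ex:multipoles-determination} — must be checked to not break unique factorization or the antipodal pairing, but these are routine once the generic case is clean.
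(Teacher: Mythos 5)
Your proposal follows essentially the same route as the paper's proof: pass to binary forms via the stereographic representation, invoke unique factorization to match the antipodal root pairs up to a permutation, and then track signs and scales to conclude $c_{1}\dotsb c_{n}=1$. The only refinements in the paper's version are that it first rotates the whole configuration so that no vector points along $(0,0,\pm 1)$ (disposing of the $\lambda=\infty$ degeneracy you flag), and that it records the antipodal ambiguity directly as the two cases $\mu_{k}=\lambda_{\sigma(k)}$ or $\mu_{k}=-1/\overline{\lambda_{\sigma(k)}}$, i.e. $\tilde{\ww}_{k}=\pm\ww_{\sigma(k)}$ --- which is exactly the sign issue your closing remark patches, and which means the $c_{i}$ are genuinely of either sign rather than positive a priori.
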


\begin{rem}
  It is tempting to restrict the mapping $\Phi$ defined in~\eqref{eq:Sylvester-map} to unit vectors, in order to try to reduce the indeterminacy in the choice of an $n$-uple of vectors $(\ww_{1}, \dotsc , \ww_{n})$. However, this does not really solve the problem of indeterminacy. Of course, the indeterminacy due to the scaling by the $c_{i}$ is replaced by signs $\epsilon_{i}$, but the indeterminacy due to the permutation persists anyway.
\end{rem}

Sylvester's theorem is equivalent to the following (in appearance, more general) theorem which leads to other important factorization results for harmonic polynomials (and thus harmonic tensors).

\begin{thm}[Harmonic factorization theorem]\label{thm:harmonic-factorization-theorem}
  Let $\rh \in \Hn{kn}(\RR^{3})$. Then there exist harmonic polynomials $\rh_{1}, \dotsc, \rh_{k} \in \Hn{n}(\RR^{3})$ such that
  \begin{equation*}
    \rh = \rh_{1} \ast \dotsb \ast \rh_{k}.
  \end{equation*}
\end{thm}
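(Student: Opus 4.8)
The plan is to obtain \cref{thm:harmonic-factorization-theorem} as an essentially immediate consequence of Sylvester's theorem (\cref{thm:Sylvester}), using nothing more than the associativity and commutativity of the harmonic product $\ast$. First I would apply \cref{thm:Sylvester} to $\rh \in \Hn{kn}(\RR^{3})$, which has degree $kn$: this yields $kn$ vectors $\ww_{1}, \dotsc, \ww_{kn} \in \RR^{3}$ with
\[
  \rh(\xx) = (\xx \cdot \ww_{1}) \ast \dotsb \ast (\xx \cdot \ww_{kn}).
\]
Then I would gather these $kn$ linear factors into $k$ consecutive blocks of length $n$, setting, for $j = 1, \dotsc, k$,
\[
  \rh_{j}(\xx) := (\xx \cdot \ww_{(j-1)n + 1}) \ast \dotsb \ast (\xx \cdot \ww_{jn}).
\]
Each linear form $\xx \mapsto \xx \cdot \ww$ is a harmonic polynomial of degree $1$, and since $\ast$ adds degrees --- recall $\rh_{1} \ast \rh_{2} = (\rh_{1}\rh_{2})_{0}$ and \cref{rem:H1odotH2} --- the $n$-fold product $\rh_{j}$ is a harmonic polynomial of degree exactly $n$, i.e.\ $\rh_{j} \in \Hn{n}(\RR^{3})$. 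Because $\ast$ is associative and commutative, a product of $kn$ factors does not depend on the bracketing, so regrouping gives $\rh = \rh_{1} \ast \dotsb \ast \rh_{k}$, which is the desired factorization. (If $\rh = 0$ the statement is trivial; otherwise all the $\rh_{j}$ are automatically nonzero.)

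To justify the equivalence asserted just before the statement, I would also record the converse implication: specializing \cref{thm:harmonic-factorization-theorem} to $n = 1$ and identifying $\Hn{1}(\RR^{3})$ with $\RR^{3}$ via $\ww \mapsto (\xx \mapsto \xx \cdot \ww)$ recovers \cref{thm:Sylvester} verbatim, so the two statements indeed imply one another.

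I do not expect a genuine obstacle here: all the substance lies in \cref{thm:Sylvester} (equivalently, in the surjectivity of the map $\Phi$ of~\eqref{eq:Sylvester-map}), and the deduction only uses the algebraic structure of $\bigl(\Hn{}(\RR^{3}), \ast\bigr)$ established above. The one point deserving an explicit line is the grading remark guaranteeing that each block $\rh_{j}$ has degree $n$, rather than lying in a strictly smaller harmonic space; this is immediate from the definition of the harmonic product as the harmonic projection of the ordinary product of polynomials.
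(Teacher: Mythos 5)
Your proposal is correct and matches the paper's intended argument: the paper presents Theorem~\ref{thm:harmonic-factorization-theorem} as an immediate consequence of Sylvester's theorem (proved in Appendix~\ref{sec:Annexe_Sylvester-theorem-proof}), obtained exactly as you do by regrouping the $kn$ linear factors $(\xx\cdot\ww_{i})$ into $k$ blocks of $n$ via the associativity and commutativity of $\ast$. Your remarks on the degree of each block and on the converse implication (recovering Sylvester's theorem at $n=1$) are accurate and complete the equivalence the paper asserts.
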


\begin{rem}\label{rem:Tensorial-Hkn}
  The tensorial counterpart of theorem~\ref{thm:harmonic-factorization-theorem} is that for every harmonic tensor
  $\bH \in \HH^{kn}(\RR^{3})$ there exist harmonic tensors $\bH_{1},\dots, \bH_{k}\in \HH^n(\RR^{3})$ such that
  \begin{equation*}
    \bH= \bH_{1} \ast \dotsb \ast \bH_{k}.
  \end{equation*}
\end{rem}

A corollary of theorem~\ref{thm:harmonic-factorization-theorem} is the following result, which applies to every even order harmonic tensor.

\begin{cor}\label{cor:diffcarre_tensor}
  Let $\bH\in \HH^{2n}(\RR^{3})$. Then there exist $\bH_{1}, \bH_{2}\in \HH^{n}(\RR^{3})$ such that
  \begin{equation}\label{eq:square-decomposition}
    \bH = \bH_{1}\ast \bH_{1} -\bH_{2}\ast \bH_{2}.
  \end{equation}
\end{cor}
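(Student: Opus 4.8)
The plan is to deduce Corollary~\ref{cor:diffcarre_tensor} from the Harmonic factorization theorem (Theorem~\ref{thm:harmonic-factorization-theorem}) in its tensorial form (Remark~\ref{rem:Tensorial-Hkn}), applied with $k=2$. The point is purely algebraic: any product $\bH_1 \ast \bH_2$ of two harmonic tensors of equal order can be rewritten as a difference of two "squares" for the harmonic product, by the classical polarization identity $ab = \tfrac14\big((a+b)^2 - (a-b)^2\big)$, which holds in any commutative algebra — in particular in $\big(\Hn{}(\RR^3),\ast\big)$ restricted to the homogeneous component $\Hn{n}(\RR^3)$.

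Concretely, I would proceed as follows. First, apply Theorem~\ref{thm:harmonic-factorization-theorem} (tensorial version, Remark~\ref{rem:Tensorial-Hkn}) to the given $\bH \in \HH^{2n}(\RR^3)$ with $k=2$: there exist $\bg_1, \bg_2 \in \HH^{n}(\RR^3)$ with $\bH = \bg_1 \ast \bg_2$. Second, since $\HH^{n}(\RR^3)$ is a vector space and $\ast$ is bilinear, commutative and associative (these properties are recorded just before Remark~\ref{rem:H1odotH2}), set
\begin{equation*}
  \bH_1 := \tfrac12(\bg_1 + \bg_2), \qquad \bH_2 := \tfrac12(\bg_1 - \bg_2),
\end{equation*}
which both lie in $\HH^{n}(\RR^3)$. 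Third, expand using bilinearity and commutativity of $\ast$:
\begin{equation*}
  \bH_1 \ast \bH_1 - \bH_2 \ast \bH_2 = \tfrac14\big(\bg_1\ast\bg_1 + 2\,\bg_1\ast\bg_2 + \bg_2\ast\bg_2\big) - \tfrac14\big(\bg_1\ast\bg_1 - 2\,\bg_1\ast\bg_2 + \bg_2\ast\bg_2\big) = \bg_1 \ast \bg_2 = \bH,
\end{equation*}
which is exactly~\eqref{eq:square-decomposition}. This completes the argument.

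There is essentially no obstacle here: the entire content of the corollary is already carried by Theorem~\ref{thm:harmonic-factorization-theorem}, and the remaining step is the one-line polarization trick valid in any commutative ring in which $2$ is invertible (which is the case over $\RR$). The only thing worth double-checking is that the factors produced really are harmonic of order $n$ — but that is guaranteed by the statement of Theorem~\ref{thm:harmonic-factorization-theorem}/Remark~\ref{rem:Tensorial-Hkn}, and taking real linear combinations of harmonic tensors of order $n$ stays inside $\HH^{n}(\RR^3)$. One may optionally remark that, just as with Sylvester's theorem and Proposition~\ref{prop:fiber}, the pair $(\bH_1, \bH_2)$ is highly non-unique; this non-uniqueness is precisely what motivates the reconstruction program of the later sections, but it plays no role in the proof itself.
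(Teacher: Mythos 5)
Your proposal is correct and is essentially the paper's own argument: both invoke Theorem~\ref{thm:harmonic-factorization-theorem} with $k=2$ to write $\bH$ as a harmonic product of two tensors of order $n$, and then apply the polarization identity $ab=\frac14\left((a+b)^{2}-(a-b)^{2}\right)$. The only cosmetic difference is that the paper performs the polarization on the ordinary product of the corresponding harmonic polynomials and then applies the (linear) harmonic projection, whereas you perform it directly in the commutative algebra $(\Hn{}(\RR^{3}),\ast)$ using bilinearity of $\ast$; these are equivalent.
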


\begin{proof}
  Let $\rh \in \Hn{2n}(\RR^{3})$ corresponding to the harmonic tensor $\mathbf{H}$. Using theorem~\ref{thm:harmonic-factorization-theorem}, we can find $\rp_{1}, \rp_{2} \in \Hn{n}(\RR^{3})$ such that $\rh = \left(\rp_{1} \rp_{2}\right)_{0}$. But
  \begin{equation*}
    \rp_{1}\rp_{2} = \left(\frac{\rp_{1}+\rp_{2}}{2}\right)^{2}-\left(\frac{\rp_{1}-\rp_{2}}{2}\right)^{2}.
  \end{equation*}
  Hence, if we set $\rh_{1} = (\rp_{1}+\rp_{2})/2$ and $\rh_{2} = (\rp_{1}-\rp_{2})/2$, we get
  \begin{equation*}
    \rh = \rh_{1} \ast \rh_{1} - \rh_{2} \ast \rh_{2},
  \end{equation*}
  which tensorial form is precisely~\eqref{eq:square-decomposition}.
\end{proof}

\begin{rem}
  This result has already been obtained in~\cite{DD2016} for monoclinic fourth-order harmonic tensors, using direct elimination techniques.
\end{rem}

\begin{rem}
  In 2D, the factorization of the harmonic part $\bH^{2D}\in \HH^{4}(\RR^{2})$ of the elasticity tensor has already been computed in \cite{DD2015}, using previous results obtained in~\cite{Ver1979,Van2005}. In that case, we get
  \begin{equation}\label{eq:H2Dh0}
    \bH^{2D}=\bh\otimes \bh -\frac{1}{2}({\bh}\cdot{\bh})\; \tq J^{2D}
  \end{equation}
  where $\bh$ is a second order harmonic tensor ({\it i.e.} a symmetric deviatoric), and where $ \tq J^{2D}=\tq I -\frac{1}{2} \, \Idd \otimes \Idd$ and ${\bh}\cdot{\bh}=\| \bh\|^{2}$. As in 3D, we can define an harmonic product between 2D harmonic tensors $\bh_k$
  \begin{equation*}
    \bh_{1}\ast\bh_{2}:=\bh_{1}\odot \bh_{2}-\frac{1}{4}\tr (\bh_{1}\bh_{2})\Idd \odot \Idd
  \end{equation*}
  and equation~\eqref{eq:H2Dh0} can be rewritten as
  \begin{equation}\label{eq:square2D}
    \bH^{2D} = \bh\ast\bh.
  \end{equation}
  Hence, any 2D fourth-order harmonic tensor can be expressed as the harmonic square $\bh\ast\bh$, \textit{i.e.}  the projection of diadic symmetric square $\bh \otimes \bh$ onto the vector space of fourth-order harmonic tensors $\HH^{4}(\RR^{2})$. In 3D, this result is generally false. Indeed, if a fourth-order harmonic tensor $\bH$ is an harmonic square, \textit{i.e.} $\bH = \bh \ast \bh$,
  then $\bH$ is either orthotropic or transversely isotropic (see section~\ref{sec:reconstruction}).
\end{rem}

Although these various decompositions are interesting by themselves, there is however something frustrating. These decompositions are not unique in general and the solution is not enough constructive (they require root's extraction of possible high degree polynomials). In the next sections, we provide explicit geometric reconstructions.

\section{Equivariant reconstruction of a fourth-order harmonic tensor}
\label{sec:reconstruction}

In this section, we consider the general problem of reconstructing a fourth-order harmonic tensor $\bH$ using lower order \emph{covariants} (i.e. lower order tensors which depend on $\bH$ in an equivariant manner). This reconstruction is useful when applied to the fourth-order harmonic part $\bH$ of the elasticity tensor (\ref{eq:HarmE}), since its other components $\alpha, \beta, \ba', \bb'$ are already decomposed by means of second order covariants tensors and invariants.

From a mechanical point of view the definition of covariants of a given effective --for example damaged-- elasticity tensor is strongly related to the tensorial nature of the thermodynamics variables allowing to properly model a specific state coupling with elasticity~\cite{LO1980,Ona1984,LC1985,BHL1995,LD2005}. In the case of effective elasticity of cracked/damaged media for instance, either second order \cite{Kac1972,CS1982,Lad1983} or fourth order \cite{Cha1979,LO1980} damage variables are considered, depending on the authors choice \cite{LC1985,Kra1996,LD2005}.
Considering the reconstruction by means of lower order covariants of the harmonic part of an effective elasticity tensor in this case would answer the question: do we have to use a fourth order damage variable -- a fourth order covariant in next derivations -- in order to reconstruct a given harmonic -- effective elasticity -- tensor? In 2D the answer is negative, as shown in \cite{DD2015,DD2016}: only one second order tensor $\bh$ is needed (see Eq.~\eqref{eq:square2D}).

\subsection{The reconstruction problem}

To illustrate the problem, let us first come back to theorem~\ref{thm:harmonic-factorization-theorem}. Applied to a fourth-order harmonic tensor $\bH$, this theorem states that $\bH$ can be factorized as the harmonic product of two second order harmonic tensors
\begin{equation*}
  \bH = \bh_{1} \ast \bh_{2}.
\end{equation*}
In other words, the equivariant mapping
\begin{equation*}
  \mathcal{F} : \HH^{2}(\RR^{3}) \times \HH^{2}(\RR^{3}) \to \HH^{4}(\RR^{3}), \qquad (\bh_{1}, \bh_{2}) \mapsto  \bh_{1} \ast \bh_{2}
\end{equation*}
is surjective. What would be more interesting is to have an explicit mapping (i.e. an explicit construction of the solution)
\begin{equation*}
  \mathcal{S} : \HH^{4}(\RR^{3}) \to \HH^{2}(\RR^{3}) \times \HH^{2}(\RR^{3}), \qquad \bH \mapsto  (\bh_{1}(\bH), \bh_{2}(\bH))
\end{equation*}
such that:
\begin{equation*}
  (\mathcal{F} \circ \mathcal{S})(\bH) = \bh_{1}(\bH) \ast \bh_{2}(\bH) = \bH, \qquad \forall \bH \in \HH^{4}(\RR^{3}),
\end{equation*}
and with nice geometric properties. A mapping like $\mathcal{S}$ is called a \emph{section} of $\mathcal{F}$. A nice geometric property would be for instance for $\mathcal{S}$ to be equivariant, in which case each second order tensors $\bh_{i}(\bH)$ is a covariant tensor. The bad news is that there does not exist any \emph{equivariant section} $\mathcal{S}$ for $\mathcal{F}$ defined on the whole space $\HH^{4}(\RR^{3})$: for instance, for a tensor $\bH$ with cubic symmetry, all second order harmonic covariants vanish, which forbids the existence of such a general equivariant section. We will provide now the following definitions.

\begin{defn}[Decomposition]
  Let $\bT$ be some $n$th order tensor. A \emph{decomposition} of $\bT$ into other \emph{tensors} (possibly scalars) $\bT_k$ ($k=1,\dotsc,N$) is a mapping
  \begin{equation*}
    \bT = \mathcal{F} (\bT_{1},\dotsc,\bT_{N}).
  \end{equation*}
  The decomposition is \emph{equivariant} if $\mathcal{F}$ is an equivariant mapping.
\end{defn}

\begin{defn}[Reconstruction]
  Given a decomposition
  \begin{equation*}
    \bT = \mathcal{F} (\bT_{1},\dotsc,\bT_{N})
  \end{equation*}
  of a tensor $\bT$ into other \emph{tensors} (possibly scalars), a \emph{reconstruction} of $\bT$ is a section $\mathcal{S}$ of $\mathcal{F}$. The reconstruction is \emph{equivariant} if both $\mathcal{F}$ and $\mathcal{S}$ are equivariant mappings.
\end{defn}

\begin{rem}
  In other words, a reconstruction is given by a mapping
  \begin{equation*}
    \mathcal{S}: \bT \mapsto (\kappa_{1}(\bT),\dotsc,\kappa_{N}(\bT))
  \end{equation*}
  such that
  \begin{equation*}
    \bT = \mathcal{F} (\kappa_{1}(\bT),\dotsc,\kappa_{N}(\bT)).
  \end{equation*}
  If the reconstruction is equivariant the $\bT_k=\kappa_{k}(\bT)$ are \emph{covariant tensors} (or invariants if there are scalars), which means that
  \begin{equation*}
    g\star \bT_k=\kappa_k(g\star \bT),\quad \forall g\in \SO(3).
  \end{equation*}
\end{rem}

\subsection{Obstruction to an equivariant reconstruction}\label{subsec:Obstruction_Equiv_reconstruction}

The existence of an equivariant reconstruction may not be possible for certain symmetry classes. More precisely, we have the following result.

\begin{lem}\label{lem:obstruction-equivariant-reconstruction}
  Consider an equivariant decomposition $\bT = \mathcal{F}(\bT_{1},\dotsc,\bT_{N})$ and suppose that there exists an equivariant reconstruction $\mathcal{S}$, so that $\bT_{k}$ are 
    {covariant} tensors. Then
  \begin{equation}\label{eq:Sym_Group}
    G_{\bT} = \bigcap_{k} G_{\bT_{k}},
  \end{equation}
  where
  \begin{equation*}
    G_{\bT} := \set{g\in G;\; g\star\bT = \bT}
  \end{equation*}
  is the \emph{symmetry group} of the tensor $\bT$.
\end{lem}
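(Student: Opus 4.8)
The statement relates the symmetry group of $\bT$ to the intersection of the symmetry groups of its covariant pieces. The plan is to prove the two inclusions in \eqref{eq:Sym_Group} separately, using only the equivariance of $\mathcal{F}$ for one direction and the equivariance of the section $\mathcal{S}$ (hence the covariance of the $\bT_k$) for the other.

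First I would show $\bigcap_k G_{\bT_k} \subseteq G_{\bT}$. This is the ``easy'' inclusion and needs only the equivariant decomposition $\bT = \mathcal{F}(\bT_1,\dotsc,\bT_N)$. Indeed, if $g \in \bigcap_k G_{\bT_k}$, then $g \star \bT_k = \bT_k$ for every $k$, and equivariance of $\mathcal{F}$ gives
\begin{equation*}
  g \star \bT = g \star \mathcal{F}(\bT_1,\dotsc,\bT_N) = \mathcal{F}(g\star\bT_1,\dotsc,g\star\bT_N) = \mathcal{F}(\bT_1,\dotsc,\bT_N) = \bT,
\end{equation*}
so $g \in G_{\bT}$.

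Next I would show $G_{\bT} \subseteq \bigcap_k G_{\bT_k}$, which is where the equivariant \emph{reconstruction} $\mathcal{S}$ is essential. Write $\bT_k = \kappa_k(\bT)$ with $\kappa_k$ equivariant (these are the components of $\mathcal{S}$). If $g \in G_{\bT}$, then $g \star \bT = \bT$, and equivariance of each $\kappa_k$ yields
\begin{equation*}
  g \star \bT_k = g \star \kappa_k(\bT) = \kappa_k(g \star \bT) = \kappa_k(\bT) = \bT_k,
\end{equation*}
hence $g \in G_{\bT_k}$ for every $k$, i.e. $g \in \bigcap_k G_{\bT_k}$. Combining the two inclusions gives the claimed equality.

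The argument is essentially a two-line diagram chase once the right objects are in place, so there is no serious analytic obstacle; the only point requiring care is conceptual rather than computational, namely keeping straight which hypothesis (equivariance of $\mathcal{F}$ versus equivariance of $\mathcal{S}$) powers which inclusion — in particular noting that the inclusion $G_{\bT} \subseteq \bigcap_k G_{\bT_k}$ genuinely fails without an equivariant section, which is precisely the content that makes this lemma an \emph{obstruction} statement (for a cubic $\bH$ all second-order harmonic covariants vanish, so $\bigcap_k G_{\bT_k}$ would be all of $\SO(3)$ while $G_{\bH}$ is the octahedral group). I would close with a one-sentence remark recording this consequence, since it motivates the restriction to specific symmetry classes in the sections that follow.
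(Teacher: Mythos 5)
Your proof is correct and follows essentially the same two-inclusion argument as the paper: the inclusion $\bigcap_{k} G_{\bT_{k}}\subset G_{\bT}$ from equivariance of $\mathcal{F}$ (which the paper states without detail and you spell out), and the reverse inclusion from the covariance $\bT_{k}=\kappa_{k}(\bT)$ exactly as in the paper's proof. Nothing to correct.
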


\begin{proof}
  Consider an equivariant decomposition $\bT = \mathcal{F}(\bT_{1},\dotsc,\bT_{N})$. Then, we have
  \begin{equation*}
    \bigcap_{k} G_{\bT_{k}}\subset G_{\bT}.
  \end{equation*}
  If moreover $\bT_{k}=\kappa_{k}(\bT)$ are covariant tensors, we get thus
  \begin{equation*}
    g\star\bT_{k} = \kappa_{k}(g \star \bT) = \kappa_{k}(\bT) = \bT_{k},
  \end{equation*}
  for all $g\in G_{\bT}$, and hence $G_{\bT}\subset G_{\bT_{k}}$, which achieves the proof.
\end{proof}

According to lemma~\ref{lem:obstruction-equivariant-reconstruction}, the existence of an equivariant reconstruction associated to an equivariant decomposition
\begin{equation*}
  \bT = \mathcal{F} (\kappa_{1}(\bT),\dotsc,\kappa_{N}(\bT))
\end{equation*}
requires some conditions on the symmetry groups of the involved covariants $\bT_{k}=\kappa_{k}(\bT)$. For a decomposition involving
  {symmetric} second-order covariant tensors, we have the following result (for details on closed subgroups of $\SO(3)$, see Appendix~\ref{sec:symmetry-classes}).

\begin{cor}\label{cor:second-order-obstruction}
  If $\bT = \mathcal{F}(\bT_{1},\dotsc,\bT_{N})$ is an equivariant decomposition of $\bT$ into \emph{second-order symmetric covariant tensors} $\bT_{k}$, then
  \begin{equation*}
    [G_{\bT}]\in \set{[\triv],[\ZZ_{2}],[\DD_{2}],[\OO(2)],[\SO(3)]},
  \end{equation*}
  where $[H]$ means the conjugacy class of a subgroup $H \subset \SO(3)$.
\end{cor}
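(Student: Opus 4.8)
The plan is to combine Lemma~\ref{lem:obstruction-equivariant-reconstruction} with a classification of the possible symmetry groups of a single symmetric second-order tensor. By Lemma~\ref{lem:obstruction-equivariant-reconstruction}, if an equivariant reconstruction exists then $G_{\bT} = \bigcap_{k} G_{\bT_{k}}$, where each $\bT_{k}$ is a symmetric second-order covariant. So the first step is to determine which closed subgroups of $\SO(3)$ can arise as the symmetry group $G_{\bs}$ of a symmetric second-order tensor $\bs$.

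The key observation is that a symmetric second-order tensor $\bs$ on $\RR^3$ is diagonalizable in an orthonormal basis, with real eigenvalues, and its symmetry group (rotations $g$ fixing $\bs$) depends only on the multiplicity pattern of the eigenvalues. There are exactly three cases: three distinct eigenvalues, in which case $G_{\bs}$ consists of the rotations preserving the three eigendirections, i.e. the group $\DD_2$ (the four rotations by $\pi$ about the three axes together with the identity); two equal eigenvalues (a transversely isotropic tensor), in which case $G_{\bs}$ is the group of rotations preserving the distinguished axis together with the $\pi$-rotations reversing it, i.e. $\OO(2)$; and all three eigenvalues equal, i.e. $\bs$ a multiple of $\Idd$, in which case $G_{\bs} = \SO(3)$. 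Hence $[G_{\bs}] \in \set{[\DD_{2}],[\OO(2)],[\SO(3)]}$ for any symmetric second-order tensor. I would cite Appendix~\ref{sec:symmetry-classes} for the identification of these as the closed subgroups in question.

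The second step is to intersect. If $\bT = \mathcal{F}(\bT_1,\dotsc,\bT_N)$ is an equivariant decomposition admitting an equivariant reconstruction, then $G_{\bT}$ is an intersection of subgroups each of which is (conjugate to) one of $\DD_2$, $\OO(2)$, $\SO(3)$. One checks that any such intersection is, up to conjugacy, among $\set{[\triv],[\ZZ_2],[\DD_2],[\OO(2)],[\SO(3)]}$: intersecting with $\SO(3)$ changes nothing; an intersection of (conjugates of) $\OO(2)$'s is $\OO(2)$ if the axes coincide, $\DD_2$ if there are two distinct axes lying so as to share a common $\pi$-rotation, $\ZZ_2$ if they share only a single $\pi$-rotation, and $\triv$ otherwise; intersecting with a conjugate of $\DD_2$ can only make the group smaller, landing in $\set{[\triv],[\ZZ_2],[\DD_2]}$. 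Since $G_{\bT}$ must equal this intersection, $[G_{\bT}]$ lies in the stated list.

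The main obstacle is really just the bookkeeping of step~2: one must verify that no other conjugacy class can be produced as an intersection of conjugates of $\DD_2$ and $\OO(2)$ — in particular that the cyclic groups $\ZZ_n$ for $n \ge 3$ and the polyhedral groups $\tetra, \octa, \ico$ never appear. This follows from the elementary remark that every element of $\DD_2$ and every element of $\OO(2)$ other than those about the distinguished axis is a rotation by $\pi$, so any subgroup obtained by intersection can contain only $\pi$-rotations and possibly a full one-parameter subgroup; a subgroup of $\SO(3)$ with this property is forced to be one of $\triv$, $\ZZ_2$, $\DD_2$, $\OO(2)$, $\SO(3)$ (the last only from the degenerate all-spherical case). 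I would present this as a short lemma or simply invoke the subgroup classification of Appendix~\ref{sec:symmetry-classes}, and then the corollary is immediate from Lemma~\ref{lem:obstruction-equivariant-reconstruction}.
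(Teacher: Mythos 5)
Your proof is correct and follows essentially the same route as the paper: invoke Lemma~\ref{lem:obstruction-equivariant-reconstruction} to get $G_{\bT}=\bigcap_{k}G_{\bT_{k}}$, note that the symmetry group of a symmetric second-order tensor is conjugate to $\DD_{2}$, $\OO(2)$ or $\SO(3)$ according to its eigenvalue multiplicities, and then bound the possible conjugacy classes of the intersection. The only difference is that the paper delegates the intersection bookkeeping to the clips-operator table of~\cite{Oli2014}, whereas you verify it directly by the elementary observation that conjugates of $\DD_{2}$ and $\OO(2)$ can only contribute $\pi$-rotations off a common axis; your hand computation is sound (its one loose spot, the claim that two distinct conjugates of $\OO(2)$ can intersect trivially, only over-counts and so does not affect the conclusion).
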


\begin{proof}
  The proof is easily achieved using the \emph{clips operator} introduced in~\cite{Oli2014}. This binary operator, noted $\circledcirc$ is defined on conjugacy classes (or finite families of conjugacy classes) of closed subgroups of $\SO(3)$:
  \begin{equation*}
    [H_{1}]\circledcirc [H_{2}]=\set{ [H_{1}\cap (gH_{2}g^{-1})],\quad g\in \SO(3)}.
  \end{equation*}
  It is an associative and commutative operation and
  \begin{equation*}
    [H] \circledcirc [\triv] = \set{[\triv]}, \qquad [H] \circledcirc [\SO(3)] = \set{[H]},
  \end{equation*}
  for all closed subgroup $H$. Moreover, we have in particular (see~\cite{Oli2014}):

  \begin{center}
    \begin{tabular}{|c|c|c|c|}
      \hline
      $\circledcirc$ & $[\ZZ_{2}]$               & $[\DD_{2}]$                         & $[\OO(2)]$                           \\
      \hline
      $[\ZZ_{2}]$    & $\set{[\triv],[\ZZ_{2}]}$ & $\set{[\triv],[\ZZ_{2}]}$           & $\set{[\triv],[\ZZ_{2}]}$            \\
      \hline
      $[\DD_{2}]$    &                           & $\set{[\triv],[\ZZ_{2}],[\DD_{2}]}$ & $\set{[\triv],[\ZZ_{2}],[\DD_{2}]}$  \\
      \hline
      $[\OO(2)]$     &                           &                                     & $\set{[\ZZ_{2}],[\DD_{2}],[\OO(2)]}$ \\
      \hline
    \end{tabular}
  \end{center}

  Thus, if each $\bT_{k}$ is a second-order symmetric covariant (or an invariant), then $[G_{\bT_{k}}]$ is either $[\DD_{2}]$, $[\OO(2)]$ or $[\SO(3)]$ (see ~\Cref{sec:symmetry-classes}) and we deduce that
  \begin{equation*}
    [G_{\bT}] \in [H_{1}] \circledcirc [H_{2}] \circledcirc \dotsb \circledcirc [H_{N}],
  \end{equation*}
  where each $[H_{i}]$ belongs to $\set{[\DD_{2}],[\OO(2)],[\SO(3)]}$, which achieves the proof.
\end{proof}

\begin{rem}
  Corollary ~\ref{cor:second-order-obstruction} implies, in particular, that there is no equivariant reconstruction by means of second order symmetric covariant tensors of a fourth order harmonic tensor $\bH$ (and thus of the elasticity tensor $\tq E$) belonging to the cubic, the trigonal or the tetragonal symmetry class. For a cubic tensor, the situation is even worse since all its second order symmetric covariants are spheric, whereas its first order covariants vanish.
\end{rem}

\begin{rem}
  The geometric framework introduced here is connected to the so-called \emph{isotropic extension} of anisotropic constitutive functions via structural tensors developed by Boehler~\cite{Boe1979} and Liu~\cite{Liu1982} independently (see also~\cite{Man2016}). In the case of a linear constitutive function, an isotropic extension is just an equivariant decomposition \emph{limited to a given symmetry class} and for which the arguments $(\bT_{1},\dotsc,\bT_{N})$ (the structural tensors) of the decomposition satisfy~\eqref{eq:Sym_Group}. In that case, our approach is however more constructive since we are able to give an \emph{explicit equivariant formula} in which the ``structural tensors'' are covariant tensors of the constitutive tensor. Moreover, condition~\eqref{eq:Sym_Group} is a corollary of the theory and does not need to be imposed \emph{a priori} as an hypothesis.
\end{rem}

\section{Reconstructions by means of second order covariants}
\label{sec:second-order-covariants}

In this section, we provide explicit equivariant reconstructions for transverse-isotropic and orthotropic fourth order harmonic tensors. These reconstructions use the following second-order (polynomial) covariants introduced in~\cite{BKO1994}:
\begin{equation}\label{eq:Boehler-covariants}
  \begin{array} {lll}
    \td{d}_{2} := \tr_{13} \tq{H}^{2},                    & \td{d}_{3} := \tr_{13} \tq{H}^{3},                    & \td{d}_{4} := \td{d}_{2}^{2},
    \\
    \td{d}_{5} := \td{d}_{2} (\tq{H} \td{d}_{2}),         & \td{d}_{6} := \td{d}_{2}^{3},                         & \td{d}_{7} := \td{d}_{2}^{2} (\tq{H} \td{d}_{2})
    \\
    \td{d}_{8} := \td{d}_{2}^{2} (\tq{H}^{2} \td{d}_{2}), & \td{d}_{9} := \td{d}_{2}^{2} (\tq{H} \td{d}_{2}^{2}), & \td{d}_{10} := \td{d}_{2}^{2} (\tq{H}^{2} \td{d}_{2}^{2}).
  \end{array}
\end{equation}
where the following notations have been used.
\begin{itemize}
  \item For two fourth-order tensors $\tq{A}$ and $\tq{B}$, $\tq{A}\tq{B}$ is the fourth-order tensor
        \begin{equation*}
          (\tq{A}\tq{B})_{ijkl}:=A_{ijpq}B_{pqkl}.
        \end{equation*}
  \item For a fourth-order tensor $\tq{A}$ and a second-order tensor $\tq{b}$, $\tq{A}\td{b}$ is the second-order tensor
        \begin{equation*}
          (\tq{A}\td{b})_{ij}:=A_{ijkl}b_{kl}.
        \end{equation*}
\end{itemize}
We also make use of the following invariants:
\begin{equation}\label{eq:Boehler-invariants}
  J_{k} := \tr \td{d}_{k} , \qquad k = 2, \dotsc ,10,
\end{equation}
which constitute an \emph{integrity basis} for $\HH^{4}(\RR^{3})$ (see~\cite{BKO1994}).

\begin{rem}
  The second--order covariants $\td{d}_{k}$ ($k=2,3,4,6$) are always \emph{symmetric} covariants, while in general $\td{d}_{k}$ ($k=5,7,8,9,10$) are neither symmetric nor antisymmetric.
\end{rem}

Finally, using the Kelvin representation, a fourth order harmonic tensor $\bH=(H_{ijkl})$ is represented by the symmetric matrix
\begin{equation*}
  \underline{\bH} =
  \begin{pmatrix}
    H_{1111}         & H_{1122}         & H_{1133}         & \sqrt{2}H_{1123} & \sqrt{2}H_{1113} & \sqrt{2}H_{1112} \\
    H_{1122}         & H_{2222}         & H_{2233}         & \sqrt{2}H_{2223} & \sqrt{2}H_{1223} & \sqrt{2}H_{1222} \\
    H_{1133}         & H_{2233}         & H_{3333}         & \sqrt{2}H_{2333} & \sqrt{2}H_{1333} & \sqrt{2}H_{1233} \\
    \sqrt{2}H_{1123} & \sqrt{2}H_{2223} & \sqrt{2}H_{2333} & 2H_{2233}        & 2H_{1233}        & 2H_{1223}        \\
    \sqrt{2}H_{1113} & \sqrt{2}H_{1223} & \sqrt{2}H_{1333} & 2H_{1233}        & 2H_{1133}        & 2H_{1123}        \\
    \sqrt{2}H_{1112} & \sqrt{2}E_{1222} & \sqrt{2}H_{1233} & 2H_{1223}        & 2H_{1123}        & 2H_{1122}
  \end{pmatrix}
\end{equation*}
with only 9 independent components due to the traceless property, namely:
\begin{align*}
  H_{1111} & = - H_{1122} - H_{1133}, & H_{2222} & = -H_{1122}-H_{2233},
  \\
  H_{3333} & = - H_{1133} - H_{2233}, & H_{2333} & = - H_{1123} - H_{2223},
  \\
  H_{1113} & = - H_{1223} - H_{1333}, & H_{1222} & = - H_{1112} - H_{1233}.
\end{align*}

\subsection{The transversely isotropic class}
\label{subsec:transversely-isotropic}

An harmonic tensor $\bH\in \HH^{4}(\RR^{3})$ is transverse isotropic if and only if there exists $g\in \SO(3)$ such that $\bH = g\star \bH_{0}$ where $\bH_{0}$ has the following normal matrix form
\begin{equation*}
  \underline{\bH_{0}} = \left(
  \begin{array}{cccccc}
      3 \delta  & \delta    & -4 \delta & 0         & 0         & 0        \\
      \delta    & 3 \delta  & -4 \delta & 0         & 0         & 0        \\
      -4 \delta & -4 \delta & 8 \delta  & 0         & 0         & 0        \\
      0         & 0         & 0         & -8 \delta & 0         & 0        \\
      0         & 0         & 0         & 0         & -8 \delta & 0        \\
      0         & 0         & 0         & 0         & 0         & 2 \delta
    \end{array}
  \right),
\end{equation*}
where $\delta \ne 0$. It was moreover established in~\cite[Section 5.2]{AKP2014} that
\begin{equation*}
  J_{2}\ne 0, \qquad \delta=\frac{7J_{3}}{18J_{2}}.
\end{equation*}
Thus, by a direct computation of the second order covariant $\bd_{2}$ (see Eq.~\eqref{eq:Boehler-covariants}) and its deviatoric part $\bd_{2}^{\prime}$, we get the following result:

\begin{thm}\label{thm:O2}
  For any transversely isotropic fourth-order harmonic tensor $\bH$, we have
  \begin{equation*}
    \bH = \frac{63}{25}\frac{1}{J_{3}(\bH)} \, \bd_{2}^{\prime}(\bH) \ast \bd_{2}^{\prime}(\bH),
  \end{equation*}
\end{thm}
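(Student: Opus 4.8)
The plan is to verify the claimed identity by computing both sides in the normal form and then invoking equivariance to conclude it holds for all transversely isotropic $\bH$. First I would work entirely in the basis where $\bH = \bH_0$ with matrix $\underline{\bH_0}$ as given, so that $\bH_0$ depends on the single parameter $\delta \neq 0$. The strategy rests on three computations: (i) the second-order covariant $\bd_2 = \tr_{13}\bH_0^2$ and its deviatoric part $\bd_2'$; (ii) the invariants $J_2 = \tr\bd_2$ and $J_3 = \tr(\tr_{13}\bH_0^3)$ in terms of $\delta$; and (iii) the harmonic square $\bd_2' \ast \bd_2'$ using the explicit formula from Example~\ref{ex:a-ast-b}. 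Since $\bH_0$ is transversely isotropic about the $\be_3$-axis, $\bd_2'$ must be a deviatoric tensor with the same axis of symmetry, hence of the form $\bd_2' = \lambda\,(\be_3\otimes\be_3 - \tfrac13\Idd)$ for some scalar $\lambda = \lambda(\delta)$; this reduces (i) to finding one number.

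Next I would carry out the arithmetic. For (iii), plugging $\bh_1 = \bh_2 = \bd_2'$ into the Example~\ref{ex:a-ast-b} formula gives $\bd_2'\ast\bd_2' = \bd_2'\odot\bd_2' - \tfrac{4}{7}\Idd\odot(\bd_2')^2 + \tfrac{2}{35}\tr((\bd_2')^2)\,\Idd\odot\Idd$; since $\bd_2'$ is a multiple of the uniaxial deviator, this is a multiple of the transversely isotropic harmonic tensor $\bH_0/\delta$, so the whole computation collapses to matching one scalar coefficient. Thus the identity $\bH_0 = \tfrac{63}{25}\,J_3^{-1}\,\bd_2'\ast\bd_2'$ becomes a single scalar equation relating $\delta$, $\lambda(\delta)$, and $J_3(\delta)$, which I expect to reduce (using $\delta = 7J_3/(18J_2)$ and the relation between $J_2$, $J_3$ and $\delta$ recorded from~\cite{AKP2014}) to an identity in $\delta$. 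I would also double-check that $J_3 \neq 0$ whenever $\delta \neq 0$, so the formula is well-defined on the entire transversely isotropic class.

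Finally, equivariance closes the argument: $\bd_2'$ and $J_3$ are covariant/invariant under $\SO(3)$, and the harmonic product $\ast$ is equivariant (Remark~\ref{rem:H1odotH2}), so if $\bH = g\star\bH_0$ then $\bd_2'(\bH) = g\star\bd_2'(\bH_0)$, $J_3(\bH) = J_3(\bH_0)$, and applying $g\star$ to the verified identity for $\bH_0$ yields the identity for $\bH$.

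The main obstacle is purely computational: getting the harmonic projection in Example~\ref{ex:a-ast-b} applied correctly and tracking the rational constants so that the coefficient lands exactly on $63/25$. The conceptual content — that both sides are forced to be proportional uniaxial harmonic tensors, so only one constant needs checking — is straightforward; the risk is an arithmetic slip in the symmetric products $\bd_2'\odot\bd_2'$ and $\Idd\odot(\bd_2')^2$ or in converting between the Kelvin matrix representation of $\underline{\bH_0}$ and the tensor contractions defining $\bd_2$, $J_2$, $J_3$.
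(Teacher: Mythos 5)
Your proposal is correct and follows essentially the same route as the paper: a direct computation of $\bd_{2}'$, $J_{2}$, $J_{3}$ and the harmonic square on the normal form $\bH_{0}(\delta)$ (using the relation $\delta=7J_{3}/(18J_{2})$, which also gives $J_{3}\neq 0$), followed by the equivariance argument to transport the identity to the whole class. The observation that both sides are uniaxial transversely isotropic harmonic tensors, so only one scalar coefficient needs matching, is exactly what makes the paper's one-line verification work.
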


\begin{rem}\label{rem:O2-harmonic-square}
  If $J_{3}(\bH) > 0$, then $\bH$ is a perfect harmonic square. The converse is also true and can be easily established, using binary forms, where the harmonic product corresponds to the ordinary product of polynomials.
\end{rem}

\subsection{The orthotropic class}
\label{subsec:orthotropic}

An harmonic tensor $\bH\in \HH^{4}(\RR^{3})$ is orthotropic if and only if there exists $g\in \SO(3)$ such that $\bH = g\star \bH_{0}$ where $\bH_{0}$ has the following normal matrix form
\begin{equation}\label{eq:normal-form-D2}
  \underline{\bH_{0}}= \left(
  \begin{array}{cccccc}
      \lambda_{2} + \lambda_{3} & -\lambda_{3}              & -\lambda_{2}              & 0               & 0               & 0               \\
      -\lambda_{3}              & \lambda_{3} + \lambda_{1} & -\lambda_{1}              & 0               & 0               & 0               \\
      -\lambda_{2}              & -\lambda_{1}              & \lambda_{2} + \lambda_{1} & 0               & 0               & 0               \\
      0                         & 0                         & 0                         & -2\,\lambda_{1} & 0               & 0               \\
      0                         & 0                         & 0                         & 0               & -2\,\lambda_{2} & 0               \\
      0                         & 0                         & 0                         & 0               & 0               & -2\,\lambda_{3}
    \end{array}
  \right)
\end{equation}
where $\lambda_{1},\lambda_{2},\lambda_{3}$ are three \emph{distinct} real numbers. Note that this normal form is however not unique: any permutation of the $\lambda_{k}$ provides an alternative normal form, but this is the only ambiguity. It is therefore useful to introduce the three elementary symmetric functions
\begin{equation}\label{eq:sym-func}
  \begin{split}
    \sigma_{1} & :=\lambda_{1}+\lambda_{2}+\lambda_{3},\\
    \sigma_{2} & :=\lambda_{1}\lambda_{2}+\lambda_{1}\lambda_{3}+\lambda_{2}\lambda_{3}, \\
    \sigma_{3} & :=\lambda_{1}\lambda_{2}\lambda_{3}
  \end{split}
\end{equation}
which are independent of a particular normal form, as being invariant under any permutation of the $\lambda_i$. Moreover, it was shown in~\cite[Section 5.5]{AKP2014} that the \emph{discriminant}
\begin{equation*}
  \Delta_{3}:=(\lambda_{2} - \lambda_{1})^{2}(\lambda_{2} - \lambda_{3})^{2}(\lambda_{3} - \lambda_{1})^{2},
\end{equation*}
which is \emph{strictly positive}, is a polynomial invariant, and that the $\sigma_{k}$ are themselves \emph{rational invariants}. More precisely:
\begin{equation*}
  \Delta_{3} = \frac{K_{6}}{432}, \quad \text{where} \quad K_{6} := 6J_{6} - 9J_{2}J_{4} - 20J_{3}^{2} + 3J_{2}^{3},
\end{equation*}
and
\begin{equation*}
  \begin{split}
    \sigma_{1} &= \displaystyle \frac{9 (3J_{7} - 3J_{2}J_{5} + 3J_{3}J_{4} - J_{2}^{2}J_{3})}{2K_{6}},
    \\
    \sigma_{2} &= \displaystyle\frac{4}{7}\sigma_{1}^{2}-\frac{1}{14}J_{2},
    \\
    \sigma_{3}& = \displaystyle -\frac{1}{24}J_{3}+\frac{1}{7}\sigma_{1}^{3}-\frac{1}{56}\sigma_{1}J_{2}.
  \end{split}
\end{equation*}

Consider now the matrix
\begin{equation*}
  \blambda_{0} :=
  \begin{pmatrix}
    \lambda_{1} & 0           & 0           \\
    0           & \lambda_{2} & 0           \\
    0           & 0           & \lambda_{3}
  \end{pmatrix}.
\end{equation*}

Since both $\bH_{0}$ and $\blambda_{0}$ have the same symmetry group, namely $\DD_{2}$ (as defined in \Cref{sec:symmetry-classes}), the following definition
\begin{equation*}
  \blambda(\bH) := g \star \blambda_{0}, \quad \text{if} \quad \bH = g \star \bH_{0}
\end{equation*}
does not depend on the rotation $g$ and thus defines a covariant mapping from the orthotropic class in $\HH^{4}(\RR^{3})$ to the space of symmetric second order tensors. We have moreover the following result, which can be checked by a direct evaluation on $\bH_{0}$ and $\blambda_{0}'$.

\begin{lem}\label{lem:rational-expression-lambda}
  Let $\bH$ be an orthotropic fourth-order harmonic tensor. Then, the deviatoric part $\blambda'(\bH)$ of the second order covariant $\blambda(\bH)$ can be written as
  \begin{equation}\label{eq:widetildelambda}
    \blambda'(\bH) =  \frac{1}{8\Delta_{3}} \left(\alpha_{2}\td d_{2}'(\bH) +\alpha_{3}\td d_{3}'(\bH) -54 \sigma_{3}\,\td d_{4}'(\bH) +11 \sigma_{2}\,\td d_{5}'(\bH)\right)
  \end{equation}
  where
  \begin{align*}
    \alpha_{2} & :=2 (112 \sigma_{1}^{2} \sigma_{3}+21 \sigma_{1} \sigma_{2}^{2}-270 \sigma_{2} \sigma_{3}), \\
    \alpha_{3} & := 8(14 \sigma_{1} \sigma_{3}-11 \sigma_{1}^{2} \sigma_{2}+15 \sigma_{2}^{2}).
  \end{align*}
\end{lem}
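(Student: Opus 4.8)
The plan is to verify the claimed rational expression \eqref{eq:widetildelambda} by a direct computation in the normal form, exploiting the fact that both sides of the identity are \emph{covariant mappings} on the orthotropic class which are entirely determined by their values on the slice $\{\bH_0\}$ of normal forms. Concretely, the left-hand side $\blambda'(\bH)$ and the right-hand side
\begin{equation*}
  \mathcal{R}(\bH) := \frac{1}{8\Delta_3}\left(\alpha_2\,\td d_2'(\bH) + \alpha_3\,\td d_3'(\bH) - 54\sigma_3\,\td d_4'(\bH) + 11\sigma_2\,\td d_5'(\bH)\right)
\end{equation*}
are both $\SO(3)$-equivariant in $\bH$ (the $\alpha_i,\sigma_i,\Delta_3$ being invariants, and $\td d_2,\ldots,\td d_5$ being the covariants of \eqref{eq:Boehler-covariants}), and both have symmetry group containing $\DD_2$ when evaluated at $\bH_0$. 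Since any two $\SO(3)$-equivariant maps that agree on $\bH_0$ agree on the whole orbit $\SO(3)\star\bH_0$, and since every orthotropic tensor lies on such an orbit, it suffices to check the identity $\blambda'(\bH_0) = \mathcal{R}(\bH_0)$.

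First I would record that $\blambda'(\bH_0) = \blambda_0' = \operatorname{diag}(\lambda_1,\lambda_2,\lambda_3) - \tfrac{\sigma_1}{3}\Idd$, which is immediate from the definition of $\blambda$. Next I would compute, starting from the explicit Kelvin matrix \eqref{eq:normal-form-D2} of $\bH_0$, the second-order covariants $\td d_2(\bH_0) = \tr_{13}\bH_0^2$ and $\td d_3(\bH_0) = \tr_{13}\bH_0^3$; because $\bH_0$ is block-diagonal in the Kelvin representation with a purely diagonal $3\times 3$ lower block, these contractions are diagonal matrices whose entries are explicit quadratic, respectively cubic, polynomials in $\lambda_1,\lambda_2,\lambda_3$. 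Then $\td d_4 = \td d_2^2$ and $\td d_5 = \td d_2(\bH_0\td d_2)$ follow by matrix multiplication and one more fourth-order-on-second-order contraction, again staying diagonal. Taking deviatoric parts and forming the combination $\mathcal{R}(\bH_0)$ produces a diagonal matrix whose $k$-th entry is an explicit rational function of $\lambda_1,\lambda_2,\lambda_3$ with denominator $8\Delta_3$. The identity to be proven then reduces to the three (equivalently, by tracelessness, two) scalar polynomial identities asserting that this entry equals $\lambda_k - \sigma_1/3$ after clearing the denominator $8\Delta_3 = 8(\lambda_2-\lambda_1)^2(\lambda_2-\lambda_3)^2(\lambda_3-\lambda_1)^2$.

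The bulk of the work — and the only real obstacle — is the bookkeeping in this last step: one must re-express the coefficients $\alpha_2,\alpha_3$ and the multipliers $54\sigma_3$, $11\sigma_2$ in terms of the $\sigma_i$, substitute the diagonal entries of the $\td d_k'(\bH_0)$, and verify that the numerator of $8\Delta_3\big(\lambda_k - \tfrac{\sigma_1}{3}\big) - \big(\alpha_2 d_{2,kk}' + \alpha_3 d_{3,kk}' - 54\sigma_3 d_{4,kk}' + 11\sigma_2 d_{5,kk}'\big)$ vanishes identically as a polynomial in $\lambda_1,\lambda_2,\lambda_3$. This is a routine but lengthy symmetric-function manipulation, best organized by noting that both sides are antisymmetric-divided-by-$\Delta_3$ type expressions, so the numerator is a genuine polynomial of controlled degree, and checking it either by expansion or by evaluating at enough special values of the $\lambda_i$ (or by a computer algebra system, as the phrase ``checked by a direct evaluation'' in the statement indicates). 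No conceptual difficulty arises beyond confirming that the covariant $\blambda$ is well defined, which was already settled before the lemma by the observation that $\bH_0$ and $\blambda_0$ share the symmetry group $\DD_2$.
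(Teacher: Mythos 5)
Your proposal is correct and follows essentially the same route as the paper, which states that the lemma ``can be checked by a direct evaluation on $\bH_{0}$ and $\blambda_{0}'$'': you reduce by equivariance to the normal form and verify the resulting polynomial identity in $\lambda_{1},\lambda_{2},\lambda_{3}$. Your write-up merely makes explicit the covariance argument that justifies checking only on the $\DD_{2}$-fixed slice, which the paper leaves implicit.
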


Note that
\begin{equation*}
  2(\sigma_{1}^{2} - 3\sigma_{2})=(\lambda_{1}-\lambda_{2})^{2}+(\lambda_{1}-\lambda_{3})^{2}+(\lambda_{2}-\lambda_{3})^{2} > 0,
\end{equation*}
and we can thus introduce the positive invariant
\begin{equation*}
  \sigma_{eq} := \sqrt{\sigma_{1}^{2} - 3\sigma_{2}}.
\end{equation*}
and the \emph{Lode invariant} defined by
\begin{equation*}
  \mathcal{L} : = \frac{1}{\sigma_{eq}^{3}}\left(\sigma_{1}^{3}-\frac{9}{2} \sigma_{1}\sigma_{2} + \frac{27}{2}\sigma_{3}\right).
\end{equation*}
Since moreover
\begin{align*}
  \Delta_{3} & = \sigma_{1}^{2} \sigma_{2}^{2}+18 \sigma_{1} \sigma_{2} \sigma_{3}-4 \sigma_{1}^{3} \sigma_{3}-4 \sigma_{2}^{3}-27 \sigma_{3}^{2} \\
             & = \frac{4}{27} \sigma_{eq}^{6}(1 - \mathcal{L}^{2}),
\end{align*}
and $\Delta_{3} > 0$ for present orthotropic class, we deduce that $-1<\mathcal{L}<1$.

With these notations, we obtain the following result which can be checked by a direct computation on the normal form.

\begin{thm}\label{thm:D2}
  For any orthotropic fourth-order harmonic tensor $\bH$, we have
  \begin{equation} \label{eq:ThmReconstOrth}
    \bH = h_{1}\,\blambda'(\bH)\ast \blambda'(\bH) + 2 h_{2}\, \blambda'(\bH) \ast (\blambda'(\bH)^{2})' +h_{3}\,(\blambda'(\bH)^{2})' \ast (\blambda'(\bH)^{2})' ,
  \end{equation}
  where $\blambda'(\bH)$ is defined by~\eqref{eq:widetildelambda} and the three invariants $h_k$ are given by
  \begin{equation*}
    h_{1} :=  \frac{5 \sigma_{1} + 7 \mathcal{L} \sigma_{eq}}{2(1-\mathcal{L}^{2}) \sigma_{eq}^{2}},\quad
    h_{2}:=-\frac{3(5 \mathcal{L} \sigma_{1} + 7 \sigma_{eq} )}{2(1-\mathcal{L}^{2}) \sigma_{eq}^{3}},\quad
    h_{3}:=\frac{9(5 \sigma_{1} + 7 \mathcal{L} \sigma_{eq} )}{2(1-\mathcal{L}^{2}) \sigma_{eq}^{4}}.
  \end{equation*}
\end{thm}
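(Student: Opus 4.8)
The plan is to verify the claimed identity by a direct equivariant computation on the normal form, exactly as the statement announces (``can be checked by a direct computation on the normal form''). Since both sides of~\eqref{eq:ThmReconstOrth} are equivariant functions of $\bH$ — the left side tautologically, and the right side because $\blambda'(\bH)$ is a covariant (by the discussion preceding Lemma~\ref{lem:rational-expression-lambda}), the harmonic product $\ast$ is equivariant (Remark~\ref{rem:H1odotH2}), and the deviatoric part $(\cdot)'$ and the scalar invariants $h_1,h_2,h_3$ (being rational functions of the $\sigma_k$, hence of the $J_k$) are all equivariant — it suffices to check the identity when $\bH = \bH_0$ is in the normal form~\eqref{eq:normal-form-D2}. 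Indeed, if the identity holds at $\bH_0$, then applying $g\star(\cdot)$ to both sides and using equivariance gives it at $g\star\bH_0$, i.e.\ on the whole orthotropic class.

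First I would record that at $\bH = \bH_0$ the covariant $\blambda$ equals $\blambda_0 = \mathrm{diag}(\lambda_1,\lambda_2,\lambda_3)$, so that $\blambda'(\bH_0) = \blambda_0' = \mathrm{diag}(\lambda_1,\lambda_2,\lambda_3) - \tfrac{\sigma_1}{3}\,\Idd$ — here I would also double-check Lemma~\ref{lem:rational-expression-lambda} is consistent with this by evaluating the right-hand side of~\eqref{eq:widetildelambda} on $\bH_0$, since that lemma is itself asserted ``by a direct evaluation.'' Then the two second-order tensors entering~\eqref{eq:ThmReconstOrth} are the diagonal tensors $\bs := \blambda_0'$ and $\bt := (\blambda_0'^{\,2})'$, whose diagonal entries are explicit symmetric polynomials in $\lambda_1,\lambda_2,\lambda_3$. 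Next I would compute the three harmonic squares/products $\bs\ast\bs$, $\bs\ast\bt$, $\bt\ast\bt$ using the explicit formula of Example~\ref{ex:a-ast-b} for the harmonic product of two second-order harmonic tensors; each is a fourth-order harmonic tensor built from $\odot$, $\Idd\odot(\cdot)$ and $\tr(\cdot)\,\Idd\odot\Idd$, and since $\bs$ and $\bt$ are diagonal these all have the block-diagonal Kelvin form matching~\eqref{eq:normal-form-D2}. Finally I would form the linear combination $h_1(\bs\ast\bs) + 2h_2(\bs\ast\bt) + h_3(\bt\ast\bt)$ and check it equals $\underline{\bH_0}$ entrywise; because of the traceless/block structure this reduces to matching, say, the three diagonal Kelvin entries $H_{1111},H_{2222},H_{3333}$ (equivalently the three independent quantities $\lambda_2+\lambda_3$, $\lambda_3+\lambda_1$, $\lambda_1+\lambda_2$), which yields a $3\times 3$ linear system in $(h_1,h_2,h_3)$ with coefficients that are polynomials in $\sigma_1,\sigma_2,\sigma_3$ (and $\sigma_{eq}^2 = \sigma_1^2-3\sigma_2$, $\mathcal{L}$).

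The main obstacle is purely computational bookkeeping: expressing everything symmetrically in $\sigma_1,\sigma_2,\sigma_3$ so that the stated closed forms for $h_1,h_2,h_3$ emerge. Concretely, the diagonal entries of $\bt=(\blambda_0'^{\,2})'$ involve $\lambda_k^2$ and the power sums, which must be rewritten via Newton's identities in terms of the $\sigma_j$; and the resulting $3\times 3$ system must be inverted. I expect its determinant to be proportional to $\sigma_{eq}^{2}(1-\mathcal{L}^2)$ (up to a constant), which is exactly the denominator appearing in $h_1,h_2,h_3$ and is nonzero on the orthotropic class since $\Delta_3 = \tfrac{4}{27}\sigma_{eq}^6(1-\mathcal{L}^2) > 0$ there; solving then gives the three $h_k$ in the stated form. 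A useful sanity check along the way: specializing to the transversely isotropic degeneration (two of the $\lambda_k$ equal, so $\blambda_0'^{\,2}$ is proportional to $\blambda_0'$ modulo $\Idd$, i.e.\ $\bt$ collapses onto $\bs$) should collapse~\eqref{eq:ThmReconstOrth} to the formula of Theorem~\ref{thm:O2}, which I would verify as a consistency test. No step requires anything beyond the explicit harmonic-product formula and elementary symmetric-function algebra, so the proof is a finite verification; the write-up can legitimately compress it to ``a direct computation on the normal form,'' recording only the covariant identification $\blambda'(\bH_0)=\blambda_0'$, the three harmonic products from Example~\ref{ex:a-ast-b}, and the inversion of the $3\times3$ linear system.
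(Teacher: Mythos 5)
Your proposal is correct and matches the paper's approach exactly: the paper's entire justification is that the identity ``can be checked by a direct computation on the normal form,'' and your equivariance reduction to $\bH_0$, the identification $\blambda'(\bH_0)=\blambda_0'$, and the entrywise verification via the explicit harmonic product of Example~\ref{ex:a-ast-b} are precisely the computation being invoked. The only caveat is your final sanity check: in the transversely isotropic degeneration $\Delta_3=0$, so $1-\mathcal{L}^2=0$ and the $h_k$ blow up, meaning the comparison with Theorem~\ref{thm:O2} would require a careful limit rather than direct substitution — but this is incidental to the proof itself.
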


\begin{rem}
  The three invariants $h_k$ are in fact rational invariants of $\bH$. Using invariants $\sigma_{k}$ (Eq. (\ref{eq:sym-func})), we have indeed
  \begin{equation*}
    \left\{
    \begin{aligned}
      h_{1} & =
      \frac{\left(\sigma_{1}^{2}-3 \sigma_{2}\right) \left(8 \sigma_{1}^{3}-31 \sigma_{1}
        \sigma_{2}+63 \sigma_{3}\right)}{9 \Delta_{3}},
      \\
      h_{2} & =-\frac{16 \sigma_{1}^4-86 \sigma_{1}^{2} \sigma_{2}+90 \sigma_{1} \sigma_{3}+84
        \sigma_{2}^{2}}{6\Delta_{3}},
      \\
      h_{3} & =\frac{8 \sigma_{1}^{3}-31 \sigma_{1} \sigma_{2}+63 \sigma_{3}}{\Delta_{3}}.
    \end{aligned}
    \right.
  \end{equation*}
\end{rem}

As in the transversely isotropic case, the following theorem characterized orthotropic harmonic tensors which are perfect harmonic squares.

\begin{thm}\label{thm:Orth_Square}
  An orthotropic fourth-order harmonic tensor $\bH$ is a perfect harmonic square $\bH=\bh \ast \bh$ if and only if
  \begin{equation}\label{eq:Cond_Orth_Square}
    \sigma_{1}> 0 \quad \text{and} \quad 49 \sigma_{2}-8 \sigma_{1}^{2}=0.
  \end{equation}
  In that case the second order harmonic tensor $\bh$ is given by
  \begin{equation}\label{eq:expr_racine_carre_Ortho}
    \bh := \pm \sqrt{\frac{49}{10(1-\mathcal{L})\sigma_{1}}}\;\left(\blambda'(\bH) - \frac{21}{5\sigma_{1}}
    (\blambda'(\bH)^{2})'\right).
  \end{equation}
\end{thm}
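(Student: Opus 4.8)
\textbf{Proof proposal for Theorem~\ref{thm:Orth_Square}.}

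The plan is to work entirely in the normal form and exploit the reconstruction already obtained in Theorem~\ref{thm:D2}. First I would record that on the normal form, $\blambda'(\bH_0) = \blambda_0'$ is the traceless diagonal matrix with entries $\lambda_i - \sigma_1/3$, and that by Example~\ref{ex:a-ast-b} the harmonic products $\blambda_0' \ast \blambda_0'$, $\blambda_0' \ast (\blambda_0'^2)'$, and $(\blambda_0'^2)' \ast (\blambda_0'^2)'$ are all diagonal (in the Kelvin sense) fourth-order harmonic tensors which are simultaneously $\DD_2$-invariant. Since the space of $\DD_2$-invariant harmonic fourth-order tensors is three-dimensional, spanned precisely by these three products (generically), a harmonic tensor $\bh \ast \bh$ with $\bh$ a second-order deviatoric is orthotropic of the \emph{same} normal frame iff $\bh$ is itself diagonal in that frame, i.e. $\bh = \mu_1 \be_1 \otimes \be_1 + \mu_2 \be_2 \otimes \be_2 + \mu_3 \be_3 \otimes \be_3$ with $\mu_1 + \mu_2 + \mu_3 = 0$. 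So the whole problem reduces to a computation with three real parameters $\mu_i$ (or two, after the trace constraint).

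Next I would expand $\bh \ast \bh$ for such a diagonal deviatoric $\bh$ using Example~\ref{ex:a-ast-b}, obtaining a harmonic tensor whose Kelvin matrix is of the orthotropic normal-form type~\eqref{eq:normal-form-D2} with parameters, say, $\Lambda_i = \Lambda_i(\mu_1,\mu_2,\mu_3)$ quadratic in the $\mu$'s. Matching this against $\underline{\bH_0}$ gives $\lambda_i = \Lambda_i(\mu)$; then the elementary symmetric functions $\sigma_k$ of the $\lambda_i$ become explicit polynomials in the elementary symmetric functions of the $\mu_i$. Because $\mu_1+\mu_2+\mu_3 = 0$, the $\mu$-side is governed by just two invariants, $p := \mu_1^2+\mu_2^2+\mu_3^2$ and $s := \mu_1\mu_2\mu_3$ (equivalently $e_2(\mu) = -p/2$ and $e_3(\mu) = s$). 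Eliminating $p$ and $s$ from the three equations $\sigma_1 = \sigma_1(p,s)$, $\sigma_2 = \sigma_2(p,s)$, $\sigma_3 = \sigma_3(p,s)$ yields exactly one algebraic relation among $\sigma_1,\sigma_2,\sigma_3$ — this should be $49\sigma_2 - 8\sigma_1^2 = 0$ — together with a sign/positivity condition coming from the fact that $p > 0$ (real $\mu_i$, not all zero), which should translate into $\sigma_1 > 0$. That establishes the ``only if'' direction, and since the elimination is reversible whenever $49\sigma_2 = 8\sigma_1^2$ and $\sigma_1 > 0$ (solving back for $p > 0$ and then for real $\mu_i$), it simultaneously gives ``if''.

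Finally, for the explicit formula~\eqref{eq:expr_racine_carre_Ortho}: having solved for the $\mu_i$ in terms of the $\lambda_i$ (hence the $\sigma_k$, hence $\sigma_{eq}$ and $\mathcal{L}$), I would recognize $\bh = \sum \mu_i \be_i\otimes\be_i$ as a linear combination of the covariants $\blambda'(\bH)$ and $(\blambda'(\bH)^2)'$ — both diagonal in the normal frame — and read off the coefficients; the overall scalar factor $\sqrt{49/(10(1-\mathcal{L})\sigma_1)}$ and the coefficient $-21/(5\sigma_1)$ are then just the bookkeeping of that back-substitution, with the $\pm$ reflecting $\bh \mapsto -\bh$. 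Because $\blambda$ is a genuine covariant (its definition is frame-independent, as noted before Lemma~\ref{lem:rational-expression-lambda}), the resulting identity is equivariant and hence valid for every orthotropic $\bH$, not only the normal form.

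\textbf{Main obstacle.} The delicate point is the elimination step and, in particular, pinning down the correct positivity/sign condition: the map $\mu \mapsto \lambda$ is two-to-one-ish (permutations, signs) and quadratic, so one must be careful that ``there exist real $\mu_i$'' is equivalent to ``$p>0$ with the right branch'', and that this is captured cleanly by $\sigma_1 > 0$ rather than by some messier inequality; checking that the discriminant condition $\Delta_3 > 0$ (automatic in the orthotropic class) is compatible with — and not stronger than — the branch one selects is where the real care is needed. The algebra itself (expanding $\bh\ast\bh$, computing $\sigma_k(p,s)$, eliminating) is routine but lengthy and is exactly the ``direct computation on the normal form'' the statement alludes to.
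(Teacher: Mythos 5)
Your ``only if'' direction is essentially the paper's: reduce to $\bh$ diagonal in the normal frame (so that $\bH=\bh\ast\bh$ is automatically in the $\DD_{2}$ normal form~\eqref{eq:normal-form-D2}), compute the $\lambda_i$ as quadratics in the two free parameters of $\bh$, and observe that $\sigma_{1}$ is a positive multiple of $\sum\mu_i^{2}$ while $49\sigma_{2}-8\sigma_{1}^{2}$ vanishes identically. The paper does this computation through the binary forms $\bg$ of $\bh$ and $\ff=\bg^{2}$ of $\bH$, which shortens the algebra, but that is a cosmetic difference; your route through Example~\ref{ex:a-ast-b} would land in the same place. (One small point: you still need $\Delta_{3}>0$, or equivalently ``$\bH$ orthotropic, hence $\bH\neq 0$'', to upgrade $\sigma_{1}\geq 0$ to $\sigma_{1}>0$; the paper does this by noting $\sigma_{1}=0$ would force $\sigma_{2}=0$ and $\Delta_{3}=-243\sigma_{3}^{2}\leq 0$.)

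The genuine gap is in your ``if'' direction, and it is exactly the point you flag as the main obstacle without closing it. Reversing the elimination requires producing \emph{real} $\mu_{1},\mu_{2},\mu_{3}$ with prescribed $p=\sum\mu_i^{2}>0$ and $s=\mu_{1}\mu_{2}\mu_{3}$, i.e.\ showing that the cubic $t^{3}-\tfrac{p}{2}t-s$ has three real roots; nothing in ``$\sigma_{1}>0$ and $49\sigma_{2}=8\sigma_{1}^{2}$'' obviously controls the sign of its discriminant $\tfrac{p^{3}}{2}-27s^{2}$, and relating it to $\Delta_{3}>0$ is a nontrivial extra computation you have not done. The paper avoids this entirely by exploiting the reconstruction formula of Theorem~\ref{thm:D2} already in hand: the hypothesis $49\sigma_{2}-8\sigma_{1}^{2}=0$ is precisely the vanishing of the discriminant $h_{2}^{2}-h_{1}h_{3}=(8\sigma_{1}^{2}-49\sigma_{2})/\Delta_{3}^{2}$ of the quadratic form in the two covariants $\blambda'(\bH)$ and $(\blambda'(\bH)^{2})'$, so~\eqref{eq:ThmReconstOrth} collapses to a perfect harmonic square of an explicit real linear combination of those covariants --- realness being guaranteed because $\sigma_{1}>0$ and $\abs{\mathcal{L}}<1$ make the radicand in~\eqref{eq:expr_racine_carre_Ortho} positive. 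This both proves existence with no root-reality analysis and hands you the formula for $\bh$ for free, rather than by back-substitution. I would replace your ``reversible elimination'' step with that argument.
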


\begin{proof}
  Let $\bH$ be an orthotropic fourth order harmonic tensor. Suppose first that conditions~\eqref{eq:Cond_Orth_Square} are satisfied. Then, the coefficients $h_{1}$, $h_{2}$, $h_{3}$ in~\eqref{eq:ThmReconstOrth} satisfy
  \begin{equation*}
    h_{2}^{2} - h_{1}h_{3} = \frac{8 \sigma_1^2 -49 \sigma_2}{\Delta_3^2} = 0
  \end{equation*}
  and we have thus a factorization $\bH=\bh \ast \bh$, where $\bh$ is defined by~\eqref{eq:expr_racine_carre_Ortho}.

  Conversely, suppose that $\bH$ can be factorized as $\bH=\bh \ast \bh$. Without loss of generality, we can assume that $\bh$ is diagonal, with diagonal elements $a_{1}$, $a_{2}$ and $-(a_{1}+a_{2})$. Thus $\bH$ is fixed by $\DD_{2}$ and is represented by matrix normal form~\eqref{eq:normal-form-D2}. Consider now the associated binary form $\bg$ of $\bh$ (resp. $\ff$ of $\bH$) -- see \Cref{sec:Annexe_binary-forms}. We get then
  \begin{equation*}
    \bg=\alpha_{0}u^4+\alpha_{2}u^{2}v^{2}+\alpha_{0}v^{2},\quad \alpha_{0}:=\frac{1}{4}(a_{0}-a_{1}),\quad \alpha_{2}=-\frac{3}{2}(a_{0}+a_{1}),
  \end{equation*}
  and
  \begin{multline*}
    \ff =  \frac{1}{16}(\lambda_{1}+\lambda_{2}+8 \lambda_{3})u^8+\frac{7}{4} u^6 v^{2} (\lambda_{1}-\lambda_{2})+\frac{35}{8} u^{4} v^{4} (\lambda_{1}+\lambda_{2}) \\
    +\frac{7}{4} (\lambda_{1}-\lambda_{2}) u^{2} v^6 +\frac{1}{16}(\lambda_{1}+\lambda_{2}+8 \lambda_{3})v^8.
  \end{multline*}
  Now, $\ff=\bg^{2}$ leads to
  \begin{equation*}
    \begin{cases}
      \alpha_{2}^{2}+ 2 \alpha_{0}^{2} = \frac{35}{8}(\lambda_{1}+\lambda_{2})\geq 0,
      \\
      \alpha_{0} \alpha_{2} = \frac{7}{8} (\lambda_{1}-\lambda_{2}),
      \\
      \alpha_{0}^{2} = \frac{1}{16}(\lambda_{1}+\lambda_{2}+8 \lambda_{3}) \geq 0,
    \end{cases}.
  \end{equation*}
  and we get
  \begin{equation*}
    \sigma_{1} = \frac{{{\alpha_{2}}^{2}}+12{{\alpha_{0}}^{2}}}{5}, \qquad \sigma_{2} = \frac{8{{\alpha_{2}}^{4}}+192{{\alpha_{0}}^{2}}\,{{\alpha_{2}}^{2}}+1152{{\alpha_{0}}^{4}}}{1225}
  \end{equation*}
  Therefore, we have
  \begin{equation*}
    \sigma_{1}\geq 0, \qquad 8\sigma_{1}^{2}-49\sigma_{2}=0.
  \end{equation*}
  If $\sigma_{1}=0$, we get also $\sigma_{2}=0$ and thus $\Delta_{3} = -243\sigma_{3}^{2} \leq 0$, which leads to a contradiction (since $\bH$ is supposed to be orthotropic). This concludes the proof.
\end{proof}

\section{Reconstructions by means of second order covariants with a cubic covariant remainder}
\label{sec:other-results}

In this section, we consider the problem of reconstruction of harmonic tensors for which an equivariant reconstruction into second-order tensors is not possible, namely the harmonic tensors within the trigonal and the tetragonal classes. For these two classes, we show that such a reconstruction is possible up to a \emph{fourth-order harmonic tensor with cubic symmetry}. All the closed $\SO(3)$ subgroups used in this section are defined in \Cref{sec:symmetry-classes}.

\subsection{The tetragonal class}
\label{subsec:tetragonal}

An harmonic tensor $\bH\in \HH^{4}(\RR^{3})$ is tetragonal if and only if there exists $g\in \SO(3)$ such that $\bH = g\star \bH_{0}$ where $\bH_{0}$ has the following normal matrix form
\begin{equation}\label{eq:normal-form-Tetra}
  \underline{\bH_{0}} = \left(
  \begin{array}{cccccc}
      3 \delta-\sigma & \delta+\sigma   & -4 \delta & 0         & 0         & 0               \\
      \delta+\sigma   & 3 \delta-\sigma & -4 \delta & 0         & 0         & 0               \\
      -4 \delta       & -4 \delta       & 8 \delta  & 0         & 0         & 0               \\
      0               & 0               & 0         & -8 \delta & 0         & 0               \\
      0               & 0               & 0         & 0         & -8 \delta & 0               \\
      0               & 0               & 0         & 0         & 0         & 2\delta+2\sigma \\
    \end{array}
  \right)
\end{equation}
where $\sigma^{2}-25\delta^{2}\neq 0$ and $\sigma\neq 0$.

\begin{rem}
  Note that this normal form is however not unique; changing $\sigma$ to $-\sigma$ provides an alternative normal form.
  Nevertheless, the choice $\sigma>0$ allows to fix this ambiguity. Note also that
  \begin{equation*}
    \bH_{0}(-\sigma,\delta) = r \star \bH_{0}(\sigma,\delta),
  \end{equation*}
  where $r=R\left(\be_{3},\frac{\pi}{4}\right)$ is the rotation by angle $\frac{\pi}{4}$ around the $(Oz)$ axis.
\end{rem}

For tetragonal harmonic tensors, it was shown in~\cite[Section 5.4]{AKP2014} that the following polynomial invariants do not vanish
\begin{equation}\label{eq:Inv-K4-K10}
  K_{4}  :=3J_{4}-J_{2}^{2}> 0
  \qquad
  K_{10}  :=2 J_{2}K_{4}^{2}-35J_{5}^{2}  > 0
\end{equation}
and that $\delta$ and $\sigma^{2}$ are rational invariants, given by
\begin{equation}\label{eq:coeff-tetragonal-delta}
  \delta=\frac{1}{4}\frac{J_{5}}{K_{4}},
  \qquad
  \sigma^{2}=\frac{1}{8}\left(J_{2}-280\delta^{2}\right)=\frac{1}{16}\frac{K_{10}}{K_{4}^{2}}.
\end{equation}
The choice $\sigma>0$ in the normal form~\eqref{eq:normal-form-Tetra} allows to write $\sigma$ as follows:
\begin{equation}\label{eq:coeff-tetragonal-sigma}
  \sigma=\frac{\sqrt{K_{10}}}{4 K_{4}}.
\end{equation}

\begin{rem}
  Note that the condition $\sigma^{2}-25\delta^{2}= 0$ is equivalent to $L_{10}=0$ with
  \begin{equation*}
    L_{10}:=K_{10}-25J_{5}^{2},
  \end{equation*}
  and corresponds to the degeneracy when $\bH$ is at least in the cubic class. On the other hand, the condition $\sigma=0$ (with $L_{10}\ne0$) is equivalent to $K_{10}=0$ and corresponds to the degeneracy when $\bH$ is at least in the transverse isotropic class.
\end{rem}

\subsubsection{Geometric picture for the tetragonal class}

The normal form~\eqref{eq:normal-form-Tetra} corresponds to the subspace $Fix(\DD_{4})$, where given a subgroup $\Gamma$ of $\SO(3, \RR)$ the \emph{fixed point set} $Fix(\Gamma)$ is defined as
\begin{equation*}
  Fix(\Gamma):=\set{ \bH\in \HH^{4}(\RR^{3}),\quad g\star \bH=\bH,\quad \forall g\in \Gamma}.
\end{equation*}
Note now that there is only one subgroup in the conjugacy class $[\OO(2)]$ which contains $\DD_{4}$, it is $\OO(2)$ itself. Its fix point set, $Fix(\OO(2))$, is the one-dimensional subspace spanned by
\begin{equation} \label{eq:T0spanned}
  \underline{\pmb{\mathcal{T}}_{0}} := \left(
  \begin{array}{cccccc}
      3  & 1  & -4 & 0  & 0  & 0 \\
      1  & 3  & -4 & 0  & 0  & 0 \\
      -4 & -4 & 8  & 0  & 0  & 0 \\
      0  & 0  & 0  & -8 & 0  & 0 \\
      0  & 0  & 0  & 0  & -8 & 0 \\
      0  & 0  & 0  & 0  & 0  & 2
    \end{array}
  \right).
\end{equation}
However, there are two different subgroups $\octa_{1}$ and $\octa_{2}$ in the conjugacy class $[\octa]$ which contains $\DD_{4}$ and which are defined as follows. The first one, $\octa_{1}$, is the symmetry group of the cube with edges $(\pm 1;\pm 1;\pm 1)$.
The second one, $\octa_{2}$, is obtained by rotating $\octa_{1}$ around the $(Oz)$ axis by angle $\frac{\pi}{4}$.
In other words, $\octa_{2} := r\octa_{1}r^{-1}$, where $r$ is the rotation by angle $\frac{\pi}{4}$ around the $(Oz)$ axis. The fixed point sets $Fix(\octa_{1})$ and $Fix(\octa_{2})$ are one-dimensional subspaces spanned respectively by
\begin{equation*}
  \underline{\pmb{\mathcal{C}}_{0}^{1}} := \left(
  \begin{array}{cccccc}
      8  & -4 & -4 & 0  & 0  & 0  \\
      -4 & 8  & -4 & 0  & 0  & 0  \\
      -4 & -4 & 8  & 0  & 0  & 0  \\
      0  & 0  & 0  & -8 & 0  & 0  \\
      0  & 0  & 0  & 0  & -8 & 0  \\
      0  & 0  & 0  & 0  & 0  & -8 \\
    \end{array}
  \right)
\end{equation*}
and
\begin{equation*}
  \underline{\pmb{\mathcal{C}}_{0}^{2}} := \left(
  \begin{array}{cccccc}
      -2 & 6  & -4 & 0  & 0  & 0  \\
      6  & -2 & -4 & 0  & 0  & 0  \\
      -4 & -4 & 8  & 0  & 0  & 0  \\
      0  & 0  & 0  & -8 & 0  & 0  \\
      0  & 0  & 0  & 0  & -8 & 0  \\
      0  & 0  & 0  & 0  & 0  & 12 \\
    \end{array}
  \right).
\end{equation*}
Moreover, both $Fix(\octa_{1})$ and $Fix(\octa_{2})$ are supplementary subspaces of $Fix(\OO(2))$ in $Fix(\DD_{4})$, and thus we can write either
\begin{equation*}
  \bH_{0}(\sigma,\delta) = \frac{5\delta + \sigma}{5}\pmb{\mathcal{T}}_{0} - \frac{\sigma}{5}\pmb{\mathcal{C}}_{0}^{1},
\end{equation*}
or
\begin{equation*}
  \bH_{0}(\sigma,\delta) = \frac{5\delta - \sigma}{5}\pmb{\mathcal{T}}_{0} + \frac{\sigma}{5}\pmb{\mathcal{C}}_{0}^{2}.
\end{equation*}

Let $\bH$ be a tetragonal harmonic fourth order tensor. Then there exists $g \in \SO(3)$ (defined up to a right composition by an element of $\DD_{4}$) and a unique couple of real numbers $(\sigma, \delta)$ with $\sigma >0$ such that:
\begin{equation*}
  \bH = g \star \bH_{0}(\sigma,\delta).
\end{equation*}
Since $\DD_{4} \subset \OO(2)$, the following definitions
\begin{equation*}
  \pmb{\mathcal{T}}^{1}(\bH) := \frac{5\delta + \sigma}{5} g \star \pmb{\mathcal{T}}_{0}, \quad \pmb{\mathcal{T}}^{2}(\bH) := \frac{5\delta - \sigma}{5} g \star \pmb{\mathcal{T}}_{0},
\end{equation*}
do not depend on the rotation $g$, leading to well-defined covariant mappings $\pmb{\mathcal{T}}^{1}$, $\pmb{\mathcal{T}}^{2}$ from the tetragonal class in $\HH^{4}(\RR^{3})$ to the transversely isotropic classes. Similarly, since $\octa_{1}$ and $\octa_{2}$ contains $\DD_{4}$, the same is true for
\begin{equation*}
  \pmb{\mathcal{C}}^{1}(\bH) := - \frac{\sigma}{5} g \star \pmb{\mathcal{C}}_{0}^{1}, \quad \pmb{\mathcal{C}}^{2}(\bH) := \frac{\sigma}{5} g \star \pmb{\mathcal{C}}_{0}^{2}.
\end{equation*}

\begin{rem}
  Let $r=R\left(\be_{3},\frac{\pi}{4}\right)$ be the rotation by angle $\frac{\pi}{4}$ around the $(Oz)$ axis.
  We have then
  \begin{equation*}
    r \star \pmb{\mathcal{T}}_{0} = \pmb{\mathcal{T}}_{0}, \qquad r \star \pmb{\mathcal{C}}_{0}^{1} = \pmb{\mathcal{C}}_{0}^{2}.
  \end{equation*}
  Now if $\bH = g \star \bH_{0}(\sigma,\delta)$, we get
  \begin{equation*}
    (grg^{-1}) \star \pmb{\mathcal{T}}^{1}(\bH) = \pmb{\mathcal{T}}^{2}(\bH), \qquad (grg^{-1}) \star \pmb{\mathcal{C}}^{1}(\bH) = \pmb{\mathcal{C}}^{2}(\bH).
  \end{equation*}
\end{rem}

\subsubsection{Reconstruction theorem for the tetragonal class}

\begin{thm}\label{thm:D4}
  For any tetragonal fourth-order harmonic tensor $\bH$, we have
  \begin{equation*}
    \bH = \pmb{\mathcal{T}}^{k}(\bH) + \pmb{\mathcal{C}}^{k}(\bH), \qquad k = 1,2
  \end{equation*}
  where
  \begin{equation*}
    \pmb{\mathcal T}^{k}(\bH)=\frac{7}{16}
    \frac{(5\delta+(-1)^{k+1}\sigma) }{(25 {\delta^{2} }-\sigma^{2})^{2} }\,\mathbf{d}_{2}'(\bH)\ast \mathbf{d}_{2}'(\bH)
  \end{equation*}
  is a transversely isotropic covariant, and
  \begin{equation*}
    \pmb{\mathcal{C}}^{k}(\bH) =   \left(1-\frac{14\delta}{5 \delta - (-1)^{k+1}\sigma}\right)\bH + \frac{7}{2(5\delta - (-1)^{k+1}\sigma)} (\bH^{2})_{0},
  \end{equation*}
  is a cubic covariant.
\end{thm}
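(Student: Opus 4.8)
The plan is to use equivariance to reduce the whole statement to a computation on the normal form~\eqref{eq:normal-form-Tetra}. Every object entering the asserted identities is $\SO(3)$-equivariant: the scalars $\delta$ and $\sigma$ are invariants of $\bH$ by~\eqref{eq:coeff-tetragonal-delta} and~\eqref{eq:coeff-tetragonal-sigma}; the map $\bH \mapsto \bd_{2}'(\bH)$ is a polynomial second-order covariant (see~\eqref{eq:Boehler-covariants}); the harmonic product $\ast$ is equivariant (Remark~\ref{rem:H1odotH2}); contraction and harmonic projection are equivariant, hence so is $\bH \mapsto (\bH^{2})_{0}$; and the geometrically defined maps $\pmb{\mathcal{T}}^{k}$ and $\pmb{\mathcal{C}}^{k}$ are equivariant, since $\bH = g\star\bH_{0}(\sigma,\delta)$ forces $h\star\bH = (hg)\star\bH_{0}(\sigma,\delta)$ while $\pmb{\mathcal{T}}_{0},\pmb{\mathcal{C}}_{0}^{1},\pmb{\mathcal{C}}_{0}^{2}$ are fixed by $\DD_{4}$. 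Therefore both sides of each identity are equivariant functions of $\bH$, and it suffices to prove the identities for $\bH = \bH_{0}(\sigma,\delta)$; each then becomes a polynomial or rational identity in the parameters $\sigma>0$ and $\delta$ subject to $\sigma(\sigma^{2}-25\delta^{2})\ne 0$. In particular $5\delta+\sigma$ and $5\delta-\sigma$ are both non-zero (their product is $25\delta^{2}-\sigma^{2}$), so all denominators occurring in the statement are non-vanishing.

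On the normal form, the additive identity $\bH_{0} = \pmb{\mathcal{T}}^{k}(\bH_{0}) + \pmb{\mathcal{C}}^{k}(\bH_{0})$ is immediate from the definitions: for $k=1$ it is the splitting $\bH_{0}(\sigma,\delta)=\tfrac{5\delta+\sigma}{5}\pmb{\mathcal{T}}_{0}-\tfrac{\sigma}{5}\pmb{\mathcal{C}}_{0}^{1}$ recalled before the theorem, together with $\pmb{\mathcal{T}}^{1}(\bH_{0})=\tfrac{5\delta+\sigma}{5}\pmb{\mathcal{T}}_{0}$ and $\pmb{\mathcal{C}}^{1}(\bH_{0})=-\tfrac{\sigma}{5}\pmb{\mathcal{C}}_{0}^{1}$ (taking $g=\mathrm{id}$), and likewise for $k=2$ with $\pmb{\mathcal{C}}_{0}^{2}$. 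All the content thus lies in matching the explicit formulas with these geometric definitions. For the transversely isotropic term, I would first compute $\bd_{2}(\bH_{0})=\tr_{13}\bH_{0}^{2}$ directly from the Kelvin matrix; being a symmetric second-order covariant, its symmetry group contains $\DD_{4}$, which forces it to be diagonal of the form $\mathrm{diag}(a,a,b)$, so its deviator $\bd_{2}'(\bH_{0})$ is transversely isotropic about the $z$-axis and is obtained as an explicit polynomial in $(\sigma,\delta)$. By Example~\ref{ex:a-ast-b}, the harmonic square $\bd_{2}'(\bH_{0})\ast\bd_{2}'(\bH_{0})$ has symmetry group containing $\OO(2)$, hence lies in $Fix(\OO(2))=\RR\,\pmb{\mathcal{T}}_{0}$ and equals $c(\sigma,\delta)\,\pmb{\mathcal{T}}_{0}$ for an explicit polynomial $c$; one then checks that $\tfrac{7}{16}\tfrac{5\delta+(-1)^{k+1}\sigma}{(25\delta^{2}-\sigma^{2})^{2}}\,c(\sigma,\delta)=\tfrac{5\delta+(-1)^{k+1}\sigma}{5}$, which is exactly the claimed formula for $\pmb{\mathcal{T}}^{k}(\bH_{0})$.

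For the cubic remainder, having identified $\pmb{\mathcal{T}}^{k}$, it is enough to check that the claimed expression for $\pmb{\mathcal{C}}^{k}(\bH_{0})$ equals $\bH_{0}-\pmb{\mathcal{T}}^{k}(\bH_{0})$, that is, to express $(\bH_{0}^{2})_{0}$ in the two-dimensional space $Fix(\DD_{4})$ spanned by $\pmb{\mathcal{T}}_{0}$ and $\pmb{\mathcal{C}}_{0}^{1}$. I would compute the fourth-order tensor $\bH_{0}^{2}$ as a product of $6\times 6$ Kelvin matrices (minding the $\sqrt 2$ and $2$ normalizations), apply the recursive harmonic-projection formulas of Section~\ref{subsec:totally-symmetric-tensors} to obtain $(\bH_{0}^{2})_{0}=a\,\pmb{\mathcal{T}}_{0}+b\,\pmb{\mathcal{C}}_{0}^{1}$ with $a,b$ explicit in $(\sigma,\delta)$, and then verify that substituting this and $\bH_{0}=\tfrac{5\delta+\sigma}{5}\pmb{\mathcal{T}}_{0}-\tfrac{\sigma}{5}\pmb{\mathcal{C}}_{0}^{1}$ into $\bigl(1-\tfrac{14\delta}{5\delta-\sigma}\bigr)\bH_{0}+\tfrac{7}{2(5\delta-\sigma)}(\bH_{0}^{2})_{0}$ makes the $\pmb{\mathcal{T}}_{0}$-coefficient vanish and the $\pmb{\mathcal{C}}_{0}^{1}$-coefficient reduce to $-\tfrac{\sigma}{5}$; this reproduces $\pmb{\mathcal{C}}^{1}(\bH_{0})$ and, with the previous step, settles the case $k=1$. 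The case $k=2$ follows either from the same computation with $\pmb{\mathcal{C}}_{0}^{2}$ in place of $\pmb{\mathcal{C}}_{0}^{1}$, or by applying the rotation $r=R\!\left(\be_{3},\tfrac{\pi}{4}\right)$ to the case $k=1$ and invoking equivariance of all the maps involved together with $r\star\bH_{0}(\sigma,\delta)=\bH_{0}(-\sigma,\delta)$, $r\star\pmb{\mathcal{T}}_{0}=\pmb{\mathcal{T}}_{0}$ and $r\star\pmb{\mathcal{C}}_{0}^{1}=\pmb{\mathcal{C}}_{0}^{2}$ (recorded in the remark preceding the theorem).

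The only genuine obstacle is the sheer volume of these explicit evaluations: forming $\bH_{0}^{2}$ and $\tr_{13}\bH_{0}^{2}$, running the iterative harmonic-projection formulas, computing the harmonic square via Example~\ref{ex:a-ast-b}, and confirming that the particular rational coefficients $\tfrac{7}{16}\tfrac{5\delta+(-1)^{k+1}\sigma}{(25\delta^{2}-\sigma^{2})^{2}}$, $1-\tfrac{14\delta}{5\delta-(-1)^{k+1}\sigma}$ and $\tfrac{7}{2(5\delta-(-1)^{k+1}\sigma)}$ are exactly the ones for which every tensor component matches. These computations are elementary but error-prone, and in practice are best carried out and independently re-checked with a computer algebra system; no further conceptual ingredient is needed once the equivariance reduction above is in place.
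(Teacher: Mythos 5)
Your proposal is correct and follows exactly the paper's own strategy: the paper's proof likewise reduces everything to the normal form $\bH_{0}(\sigma,\delta)$ by covariance, records $\bd_{2}'(\bH_{0})=4(25\delta^{2}-\sigma^{2})(\be_{3}\otimes\be_{3})'$, and leaves the remaining component-wise verification as a direct computation. You have simply made the equivariance reduction and the bookkeeping of the normal-form checks more explicit, which is a faithful expansion of the same argument.
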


\begin{proof}
  All the formulas can be checked on the normal form~\eqref{eq:normal-form-Tetra}, for which
  $\bd_{2}'(\bH)=4 \left(25 \delta^{2}-\sigma ^{2}\right)\, (\vec e_{3} \otimes \vec e_{3})'$.
  Since the formulas are covariant, this is enough to achieve the proof.
\end{proof}

\begin{rem}
  In theorem~\ref{thm:D4}, each formula, i.e. $k=1$ or $k=2$, is self-sufficient. It is equivariant and applies in any basis to any tetragonal harmonic fourth order tensor $\bH$. The two reconstruction formulas correspond to the choice of a frame and therefore of a cube, rotated by the angle $\frac{\pi}{4}$ or not around $(Oz)$. They use the invariants $\sigma>0$ and $\delta$ defined in \eqref{eq:coeff-tetragonal-delta} and \eqref{eq:coeff-tetragonal-sigma}.
\end{rem}

The decomposition in theorem~\ref{thm:D4} can be rewritten in terms of the invariants $J_{5}$, $K_{4}$, $K_{10}$ and $L_{10}$, rather than $\sigma,\delta$ by using their rational expressions~\eqref{eq:coeff-tetragonal-delta} and~\eqref{eq:coeff-tetragonal-sigma}.

\begin{cor}\label{cor:D4}
  For any tetragonal fourth-order harmonic tensor $\bH$, we have
  \begin{equation*}
    \bH = \frac{28 K_{4}^{3} (5 J_{5}+\sqrt{K_{10}})}{L^{2}_{10}}\, \bd'_{2}(\bH) \ast \bd'_{2}(\bH) + \pmb{\mathcal{C}}(\bH)
  \end{equation*}
  where
  \begin{equation*}
    \pmb{\mathcal{C}}(\bH)=\left(1+\frac{14 J_{5}(5J_{5}+\sqrt{K_{10}})}{L_{10}}  \right)\;\bH
    -\frac{14 K_{4}(5J_{5}+\sqrt{K_{10}})}{L_{10}}\, (\bH^{2})_{0}
  \end{equation*}
  is a cubic fourth order covariant and
  \begin{align*}
    K_{4}  & =3J_{4}-J_{2}^{2}>0,
           &
    K_{10} & =2 J_{2}K_{4}^{2}-35J_{5}^{2}>0,
           &
    L_{10} & = K_{10}-25J_{5}^{2}\neq 0.
  \end{align*}
\end{cor}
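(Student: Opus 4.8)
The plan is to obtain Corollary~\ref{cor:D4} as a purely algebraic rewriting of Theorem~\ref{thm:D4}, substituting the rational expressions for $\sigma$ and $\delta$ found in~\eqref{eq:coeff-tetragonal-delta} and~\eqref{eq:coeff-tetragonal-sigma}. Concretely, I would fix one of the two self-sufficient formulas of Theorem~\ref{thm:D4} — say $k=1$, so that $5\delta+(-1)^{k+1}\sigma = 5\delta+\sigma$ — and treat the two pieces $\pmb{\mathcal T}^{1}(\bH)$ and $\pmb{\mathcal C}^{1}(\bH)$ separately.

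First I would handle the transversely isotropic term. From Theorem~\ref{thm:D4}, $\pmb{\mathcal T}^{1}(\bH)=\frac{7}{16}\frac{5\delta+\sigma}{(25\delta^{2}-\sigma^{2})^{2}}\,\bd_{2}'(\bH)\ast\bd_{2}'(\bH)$. I would plug in $\delta=\frac{J_{5}}{4K_{4}}$ and $\sigma=\frac{\sqrt{K_{10}}}{4K_{4}}$. Then $5\delta+\sigma=\frac{5J_{5}+\sqrt{K_{10}}}{4K_{4}}$, while $25\delta^{2}-\sigma^{2}=\frac{25J_{5}^{2}-K_{10}}{16K_{4}^{2}}=-\frac{L_{10}}{16K_{4}^{2}}$ by the definition $L_{10}:=K_{10}-25J_{5}^{2}$; hence $(25\delta^{2}-\sigma^{2})^{2}=\frac{L_{10}^{2}}{256K_{4}^{4}}$. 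Substituting gives $\frac{7}{16}\cdot\frac{(5J_{5}+\sqrt{K_{10}})/(4K_{4})}{L_{10}^{2}/(256K_{4}^{4})} = \frac{7}{16}\cdot\frac{256K_{4}^{4}(5J_{5}+\sqrt{K_{10}})}{4K_{4}L_{10}^{2}} = \frac{28K_{4}^{3}(5J_{5}+\sqrt{K_{10}})}{L_{10}^{2}}$, which is exactly the stated coefficient. Here I would note that the positivity assertions $K_{4}>0$, $K_{10}>0$ and the non-vanishing $L_{10}\neq 0$ come directly from~\eqref{eq:Inv-K4-K10} and the tetragonal hypothesis ($\sigma^{2}-25\delta^{2}\neq0$ for the tetragonal normal form), so the expression is well-defined.

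Next I would handle the cubic covariant $\pmb{\mathcal C}(\bH):=\pmb{\mathcal C}^{1}(\bH)=\left(1-\frac{14\delta}{5\delta-\sigma}\right)\bH+\frac{7}{2(5\delta-\sigma)}(\bH^{2})_{0}$. Using $5\delta-\sigma=\frac{5J_{5}-\sqrt{K_{10}}}{4K_{4}}$, I multiply numerator and denominator by $5J_{5}+\sqrt{K_{10}}$ to rationalize: $\frac{1}{5\delta-\sigma}=\frac{4K_{4}}{5J_{5}-\sqrt{K_{10}}}=\frac{4K_{4}(5J_{5}+\sqrt{K_{10}})}{25J_{5}^{2}-K_{10}}=-\frac{4K_{4}(5J_{5}+\sqrt{K_{10}})}{L_{10}}$. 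Then $\frac{14\delta}{5\delta-\sigma}=14\cdot\frac{J_{5}}{4K_{4}}\cdot\left(-\frac{4K_{4}(5J_{5}+\sqrt{K_{10}})}{L_{10}}\right)=-\frac{14J_{5}(5J_{5}+\sqrt{K_{10}})}{L_{10}}$, so the $\bH$-coefficient becomes $1+\frac{14J_{5}(5J_{5}+\sqrt{K_{10}})}{L_{10}}$, as claimed. Similarly $\frac{7}{2(5\delta-\sigma)}=\frac{7}{2}\cdot\left(-\frac{4K_{4}(5J_{5}+\sqrt{K_{10}})}{L_{10}}\right)=-\frac{14K_{4}(5J_{5}+\sqrt{K_{10}})}{L_{10}}$, which matches the coefficient of $(\bH^{2})_{0}$. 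That $\pmb{\mathcal C}(\bH)$ is a cubic fourth-order covariant is inherited verbatim from Theorem~\ref{thm:D4}, since we have only rewritten the scalar coefficients as rational invariants.

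The only genuinely substantive content beyond bookkeeping is verifying the algebraic identity $25J_{5}^{2}-K_{10}=-L_{10}$ and keeping track of the sign when rationalizing $5\delta-\sigma$; everything else is direct substitution, and the covariance and the defining property $\bH=\pmb{\mathcal T}^{k}+\pmb{\mathcal C}^{k}$ are already guaranteed by Theorem~\ref{thm:D4}. So I expect no real obstacle — the main thing to be careful about is the sign of $L_{10}$ (and hence of $25\delta^2-\sigma^2$), and to remark that the final formula is independent of which branch $k=1,2$ one started from because, as noted in the remark following Theorem~\ref{thm:D4}, switching $k$ corresponds to the substitution $\sigma\mapsto-\sigma$, i.e. $\sqrt{K_{10}}\mapsto-\sqrt{K_{10}}$, and the corollary is stated for the single choice $\sigma>0$.
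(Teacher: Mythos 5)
Your derivation is correct and is exactly the paper's (implicit) argument: Corollary~\ref{cor:D4} is the $k=1$ branch of Theorem~\ref{thm:D4} after substituting $\delta=\frac{J_{5}}{4K_{4}}$, $\sigma=\frac{\sqrt{K_{10}}}{4K_{4}}$ and using $25J_{5}^{2}-K_{10}=-L_{10}$, and all of your coefficient computations ($\frac{7}{16}\cdot\frac{256}{4}=28$, the rationalization of $\frac{1}{5\delta-\sigma}$, etc.) check out. One small caveat on your closing aside: the two branches do \emph{not} yield the same final formula --- starting from $k=2$ produces the variant with $\sqrt{K_{10}}$ replaced by $-\sqrt{K_{10}}$, which is a different (though equally valid) decomposition into a transversely isotropic plus a cubic part, so the corollary as stated genuinely corresponds to the single choice $k=1$ with $\sigma>0$.
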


\subsection{The trigonal class}
\label{subsec:trigonal}

An harmonic tensor $\bH\in \HH^{4}(\RR^{3})$ is trigonal if and only if there exists $g\in \SO(3)$ such that $\bH = g\star \bH_{0}$ where $\bH_{0}$ has the following normal matrix form
\begin{equation}\label{eq:normal-form-Trigonal}
  \underline{\bH_{0}} = \left(
  \begin{array}{cccccc}
      3 \delta        & \delta         & -4 \delta & -\sqrt{2}\sigma & 0         & 0        \\
      \delta          & 3 \delta       & -4 \delta & \sqrt{2}\sigma  & 0         & 0        \\
      -4 \delta       & -4 \delta      & 8 \delta  & 0               & 0         & 0        \\
      -\sqrt{2}\sigma & \sqrt{2}\sigma & 0         & -8 \delta       & 0         & 0        \\
      0               & 0              & 0         & 0               & -8 \delta & -2\sigma \\
      0               & 0              & 0         & 0               & -2\sigma  & 2\delta
    \end{array}
  \right)
\end{equation}
where $\sigma^{2}-50\delta^{2}\neq 0$ and $\sigma\neq 0$.

\begin{rem}
  Note that this normal form is however not unique. Changing $\sigma$ to $-\sigma$ provides an alternative normal form.
  Nevertheless, the choice $\sigma>0$ allows to fix this ambiguity. Note also that
  \begin{equation*}
    \bH_{0}(-\sigma,\delta)= r_t \star \bH_{0}(\sigma,\delta),
  \end{equation*}
  where $r_t=R\left(\be_{3},\frac{\pi}{3}\right)$ is the rotation by angle $\frac{\pi}{3}$ around the $(Oz)$ axis.
\end{rem}

For trigonal harmonic tensors, it was shown in~\cite[Section 5.3]{AKP2014} that the polynomial invariants $K_{4}$ and $K_{10}$, defined by \eqref{eq:Inv-K4-K10}, are strictly positive and that $\delta$, $\sigma^{2}$ are rational invariants, given by
\begin{equation}\label{eq:coeff-trigonal}
  \delta=\frac{1}{4}\frac{J_{5}}{K_{4}},\quad \sigma^{2}=\frac{1}{16}\left(J_{2}-280\delta^{2}\right)=\frac{1}{32}\frac{K_{10}}{K_{4}^{2}}.
\end{equation}

The choice $\sigma>0$ in the normal form~\eqref{eq:normal-form-Trigonal} allows to write $\sigma$ as follows:
\begin{equation}\label{eq:coeff-trigonal-sigma}
  \sigma=\frac{1}{4\sqrt{2}}\frac{\sqrt{K_{10}}}{K_{4}}.
\end{equation}

\begin{rem}
  Note that the condition $\sigma^{2}-50\delta^{2}= 0$ is equivalent to $M_{10}=0$ with
  \begin{equation*}
    M_{10}:=K_{10}-100J_{5}^{2},
  \end{equation*}
  and corresponds to the degeneracy case when $\bH$ has at least cubic symmetry.
  On the other hand, with $M_{10}\ne 0$, condition $\sigma=0$ is equivalent to $K_{10}=0$ and corresponds to the degeneracy case when $\bH$ has at least transverse isotropic symmetry.
\end{rem}

\subsubsection{Geometric picture for the trigonal class}

The geometric picture is similar to the tetragonal case: $\DD_{3}$ is contained in only one subgroup in the conjugacy class $[\OO(2)]$ and two different subgroups $\widetilde{\octa}_{1}$, $\widetilde{\octa}_{2}$ in the conjugacy class $[\octa]$, which are defined as follows.
\begin{equation*}
  \widetilde{\octa}_{1} := r_{3}\mathbb{O}_{1}r_{3}^{-1}, \qquad \widetilde{\octa}_{2} := r_t\widetilde{\octa}_{1}r_{t}^{-1}
\end{equation*}
where $\octa_{1}$ is the symmetry group of the cube with edges $(\pm 1;\pm 1;\pm 1)$ (already defined in section~\ref{subsec:tetragonal}) and the rotations introduced are
\begin{align*}
  r_{3} & := R\left(\be_{3},\frac{\pi}{4}\right) \circ R\left(\be_{1}-\be_{2},\arccos\left(\frac{1}{\sqrt{3}}\right)\right),
        &
  r_t   & : = R\left(\be_{3},\frac{\pi}{3}\right).
\end{align*}

The fixed point sets $Fix(\widetilde{\octa}_{1})$ and $Fix(\widetilde{\octa}_{2})$ are one-dimensional subspaces spanned respectively by
\begin{equation*}
  \underline{\widetilde{\pmb{\mathcal{C}}}_{0}^{1}} :=
  \left(
  \begin{array}{cccccc}
      3   & 1  & -4 & -10 & 0             & 0             \\
      1   & 3  & -4 & 10  & 0             & 0             \\
      -4  & -4 & 8  & 0   & 0             & 0             \\
      -10 & 10 & 0  & -8  & 0             & 0             \\
      0   & 0  & 0  & 0   & -8            & -10\,\sqrt{2} \\
      0   & 0  & 0  & 0   & -10\,\sqrt{2} & 2
    \end{array}
  \right)
\end{equation*}
and
\begin{equation*}
  \underline{\widetilde{\pmb{\mathcal{C}}}_{0}^{2}} :=
  \left(
  \begin{array}{cccccc}
      3  & 1   & -4 & 10  & 0            & 0            \\
      1  & 3   & -4 & -10 & 0            & 0            \\
      -4 & -4  & 8  & 0   & 0            & 0            \\
      10 & -10 & 0  & -8  & 0            & 0            \\
      0  & 0   & 0  & 0   & -8           & 10\,\sqrt{2} \\
      0  & 0   & 0  & 0   & 10\,\sqrt{2} & 2
    \end{array}
  \right).
\end{equation*}

The normal form~\eqref{eq:normal-form-Trigonal} can be decomposed as the sum of a transversely isotropic harmonic tensor and a cubic harmonic tensor, either as
\begin{equation*}
  \bH_{0}(\sigma,\delta) = \frac{10\delta - \sigma\sqrt{2}}{10}\pmb{\mathcal{T}}_{0} + \frac{\sigma\sqrt{2}}{10}\widetilde{\pmb{\mathcal{C}}}_{0}^{1},
\end{equation*}
or as
\begin{equation*}
  \bH_{0}(\sigma,\delta) = \frac{10\delta + \sigma\sqrt{2}}{10}\pmb{\mathcal{T}}_{0} - \frac{\sigma\sqrt{2}}{10}\widetilde{\pmb{\mathcal{C}}}_{0}^{2},
\end{equation*}
where $\pmb{\mathcal{T}}_{0}$ has been defined in~\eqref{eq:T0spanned}.

As in the tetragonal case, we can give coherent definitions of transversely isotropic parts

\begin{equation*}
  \widetilde{\pmb{\mathcal{T}}}^{1}(\bH) := \frac{10\delta - \sigma\sqrt{2}}{10}g\star \pmb{\mathcal{T}}_{0}, \quad \widetilde{\pmb{\mathcal{T}}}^{2}(\bH) :=  \frac{10\delta + \sigma\sqrt{2}}{10} g \star \pmb{\mathcal{T}}_{0},
\end{equation*}
and cubic parts
\begin{equation*}
  \widetilde{\pmb{\mathcal{C}}}^{1}(\bH) := \frac{\sigma\sqrt{2}}{10}g\star \widetilde{\pmb{\mathcal{C}}}_{0}^{1}, \quad \widetilde{\pmb{\mathcal{C}}}^{2}(\bH) := - \frac{\sigma\sqrt{2}}{10}g\star \widetilde{\pmb{\mathcal{C}}}_{0}^{2},
\end{equation*}

\subsubsection{Reconstruction formulas for the trigonal class}

\begin{thm}\label{thm:D3}
  For any trigonal fourth-order harmonic tensor $\bH$, we have
  \begin{equation*}
    \bH = \widetilde{\pmb{\mathcal{T}}}^{k}(\bH) + \widetilde{\pmb{\mathcal{C}}}^{k}(\bH), \qquad k = 1,2,
  \end{equation*}
  where
  \begin{equation*}
    \widetilde{\pmb{\mathcal T}}^k(\bH)= \frac{7 \left(10 \delta-(-1)^{k+1}\sigma\sqrt{2} \right)}{8 \left(50 \delta^{2}-\sigma ^{2}\right)^{2}}
    \,\mathbf{d}_{2}'(\bH)\ast \mathbf{d}_{2}'(\bH),
  \end{equation*}
  is a transversely isotropic covariant and
  \begin{equation*}
    \widetilde{\pmb{\mathcal{C}}}^{k}(\bH) = \left(1-\frac{7\delta}{10\delta+ (-1)^{k+1}\sigma\sqrt{2}}\right)\bH - \frac{7}{6(10\delta + (-1)^{k+1}\sigma\sqrt{2})} (\bH^{2})_{0}
  \end{equation*}
  is a cubic covariant.
\end{thm}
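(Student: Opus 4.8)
The plan is to follow the same route as the proof of Theorem~\ref{thm:D4}: show that every object occurring in the statement is a covariant of $\bH$, so that it suffices to verify each identity on the trigonal normal form~\eqref{eq:normal-form-Trigonal}, and then carry out that verification in the Kelvin representation. For the reduction step I would note that $\bH\mapsto\mathbf{d}_{2}'(\bH)$ is a polynomial, hence equivariant, covariant (see~\eqref{eq:Boehler-covariants}); that the harmonic product $\ast$ and the harmonic projection $(\cdot)_{0}$ are equivariant operations (Remark~\ref{rem:H1odotH2}); and that the maps $\widetilde{\pmb{\mathcal{T}}}^{k}$ and $\widetilde{\pmb{\mathcal{C}}}^{k}$ are well-defined covariant mappings, which has already been established in the geometric picture of the trigonal class (the ambiguity in a rotation $g$ with $\bH=g\star\bH_{0}$ lies in $\DD_{3}$, and $\DD_{3}\subset\OO(2)$, $\DD_{3}\subset\widetilde{\octa}_{k}$, so $g\star\pmb{\mathcal{T}}_{0}$ and $g\star\widetilde{\pmb{\mathcal{C}}}_{0}^{\,k}$ are independent of the choice of $g$). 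Hence both sides of each claimed identity are equivariant functions of $\bH$, and it is enough to check them when $\bH=\bH_{0}(\sigma,\delta)$.

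On the normal form I would proceed as follows. The tensor $\mathbf{d}_{2}(\bH_{0})=\tr_{13}\bH_{0}^{2}$ is a $\DD_{3}$-invariant symmetric second-order tensor, so its deviatoric part is a multiple of $(\vec e_{3}\otimes\vec e_{3})'$; since the scalar coefficient is quadratic in $(\sigma,\delta)$, even in $\sigma$ (by the $r_t$-symmetry of the normal form), and must vanish when $50\delta^{2}=\sigma^{2}$ — because then $\bH_{0}$ is at least cubic and all second-order symmetric covariants of a cubic tensor are spheric — it is forced to be a numerical multiple of $(50\delta^{2}-\sigma^{2})\,(\vec e_{3}\otimes\vec e_{3})'$, whose constant one reads off. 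Then $\mathbf{d}_{2}'(\bH_{0})\ast\mathbf{d}_{2}'(\bH_{0})$ is, by Example~\ref{ex:a-ast-b}, a numerical multiple of $(50\delta^{2}-\sigma^{2})^{2}\,\pmb{\mathcal{T}}_{0}$ of~\eqref{eq:T0spanned}; multiplying by the stated coefficient $\tfrac{7(10\delta-(-1)^{k+1}\sigma\sqrt{2})}{8(50\delta^{2}-\sigma^{2})^{2}}$ and comparing with the geometric definition $\widetilde{\pmb{\mathcal{T}}}^{k}(\bH_{0})=\tfrac{10\delta-(-1)^{k+1}\sigma\sqrt{2}}{10}\pmb{\mathcal{T}}_{0}$ fixes that coefficient (the denominators $50\delta^{2}-\sigma^{2}$ and $10\delta\mp\sigma\sqrt{2}$ are nonzero on the trigonal class, since $(10\delta-\sigma\sqrt{2})(10\delta+\sigma\sqrt{2})=2(50\delta^{2}-\sigma^{2})\neq0$). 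Next I would compute the harmonic projection $(\bH_{0}^{2})_{0}$ of the fourth-order tensor $\bH_{0}^{2}$ and express it in the basis $\{\pmb{\mathcal{T}}_{0},\widetilde{\pmb{\mathcal{C}}}_{0}^{\,k}\}$ of the relevant two-dimensional subspace of $Fix(\DD_{3})$; substituting this together with the decomposition $\bH_{0}(\sigma,\delta)=\tfrac{10\delta-(-1)^{k+1}\sigma\sqrt{2}}{10}\pmb{\mathcal{T}}_{0}+(-1)^{k+1}\tfrac{\sigma\sqrt{2}}{10}\widetilde{\pmb{\mathcal{C}}}_{0}^{\,k}$ into the stated expression for $\widetilde{\pmb{\mathcal{C}}}^{k}(\bH)$, one checks that it collapses to the geometric $\widetilde{\pmb{\mathcal{C}}}^{k}(\bH_{0})$. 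The identity $\bH=\widetilde{\pmb{\mathcal{T}}}^{k}(\bH)+\widetilde{\pmb{\mathcal{C}}}^{k}(\bH)$ is then exactly the geometric decomposition of $\bH_{0}(\sigma,\delta)$, and the cubic symmetry of $\widetilde{\pmb{\mathcal{C}}}^{k}(\bH)$ follows from $\widetilde{\pmb{\mathcal{C}}}_{0}^{\,k}$ spanning $Fix(\widetilde{\octa}_{k})$.

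The main obstacle is not conceptual but computational: the delicate step is the explicit evaluation of $(\bH_{0}^{2})_{0}$, since the fourth-order tensor $\bH_{0}^{2}$, with $(\bH_{0}^{2})_{ijkl}=(H_{0})_{ijpq}(H_{0})_{pqkl}$, is neither totally symmetric nor traceless, so its harmonic projection must be obtained by total symmetrization followed by the recursive subtraction of the non-harmonic parts of Section~\ref{subsec:totally-symmetric-tensors} (equations of the type~\eqref{eq:hk-poly}). One then has to verify that, after inserting the rational expressions~\eqref{eq:coeff-trigonal}--\eqref{eq:coeff-trigonal-sigma} for $\delta$ and $\sigma^{2}$ where needed, all the polynomial identities in $\delta$ and $\sigma$ close up and the two values $k=1,2$ are correctly separated by the sign $(-1)^{k+1}$ (which corresponds to the two cubes $\widetilde{\octa}_{1}$, $\widetilde{\octa}_{2}$). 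Since every ingredient is covariant, success on this single two-parameter family $\bH_{0}(\sigma,\delta)$ establishes the theorem.
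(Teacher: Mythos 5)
Your proposal is correct and follows exactly the paper's own strategy: since every map involved is covariant, reduce to the trigonal normal form $\bH_{0}(\sigma,\delta)$, where $\bd_{2}'(\bH_{0})=2(50\delta^{2}-\sigma^{2})(\vec e_{3}\otimes\vec e_{3})'$, and verify the identities by direct computation against the geometric decomposition into $\pmb{\mathcal{T}}_{0}$ and $\widetilde{\pmb{\mathcal{C}}}_{0}^{k}$. The paper's proof is just this reduction stated in two lines; your additional soft argument pinning down the coefficient of $\bd_{2}'(\bH_{0})$ via the cubic degeneracy locus is a pleasant bonus but not needed.
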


\begin{proof}
  All the formulas can be checked on the normal form~\eqref{eq:normal-form-Trigonal}, for which
  $\bd_{2}'(\bH)=2 \left(50 \delta^{2}-\sigma ^{2}\right)\, (\vec e_{3} \otimes \vec e_{3})'$.
  Since the formulas are covariant, this is enough to achieve the proof.
\end{proof}

\begin{rem}
  In theorem~\ref{thm:D3}, each formula, i.e. $k=1$ or $k=2$, is self-sufficient. It is equivariant and applies to any trigonal fourth-order harmonic tensor $\bH$ in any basis. The two reconstruction formulas correspond to the choice of a frame and therefore of a cube, rotated by the angle $\frac{\pi}{3}$ or not around $(Oz)$. They use the invariants $\sigma>0$ and $\delta$ defined in Eq. \eqref{eq:coeff-trigonal} and \eqref{eq:coeff-trigonal-sigma}.
\end{rem}

The decomposition in theorem~\ref{thm:D4} can be rewritten in terms of the invariants $J_{5}$, $K_{4}$, $K_{10}$ and $M_{10}$, rather than $\sigma,\delta$ by using their rational expressions~\eqref{eq:coeff-trigonal} and~\eqref{eq:coeff-trigonal-sigma}.

\begin{cor}\label{cor:D3}
  For any trigonal fourth-order harmonic tensor $\bH$, we have
  \begin{equation*}
    \bH = \frac{224 K_{4}^{3} (10 J_{5}+ \sqrt{K_{10}})}{M^{2}_{10}}\, \bd'_{2}(\bH) \ast \bd'_{2}(\bH) + \widetilde{\pmb{\mathcal{C}}}(\bH)
  \end{equation*}
  where
  \begin{equation*}
    \widetilde{ \pmb{\mathcal{C}}}(\bH)= \left(1+\frac{7 J_{5}(10J_{5}+\sqrt{K_{10}})}{M_{10}} \right) \bH
    +\frac{14K_{4}(10J_{5}+\sqrt{K_{10}})}{3M_{10}}\, (\bH^{2})_{0}
  \end{equation*}
  is a cubic fourth order covariant, and
  \begin{align*}
    K_{4}  & =3J_{4}-J_{2}^{2}>0,
           &
    K_{10} & =2 J_{2}K_{4}^{2}-35J_{5}^{2}>0,
           &
    M_{10} & = K_{10}-100 J_{5}^{2}\neq 0.
  \end{align*}
\end{cor}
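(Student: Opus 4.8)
The plan is to read off Corollary~\ref{cor:D3} from Theorem~\ref{thm:D3}, specialized to the index $k=2$, by substituting the rational expressions~\eqref{eq:coeff-trigonal} and~\eqref{eq:coeff-trigonal-sigma} for $\delta$ and $\sigma$ in terms of the polynomial invariants $J_5$, $K_4$, $K_{10}$. Recall that on the trigonal class $K_4>0$, $K_{10}>0$ and $M_{10}=K_{10}-100J_5^2\neq 0$, so $\sqrt{K_{10}}$ — taken with the positive sign, consistently with the convention $\sigma>0$ and with~\eqref{eq:coeff-trigonal-sigma} — and $M_{10}^{-1}$ are well defined.

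First I would record the elementary identity
\[
  M_{10} = K_{10} - 100 J_5^2 = -\bigl(10 J_5 - \sqrt{K_{10}}\bigr)\bigl(10 J_5 + \sqrt{K_{10}}\bigr),
\]
which is exactly what turns denominators of the form $10\delta\pm\sigma\sqrt2$ into $M_{10}$. From~\eqref{eq:coeff-trigonal} and~\eqref{eq:coeff-trigonal-sigma} one has $10\delta=\tfrac{5J_5}{2K_4}$ and $\sigma\sqrt2=\tfrac{\sqrt{K_{10}}}{4K_4}$, hence
\[
  10\delta+(-1)^{k+1}\sigma\sqrt2 = \frac{10 J_5 + (-1)^{k+1}\sqrt{K_{10}}}{4 K_4},
  \qquad
  50\delta^2-\sigma^2 = -\frac{M_{10}}{32 K_4^2}.
\]
Taking $k=2$, so that the combination $10J_5+\sqrt{K_{10}}$ appears in the transversely isotropic block, I would substitute into the two pieces of Theorem~\ref{thm:D3}. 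In $\widetilde{\pmb{\mathcal{T}}}^{2}(\bH)$ the scalar prefactor $\tfrac{7(10\delta+\sigma\sqrt2)}{8(50\delta^2-\sigma^2)^2}$ collapses to $\tfrac{224 K_4^3(10J_5+\sqrt{K_{10}})}{M_{10}^2}$ after clearing the powers of $K_4$, giving the first term of the corollary. In $\widetilde{\pmb{\mathcal{C}}}^{2}(\bH)$ the relevant denominator is $10\delta-\sigma\sqrt2=(10J_5-\sqrt{K_{10}})/(4K_4)$; multiplying numerator and denominator by $10J_5+\sqrt{K_{10}}$ and using the factorization of $M_{10}$ above converts $\tfrac{7\delta}{10\delta-\sigma\sqrt2}$ into $-\tfrac{7 J_5(10J_5+\sqrt{K_{10}})}{M_{10}}$ and $\tfrac{7}{6(10\delta-\sigma\sqrt2)}$ into $-\tfrac{14 K_4(10J_5+\sqrt{K_{10}})}{3 M_{10}}$, which is precisely the stated $\widetilde{\pmb{\mathcal{C}}}(\bH)$.

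Since $\widetilde{\pmb{\mathcal{C}}}(\bH)$ is just $\widetilde{\pmb{\mathcal{C}}}^{2}(\bH)$ re-expressed, the claim that it is a cubic fourth-order covariant, as well as the reconstruction identity $\bH = \widetilde{\pmb{\mathcal{T}}}^{2}(\bH) + \widetilde{\pmb{\mathcal{C}}}^{2}(\bH)$, are inherited verbatim from Theorem~\ref{thm:D3}. I do not expect any real obstacle: the computation is a routine substitution, and the only delicate point is the bookkeeping of signs — checking that it is $k=2$ rather than $k=1$ that matches the sign convention of the statement, and keeping the minus sign produced by $M_{10}=-(10J_5-\sqrt{K_{10}})(10J_5+\sqrt{K_{10}})$ in step with the positive-root choice $\sigma=\sqrt{K_{10}}/(4\sqrt2\,K_4)$.
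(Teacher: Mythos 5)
Your proposal is correct and follows exactly the route the paper intends: Corollary~\ref{cor:D3} is obtained from Theorem~\ref{thm:D3} with $k=2$ by substituting the rational expressions~\eqref{eq:coeff-trigonal} and~\eqref{eq:coeff-trigonal-sigma}, and your bookkeeping (the identity $M_{10}=-(10J_5-\sqrt{K_{10}})(10J_5+\sqrt{K_{10}})$, the values $10\delta+\sigma\sqrt2=(10J_5+\sqrt{K_{10}})/(4K_4)$ and $50\delta^2-\sigma^2=-M_{10}/(32K_4^2)$, and the resulting coefficients $224K_4^3/M_{10}^2$, $-7J_5(10J_5+\sqrt{K_{10}})/M_{10}$ and $-14K_4(10J_5+\sqrt{K_{10}})/(3M_{10})$) all checks out. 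The only remark is that the paper's introductory sentence for this corollary cites Theorem~\ref{thm:D4} where it clearly means Theorem~\ref{thm:D3}; you used the correct theorem.
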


\section{Conclusion}
\label{sec:conclusion}

In this article, we have defined the \emph{harmonic projection} $(\bT)_{0}\in \HH^{n}(\RR^{3})$ of a totally symmetric $n$-th order tensor $\bT$, and the \emph{harmonic product}
\begin{equation*}
  \bH_{1}\ast \bH_{2} := (\bH_{1}\odot \bH_{2})_{0} \in \HH^{n_{1}+n_{2}}(\RR^{3})
\end{equation*}
of two harmonic tensors $\bH_{1} \in \HH^{n_{1}}(\RR^{3})$, $\bH_{2} \in \HH^{n_{2}}(\RR^{3})$.

Using Sylvester's theorem~\ref{thm:Sylvester}, we have shown that the fourth order harmonic part $\bH$ of an elasticity tensor $\tq E$ can be expressed as the harmonic product $\bH=\bh_{1}\ast\bh_{2}$ of two second order harmonic tensors $\bh_{1}, \bh_{2}$ (\emph{i.e.} symmetric deviatoric). This decomposition is independent of any reference frame, even for triclinic materials. Nevertheless, such a factorization is non-unique and not really constructive. Moreover, a \emph{globally defined} solution $(\bh_{1}(\bH),\bh_{2}(\bH))$ can never be \emph{covariant}.

We have therefore formulated an \emph{equivariant reconstruction} problem of a tensor by means of \emph{lower order covariants}, restricted to a given symmetry class. Finally, we have obtained explicit formulas for the reconstruction of the fourth order harmonic part $\bH$ of an elasticity tensor:
\begin{itemize}
  \item for the \emph{transversely isotropic} and \emph{orthotropic} symmetry classes, by means of second order covariants (theorems~\ref{thm:O2} and theorem~\ref{thm:D2}). Besides, necessary and sufficient conditions for such tensors to be perfect harmonic squares, which means $\bH=\bh\ast \bh$, have also been obtained (Remark~\ref{rem:O2-harmonic-square} and theorem~\ref{thm:Orth_Square});
  \item for the \emph{tetragonal} and \emph{trigonal} symmetry classes, by means of second order covariants and a fourth order cubic covariant (theorem~\ref{thm:D4} and theorem~\ref{thm:D3}).
\end{itemize}
These reconstruction formulas are only valid inside each symmetry class, this is illustrated by the fact that they contain denominators which vanish when a degeneracy into a higher symmetry class occurs.

\appendix

\section{Binary forms}
\label{sec:Annexe_binary-forms}

We have already described two models for the irreducible representations of the rotation group $\SO(3)$: the space of \emph{harmonic tensors} $\HH^{n}(\RR^{3})$ and the space of harmonic polynomials $\Hn{n}(\RR^{3})$. In this section, we shall derive a third model: the space of \emph{binary forms} $\SRn{2n}$, whose construction is slightly more cumbersome.

To start, let us recall that there is a well-known relation between the rotation group $\SO(3)$ and the group of \emph{special unitary complex matrices}
\begin{equation*}
  \SU(2) := \set{\gamma \in \mathrm{M}_{2}(\CC); \; \bar{\gamma}^{t}\gamma = \Idd ,\, \det \gamma = 1}.
\end{equation*}
Let
\begin{equation*}
  \xx := (x,y,z) \mapsto M(\xx) = \left(
  \begin{array}{cc}
      -z   & x+iy \\
      x-iy & z    \\
    \end{array}
  \right)
\end{equation*}
be a linear mapping from $\RR^{3}$ to the space of traceless, hermitian matrices of order $2$ ($\sigma_{x} = M(\be_{1})$, $\sigma_{y} = M(\be_{2})$ and $\sigma_{z} = M(\be_{3})$ are the famous \emph{Pauli matrices}). The group $\SU(2)$ acts on this space by conjugacy
\begin{equation*}
  \Ad_{\gamma} : M \mapsto \gamma M \gamma^{-1}, \qquad \gamma \in \SU(2),
\end{equation*}
and preserves the quadratic form
\begin{equation*}
  \det M = -(x^{2} + y^{2} + z^{2}).
\end{equation*}
It can be checked, moreover, that $\det \Ad_{\gamma} = 1$ for all $\gamma \in \SU(2)$. Therefore, we deduce a group morphism
\begin{equation*}
  \pi : \gamma \mapsto \Ad_{\gamma}, \qquad \SU(2) \to \SO(3),
\end{equation*}
{whose kernel consists of two element $\Idd$ and $-\Idd$. $\SU(2)$ is a \emph{double cover} of the rotation group $\SO(3)$: to each rotation $g\in \SO(3)$, it corresponds exactly two elements in $\SU(2)$, namely $\gamma$ and $-\gamma$, such that $\pi(\gamma) = g$.}

\begin{ex}
  For
  \begin{equation*}
    \gamma=\begin{pmatrix}
      \text{e}^{i\theta} & 0                   \\
      0                  & \text{e}^{-i\theta}
    \end{pmatrix}
  \end{equation*}
  we get
  \begin{equation*}
    \pi(\gamma)=
    \begin{pmatrix}
      \cos(2\theta) & -\sin(2\theta) & 0 \\
      \sin(2\theta) & \cos(2\theta)  & 0 \\
      0             & 0              & 1
    \end{pmatrix}
  \end{equation*}
  in the basis $(\sigma_{x}, \sigma_{y}, \sigma_{z})$.
\end{ex}

The group $\SU(2)$ acts naturally on $\CC^{2}$ and more generally on complex polynomials in two variables $(u,v)$ by the rule
\begin{equation*}
  (\gamma \star \ff)(u,v) := \ff(\gamma^{-1} \star (u,v)), \qquad \gamma \in \SU(2).
\end{equation*}

Let $\Sn{n}(\CC^{2})$ be the space of complex, homogeneous polynomials of degree $n$ in two variables. An element of $\Sn{n}(\CC^{2})$
\begin{equation*}
  \ff(\bxi) := \sum_{k=0}^n a_{k} u^{k}v^{n-k},\quad \bxi := (u,v)\in \CC^{2}
\end{equation*}
is called a \emph{binary form} of degree $n$.

The main observation, due to Cartan (see~\cite{OKA2017} for the details), is that there is an isomorphism between the space $\Sn{2n}(\CC^{2})$ of binary forms of degree $2n$ and the space $\Hn{n}(\CC^{3})$ of complex, harmonic polynomials of degree $n$ in three variables $x,y,z$. This linear isomorphism $\psi: \Hn{n}(\CC^{3}) \to \Sn{2n}(\CC^{2})$ is given explicitly by
\begin{equation}\label{eq:Cartan-map}
  (\psi (\rh))(u,v) := \rh \left( \frac{u^{2}-v^{2}}{2}, \frac{u^{2}+v^{2}}{2i}, uv \right), \qquad \rh \in \Hn{n}(\CC^{3}).
\end{equation}

\begin{rem}[Real harmonic polynomials]
  Under this isomorphism, the space $\Hn{n}(\RR^{3})$ of \emph{real} harmonic polynomials corresponds to the \emph{real subspace} $\SRn{2n} \subset \Sn{2n}(\CC^{2})$ of binary forms $\ff$ which satisfy
  \begin{equation}\label{eq:functional-characterization}
    \overline{\ff}(-v,u)=(-1)^{n}\ff(u,v).
  \end{equation}
  These binary forms
  \begin{equation*}\label{eq:genericbinaryf}
    \ff(u,v) := \sum_{k=0}^{2n} a_{k}u^{k}v^{2n-k}
  \end{equation*}
  are also characterized by the following property:
  \begin{equation}\label{eq:coefficients-characterization}
    a_{2n-k} = (-1)^{n-k} \overline{a_{k}}, \qquad 0 \le k \le n.
  \end{equation}
\end{rem}

Moreover, it was shown in~\cite{OKA2017}, that the \emph{Cartan isomorphism} $\psi$ commutes with the action of the complex group $\SL(2,\CC)$ (the set of two by two complex matrices with determinant $1$):
\begin{equation*}
  \psi (\pi(\gamma) \star \rh) = \gamma \star \psi(\rh), \qquad \gamma \in \SL(2,\CC),\quad \rh \in \Hn{n}(\CC^{3}).
\end{equation*}
The space $\SRn{2n}$ is invariant under the action of the subgroup $\SU(2)$ of $\SL(2,\CC)$ and is moreover \emph{irreducible} for this action. Since $-\Idd$ acts as the identity on even order binary forms, the action of $\SU(2)$ on $\SRn{2n}$ reduces to an action of the rotation group $\SO(3)$ on $\SRn{2n}$. The isomorphism $\psi$ induces therefore an \emph{$\SO(3)$-equivariant isomorphism} between $\SRn{2n}$ and $\Hn{n}(\RR^{3})$.

The inverse $\psi^{-1}$ can be defined explicitly as followed. Let
\begin{equation*}
  \ff(u,v) := \sum_{k=0}^{2n} a_{k}u^{k}v^{2n-k}.
\end{equation*}
For each $k$ make the substitution
\begin{equation*}
  u^{k}v^{2n-k} \rightarrow
  \left\{ \begin{array}{cc}
    z^{k}(-x+iy)^{n-k},   & \text{if } 0 \le k \le n   \\
    z^{2n-k}(x+iy)^{k-n}, & \text{if } n \le k \le 2n.
  \end{array}
  \right.
\end{equation*}
We obtain this way a homogeneous polynomial in three variables $\rp$ of degree $n$ such that
\begin{equation*}
  \ff(u,v) = \rp \left(\frac{u^{2}-v^{2}}{2},\frac{u^{2}+v^{2}}{2i},uv\right).
\end{equation*}
Now, let $\rh = (\rp)_{0} $ be the $n$-th order harmonic component of $\rp$ in the harmonic decomposition~\eqref{eq:symmetric-harmonic-decomposition} of $\rp$. Then $\psi^{-1}(\ff) := (\rp)_{0} = \rh$.

\begin{rem}\label{rem:multiplicative-property}
  Note that if $\ff_{1} \in \SRn{2n_{1}}$ and $\ff_{2} \in \SRn{2n_{2}}$, then $\ff_{1}\ff_{2} \in \SRn{2(n_{1}+n_{2})}$ and
  \begin{equation*}
    \psi^{-1}(\ff_{1}\ff_{2}) = (\psi^{-1}(\ff_{1}) \psi^{-1}(\ff_{2}))_{0}
  \end{equation*}
\end{rem}

\begin{ex}[Order-$1$ harmonic tensors]\label{ex:order1-harmonic-tensors}
  An order-$1$ harmonic tensor $\bh$ on $\RR^{3}$ is just a linear form $\rh$ on $\RR^{3}$, which can be written as
  \begin{equation*}
    \rh(\xx) = \ww \cdot \xx = w_{i} \, x_{i},
  \end{equation*}
  with $\vec w=(w_{1}, w_{2}, w_{3})$. The corresponding binary form $\ff=\psi(\rh)\in \SRn{2}$ is given by
  \begin{equation*}
    \ff(u,v) = \overline{a_{0}}u^{2} + a_{1}uv - a_{0}v^{2},
  \end{equation*}
  where
  \begin{equation*}
    a_{0} = \frac{1}{2} \left(w_{1} +iw_{2}\right), \quad a_{1}= w_{3}.
  \end{equation*}
  Conversely, given $\ff\in \SRn{2}$, the corresponding real $3$-vector $\ww$ is given by
  \begin{equation*}
    w_{1} = a_{0} + \overline{a_{0}}, \quad w_{2} = -i\left(a_{0} - \overline{a_{0}}\right), \quad w_{3} = a_{1}.
  \end{equation*}
\end{ex}

\begin{rem}\label{rem:stereographic-projection}
  For order-$1$ tensors, there is a closed relation between the Cartan-map~\eqref{eq:Cartan-map} and the stereographic projection
  \begin{equation*}
    \tau:\ww^{0} = (w^{0}_{1}, w^{0}_{2}, w^{0}_{3}) \mapsto \lambda: = \frac{w^{0}_{1}}{1 - w^{0}_{3}} + i\frac{w^{0}_{2}}{1-w^{0}_{3}}, \quad S^{2}\setminus \set{\nn}\longrightarrow \CC
  \end{equation*}
  where $S^{2}$ is the unit sphere and $\vec n=(0,0,1)$. The stereographic projection can be extended to a bijection $\tau :S^{2} \to \CC \cup \set{\infty}$ by setting $\tau(\nn) = \infty$. Its inverse, $\tau^{-1}$, is given by
  \begin{equation*}
    w^{0}_{1} = \frac{2}{\mid \lambda\mid^{2}+1}\Re(\lambda),\quad w^{0}_{2} = \frac{2}{\mid \lambda\mid^{2}+1}\Im(\lambda),\quad w^{0}_{3} = \frac{\mid \lambda\mid^{2}-1}{\mid \lambda\mid^{2}+1}, \quad \lambda\in \CC.
  \end{equation*}
  Now let $\ww \ne (0,0,0)$. Then we can write $\ww = t \ww^{0}$, where $t = \norm{\ww}$ and $\norm{\ww^{0}} = 1$. If moreover $\ww^{0} \ne (0,0,1)$, we can write $\ff=\psi(\rh)$ as
  \begin{equation}\label{eq:stereographic-projection-representation}
    \psi(\rh) (u,v) = t\alpha( u +\lambda v)(\overline{\lambda}u-v),\quad \alpha:=\frac{1}{\mid \lambda\mid^{2}+1},\quad \lambda=\tau(\ww^{0}).
  \end{equation}
\end{rem}

\begin{ex}[Order-$2$ harmonic tensors]\label{ex:order2-harmonic-tensors}
  Let $\bh=(h_{ij})$ be a second order real harmonic tensor and $\rh$ be the corresponding real harmonic polynomial, then the corresponding binary form $\ff=\psi(\rh)\in \SRn{4}$ is given by
  \begin{equation*}
    \ff(u,v) = \overline{a_{0}}u^{4} - \overline{a_{1}}u^{3}v + a_{2}u^{2}v^{2} + a_{1}uv^{3} + a_{0}v^{4},
  \end{equation*}
  where
  \begin{equation*}
    a_{0}=\frac{1}{4}h_{11}-\frac{1}{4}h_{22}+\frac{1}{2} i h_{12},\quad
    a_{1}=-h_{13}- i h_{23},\quad
    a_{2}=\frac{3}{2} h_{33}
  \end{equation*}
  Conversely, given $\ff\in \SRn{4}$, the corresponding harmonic tensor $\bh$ is given by
  \begin{equation}\label{eq:B4-to-H2}
    \bh:=\left(\begin{array}{ccc}
        a_{0} + \overline{a_{0}} - \frac{1}{3}a_{2}        & -i\left(a_{0} - \overline{a_{0}}\right)            &
        -\frac{1}{2} \left(a_{1} + \overline{a_{1}}\right)                                                                                                           \\
        -i\left(a_{0} - \overline{a_{0}}\right)            & -a_{0} - \overline{a_{0}} - \frac{1}{3}a_{2}       & \frac{1}{2} i\left(a_{1} - \overline{a_{1}}\right) \\
        -\frac{1}{2} \left(a_{1} + \overline{a_{1}}\right) & \frac{1}{2} i\left(a_{1} - \overline{a_{1}}\right) & \frac{2}{3}a_{2}
      \end{array}
    \right)
  \end{equation}
\end{ex}

\begin{ex}[Order-$4$ harmonic tensors]\label{ex:order4-harmonic-tensors}
  Let $\td H=(H_{ijkl})$ be a fourth-order real harmonic tensor and $\rh$ be the corresponding real harmonic polynomial, then the corresponding binary form $\ff=\psi(\rh)\in \SRn{8}$ is given by
  \begin{multline*}
    \ff(u,v) = \overline{a_{0}}u^{8} - \overline{a_{1}}u^{7}v + \overline{a_{2}}u^{6}v^{2} - \overline{a_{3}}u^{5}v^{3} + a_{4}u^{4}v^{4} \\
    + a_{3}u^{3}v^{5}+a_{2}u^{2}v^{6}+a_{1}uv^{7}+a_{0}v^{8},
  \end{multline*}
  where the independent components are
  \begin{align*}
    a_{0} & = -\frac{1}{16} (8 H_{1122}+H_{1133}+H_{2233})+i \frac{1}{4} (2  H_{1112}+  H_{1233}), \\
    a_{1} & = \frac{1}{2} (4 H_{1223}+H_{1333})+i \frac{1}{2} (H_{2223}-3  H_{1123}),              \\
    a_{2} & = \frac{7}{4} (H_{1133}-H_{2233})+i \frac{7}{2}  H_{1233},                             \\
    a_{3} & = -\frac{7}{2}  H_{1333}+i \frac{7}{2}  (H_{1123}+H_{2223}),                           \\
    a_{4} & = -\frac{35}{8} (H_{1133}+H_{2233})
  \end{align*}
  Conversely, given $\ff\in \SRn{8}$, the corresponding harmonic tensor in $\HH^4(\RR^{3})$ is given by:
  \begin{align*}\label{eq:B8-to-H4}
    H_{1111} & = - H_{1122} - H_{1133},                                                                      & H_{1112}                                                                                     & = \frac{i}{4}\Big(4\overline{a_{0}}-4a_{0}-\frac{2}{7}\overline{a_{2}}+\frac{2}{7}a_{2}\Big), \\
    H_{1113} & = - H_{1223} - H_{1333},
             & H_{1122}                                                                                      & = -a_{0}-\overline{a_{0}}+\frac{1}{35}a_{4},                                                                                                                                                 \\
    H_{1123} & = \frac{i}{4} \Big(a_{1}-\overline{a_{1}}+\frac{1}{7}\overline{a_{3}}-\frac{1}{7}a_{3} \Big), & H_{1133}                                                                                     & = \frac{1}{7}\overline{a_{2}}+\frac{1}{7}a_{2}-\frac{4}{35}a_{4},                             \\
    H_{1222} & = -H_{1112}-H_{1233},
             & H_{1223}                                                                                      & = \frac{1}{4}\Big( \overline{a_{1}}+a_{1}+\frac{1}{7}\overline{a_{3}}+\frac{1}{7}a_{3}\Big),                                                                                                 \\
    H_{1233} & = \frac{1}{7}i\left(\overline{a_{2}}-a_{2}\right),                                            & H_{1333}                                                                                     & = -\frac{1}{7}\left( a_{3}+\overline{a_{3}}\right),                                           \\
    H_{2222} & = -H_{1122}-H_{2233},
             & H_{2223}                                                                                      & = \frac{i}{4}\Big(\overline{a_{1}}-a_{1}+\frac{3}{7}\overline{a_{3}}-\frac{3}{7}a_{3} \Big),                                                                                                 \\
    H_{2233} & = -\frac{1}{7}a_{2}-\frac{1}{7}\overline{a_{2}}-\frac{4}{35}a_{4},                            & H_{2333}                                                                                     &
    = - H_{1123} - H_{2223},                                                                                                                                                                                                                                                                                \\
    H_{3333} & = -H_{1133}-H_{2233}.                                                                         &                                                                                              &
  \end{align*}
\end{ex}

\section{Proof of Sylvester's theorem}
\label{sec:Annexe_Sylvester-theorem-proof}

The key point to prove Sylvester's theorem~\ref{thm:Sylvester} is a \emph{root characterization} of binary forms in $\SRn{2n}$.

\begin{lem}\label{lem:root-characterization}
  Let $\ff\in \Sn{2n}(\CC^{2})$. Then $\ff \in \SRn{2n}$ if and only if it can be written as
  \begin{equation*}
    \ff(u,v) = \alpha u^{r}v^{r}\prod_{i=1}^{n-r} \left( u-\lambda_{i}v\right) \left( \overline{\lambda_{i}}u+v\right),
  \end{equation*}
  where $\lambda_{1},\dotsc,\lambda_{n-r}\in \CC^{*}$ and $\alpha \in \RR$.
\end{lem}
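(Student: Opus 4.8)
The plan is to deduce both implications from (i) the factorization of a binary form into linear factors over $\CC \cup \set{\infty}$, and (ii) the behaviour of the operation $\ff \mapsto \ff^{\sharp}$ defined by $\ff^{\sharp}(u,v) := \overline{\ff}(-v,u)$, which is conjugate-linear and multiplicative: $(\ff_{1}\ff_{2})^{\sharp} = \ff_{1}^{\sharp}\ff_{2}^{\sharp}$. By the characterization~\eqref{eq:functional-characterization}, $\SRn{2n}$ is exactly the set of $\ff \in \Sn{2n}(\CC^{2})$ with $\ff^{\sharp} = (-1)^{n}\ff$. I would first record three elementary facts: $\alpha^{\sharp} = \overline{\alpha} = \alpha$ for $\alpha \in \RR$; $(uv)^{\sharp} = -uv$; and, for $\lambda \in \CC^{*}$, the one-line computation $\big((u-\lambda v)(\overline{\lambda}u+v)\big)^{\sharp} = -(u-\lambda v)(\overline{\lambda}u+v)$. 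Since $\sharp$ adds the $(-1)$-exponents under multiplication, any $\ff = \alpha\, u^{r}v^{r}\prod_{i=1}^{n-r}(u-\lambda_{i}v)(\overline{\lambda_{i}}u+v)$ satisfies $\ff^{\sharp} = (-1)^{r}(-1)^{n-r}\ff = (-1)^{n}\ff$, hence lies in $\SRn{2n}$. This settles the implication $(\Leftarrow)$.

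For $(\Rightarrow)$ I may assume $\ff \ne 0$ (otherwise take $\alpha = 0$). By the fundamental theorem of algebra, write $\ff = c\, u^{r_{0}}v^{r_{\infty}}\prod_{i}(u-\mu_{i}v)$, where $r_{0}$ and $r_{\infty}$ are the multiplicities of the roots $0$ and $\infty$, $c \in \CC^{*}$, $\mu_{i} \in \CC^{*}$, and $r_{0}+r_{\infty}+m = 2n$ with $m$ the number of finite nonzero roots counted with multiplicity. Using $\overline{\mu}u+v = \overline{\mu}\,\big(u-(-1/\overline{\mu})v\big)$, a direct computation gives $\ff^{\sharp}$ in factored form: it has the root $0$ with multiplicity $r_{\infty}$, the root $\infty$ with multiplicity $r_{0}$, and finite nonzero roots $\set{-1/\overline{\mu_{i}}}$ (with multiplicities). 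Comparing with $\ff^{\sharp} = (-1)^{n}\ff$ and invoking uniqueness of factorization of binary forms, I obtain $r_{0} = r_{\infty} =: r$ and that the multiset $\set{\mu_{i}}$ is invariant under the involution $\theta : \mu \mapsto -1/\overline{\mu}$.

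The key structural observation is then that $\theta$ has no fixed point on $\CC^{*}$, since $\mu = -1/\overline{\mu}$ would force $\abs{\mu}^{2} = -1$. Hence $\set{\mu_{i}}$ decomposes, multiplicities included, into $\theta$-orbits $\set{\lambda_{j}, -1/\overline{\lambda_{j}}}$, $j = 1,\dots,n-r$; regrouping the associated pairs of linear factors and absorbing the scalars $1/\overline{\lambda_{j}}$ into the leading coefficient yields $\ff = c'\, u^{r}v^{r}\prod_{j=1}^{n-r}(u-\lambda_{j}v)(\overline{\lambda_{j}}u+v)$ with $c' \in \CC^{*}$. Applying $\sharp$ to this last expression and using the three elementary facts above gives $\ff^{\sharp} = (\overline{c'}/c')(-1)^{n}\ff$, so the condition $\ff^{\sharp} = (-1)^{n}\ff$ forces $\overline{c'} = c'$; setting $\alpha := c' \in \RR$ completes the argument.

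I expect the delicate point to be the root-matching step in $(\Rightarrow)$: one must keep careful track of the multiplicities at $0$ and $\infty$ — which $\sharp$ interchanges, and this is precisely what produces the balanced factor $u^{r}v^{r}$ — and then justify the pairing of the remaining roots using that $\theta$ is a fixed-point-free involution on $\CC^{*}$. Everything else reduces to short symbolic manipulations with the operation $\sharp$, together with the geometric fact (already noted in the excerpt via the stereographic projection) that $\lambda$ and $-1/\overline{\lambda}$ are antipodal on the Riemann sphere, which is what makes the statement natural.
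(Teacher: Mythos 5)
Your proof is correct and follows essentially the same route as the paper: extract the balanced factor $u^{r}v^{r}$, pair the remaining roots under the fixed-point-free involution $\lambda \mapsto -1/\overline{\lambda}$, and deduce reality of the leading constant from the symmetry condition. Packaging condition~\eqref{eq:functional-characterization} as the multiplicative operator $\ff\mapsto\ff^{\sharp}$ is just a cleaner bookkeeping of the same computations (the paper instead reads off $r$ from the coefficient relation~\eqref{eq:coefficients-characterization} and checks reality of $\alpha$ via the product-of-roots identity~\eqref{eq:hpratique}), with the added merit that you make explicit the fixed-point-freeness of the involution, which the paper leaves implicit.
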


\begin{proof}
  Let $\ff \in \SRn{2n}$ which does not vanish identically. Due to~\eqref{eq:coefficients-characterization}, we deduce that for each $0 \le k \le n$, $a_{2n-k} = 0$ if and only if $a_{k}=0$. Thus, there exists a non-negative integer $r\leq n$ such that
  \begin{equation*}
    \ff(u,v) = u^{r}v^{r} \bg(u,v),
  \end{equation*}
  where $\bg \in \SRn{2(n-r)}$ with leading term $b_{2(n-r)} = (-1)^{n-r}\overline{b_{0}} \ne 0$. Let
  \begin{equation*}
    \rp(t) := \bg(t,1).
  \end{equation*}
  Then $\rp$ is a polynomial of degree $2(n-r)$ with no vanishing root. Due to~\eqref{eq:functional-characterization}, we deduce that if $\lambda$ is a root of $\rp$, then $-1/\bar{\lambda}$ is another root of $\rp$. Thus
  \begin{equation*}
    \rp(t) = b_{2(n-r)} \prod_{i=1}^{n-r} \left( t-\lambda_{i} \right) \left( t+\frac{1}{\overline{\lambda_{i}}}\right),
  \end{equation*}
  and we get
  \begin{equation*}
    \bg(u,v) = \frac{b_{2(n-r)}}{\prod \overline{\lambda_{i}}} \prod_{i=1}^{n-r} \left( u-\lambda_{i}v\right) \left( \overline{\lambda_{i}}u+v\right)
  \end{equation*}
  Now, using~\eqref{eq:functional-characterization}, we have
  \begin{equation*}
    \frac{\prod \lambda_{i}}{\prod \overline{\lambda_{i}}} = (-1)^{n-r} \frac{b_{0}}{b_{2(n-r)}} = \frac{\overline{b_{2(n-r)}}}{b_{2(n-r)}}
  \end{equation*}
  and thus
  \begin{equation}\label{eq:hpratique}
    \alpha := \frac{b_{2(n-r)}}{\prod \overline{\lambda_{i}}} = \frac{\overline{b_{2(n-r)}}}{\prod \lambda_{i}}
  \end{equation}
  is real. The converse can be immediately established using the functional characterization~\eqref{eq:functional-characterization} of $\SRn{2n}$.
\end{proof}

\begin{rem}\label{rem:about-non-unicity}
  This factorization is not unique. Permuting a couple of paired roots $\lambda_{i}, -1/\overline{\lambda_{i}}$, leads to the transformation
  \begin{equation*}
    \left( u-\lambda_{i}v\right) \left( \overline{\lambda_{i}}u+v\right) \to \frac{-1}{\abs{\lambda_{i}}^{2}} \left( u-\lambda_{i}v\right) \left( \overline{\lambda_{i}}u+v\right),
  \end{equation*}
  and may result in a different value for $\alpha$.
\end{rem}

We can now give the proof of Sylvester's theorem~\ref{thm:Sylvester} and of proposition~\ref{prop:fiber}.

\begin{proof}[Proof of theorem~\ref{thm:Sylvester}]
  Using the isomorphism between $\Hn{n}(\RR^{3})$ and $\SRn{2n}$ described in ~\Cref{sec:Annexe_binary-forms}, we set $\ff := \psi(\rp)$. Using Lemma~\ref{lem:root-characterization}, we can write
  \begin{equation*}
    \ff(u,v) = \alpha u^{r}v^{r}\prod_{i=1}^{n-r} \left( u-\lambda_{i}v\right) \left( \overline{\lambda_{i}}u+v\right).
  \end{equation*}
  Let
  \begin{equation*}
    \bw_{i} = \left\{
    \begin{array}{ll}
      \left( u-\lambda_{i}v\right) \left( \overline{\lambda_{i}}u+v\right), & \hbox{if $i \le n-r$;} \\
      uv,                                                                   & \hbox{if $i > n-r$.}
    \end{array}
    \right.
  \end{equation*}
  Each $\bw_{i}$ is in $\SRn{2}$, hence $\psi^{-1}(\bw_{i}) \in \Hn{1}(\RR^{3})$ is a polynomial of degree $1$ on $\RR^{3}$ which can be written as $\xx \cdot \ww_{i}$ for some vector $\ww_{i}$ in $\RR^{3}$. We get thus
  \begin{multline*}
    \rp = \psi^{-1}(\ff) = \psi^{-1}(\bw_{1} \dotsb \bw_{n})
    \\
    = \psi^{-1}(\bw_{1}) \ast \dotsb \ast \psi^{-1}(\bw_{n})  = (\xx \cdot \ww_{1}) \ast \dotsb \ast (\xx \cdot \ww_{n}),
  \end{multline*}
  which achieves the proof.
\end{proof}

\begin{proof}[Proof of Proposition~\ref{prop:fiber}]
  Let $\ww_{1},\dotsc,\ww_{n}$ and $\tilde{\ww}_{1},\dotsc,\tilde{\ww}_{n}$ be unit vectors. We can assume, rotating if necessary this configuration of $2n$ vectors on the sphere, that none of the $\ww_{k},\tilde{\ww}_{k}$ are equal to $(0,0,1)$ or $(0,0,-1)$. Set
  \begin{equation*}
    \lambda_{k}:=\tau(\ww_{k}),\quad \mu_{k}:= \tau(\tilde{\ww}_{k}),
  \end{equation*}
  and note that the $\lambda_{k}, \mu_{k}$ belong to $\CC\setminus\set{0}$. Using~\eqref{eq:stereographic-projection-representation}, the relation
  \begin{equation*}
    (\xx \cdot \ww_{1}) \ast \dotsb \ast (\xx \cdot \ww_{n})=(\xx \cdot \tilde{\ww}_{1}) \ast \dotsb \ast (\xx \cdot \tilde{\ww}_{n}), \qquad \forall \xx \in \RR^{3}.
  \end{equation*}
  can be recast as the binary form identity
  \begin{equation*}
    \prod_{k=1}^{n} \alpha_{k}( u +\lambda_{k} v)(\overline{\lambda_{k}}u-v) = \prod_{k=1}^{n} \beta_{k}( u +\mu_{k} v)(\overline{\mu_{k}}u-v), \qquad \forall (u,v) \in \CC^{2}.
  \end{equation*}
  Therefore, there exists a permutation $\sigma$ of $\set{1,\dotsc,n}$, such that either $\mu_{k}=\lambda_{\sigma(k)}$ or $\mu_{k}=-1/\overline{\lambda_{\sigma(k)}}$. In the first case we deduce that
  \begin{equation*}
    \tilde{\ww}_{k} = \tau^{-1}(\lambda_{\sigma(k)}) = \ww_{\sigma(k)}
  \end{equation*}
  while in the second case, we have
  \begin{equation*}
    \tilde{\ww}_{k} = \tau^{-1}\left(-\frac{1}{\overline{\lambda_{\sigma(k)}}}\right )= -\ww_{\sigma(k)}.
  \end{equation*}
  We obtain therefore that $\tilde{\ww}_{k} = \epsilon_{k}\ww_{\sigma(k)}$ where $\epsilon_{k} = \pm 1$ for $1 \le k \le n$ with $\epsilon_{1} \dotsb \epsilon_{n} = 1$. If the vectors $\ww_{k}, \tilde{\ww}_{k}$ are not unit vectors, a normalization process leads to the conclusion that $\tilde{\ww}_{k} = c_{k}\ww_{\sigma(k)}$ where $c_{k} \in \RR$ and $c_{1} \dotsb c_{n} = 1$, which achieves the proof.
\end{proof}

\section{Symmetry classes}
\label{sec:symmetry-classes}

The action of the rotation group $\SO(3)$ on the space of $n$-order harmonic tensors $\HH^{n}(\RR^{3})$ partitions this space into \emph{symmetry classes}, where two tensors $\bH_{1},\bH_{2}$ belong to the same symmetry class if their respective symmetry group $G_{\bH_{1}}$ and $G_{\bH_{2}}$ are conjugates, that is
\begin{equation*}
  G_{\bH_{2}} = gG_{\bH_{1}}g^{-1}, \qquad \text{for some $g \in \SO(3)$.}
\end{equation*}
Symmetry classes are in correspondence with conjugacy classes
\begin{equation*}
  [K] := \set{ gKg^{-1};\; g\in \SO(3)}
\end{equation*}
of closed subgroups $K$ of $\SO(3)$ (several conjugacy classes may however correspond to empty symmetry classes). Moreover, symmetry classes are partially ordered by the following partial order defined on conjugacy classes
\begin{equation*}
  [K_{1}]\preceq [K_{2}] \qquad \text{if} \qquad \exists g\in \SO(3),\quad K_{1}\subset gK_{2}g^{-1}.
\end{equation*}

Recall that a closed $\SO(3)$ subgroup is conjugate to one of the following list \cite{GSS1988}:
\begin{equation*}
  \SO(3),\, \OO(2),\, \SO(2),\, \DD_{n} (n \ge 2),\, \ZZ_{n} (n \ge 2),\, \tetra,\, \octa,\, \ico,\, \text{and}\, \triv
\end{equation*}
where:
\begin{itemize}
  \item $\OO(2)$ is the subgroup generated by all the rotations around the $z$-axis and the order 2 rotation $\sigma : (x,y,z)\mapsto (x,-y,-z)$ around the $x$-axis.
  \item $\SO(2)$ is the subgroup of all the rotations around the $z$-axis.
  \item $\ZZ_{n}$ is the unique cyclic subgroup of order $n$ of $\SO(2)$, the subgroup of rotations around the $z$-axis.
  \item $\DD_{n}$ is the \emph{dihedral} group. It is generated by $\ZZ_{n}$ and $\sigma :(x,y,z)\mapsto (x,-y,-z)$.
  \item $\tetra$ is the \emph{tetrahedral} group, the (orientation-preserving) symmetry group of a tetrahedron. It has order 12.
  \item $\octa$ is the \emph{octahedral} group, the (orientation-preserving) symmetry group of a cube or octahedron. It has order 24.
  \item $\ico$ is the \emph{icosahedral} group, the (orientation-preserving) symmetry group of a icosahedra or dodecahedron. It has order 60.
  \item $\triv$ is the trivial subgroup, containing only the unit element.
\end{itemize}

The symmetry classes of a second order symmetric tensor are described by the following totally ordered set:
\begin{equation*}
  \begin{array}{c}
    \textrm{orthotropic}
    \\
    \textrm{$[\mathbb{D}_{2}]$}
  \end{array}
  \longrightarrow
  \begin{array}{c}
    \textrm{transversely isotropic}
    \\
    \textrm{$[\OO(2)]$}
  \end{array}
  \longrightarrow
  \begin{array}{c}
    \textrm{isotropic}
    \\
    \textrm{$[\SO(3)]$}
  \end{array}
\end{equation*}

For $\HH^{4}(\RR^{3})$, it is known that there are only $8$ symmetry classes (the same classes as for the Elasticity tensor~\cite{FV1996}). The corresponding partial ordering is illustrated in figure~\ref{fig:lattice}, with the convention that a subgroup at the starting point of an arrow is conjugate to a subgroup of the group pointed by the arrow.

\begin{figure}[h!]
  \includegraphics[scale=1]{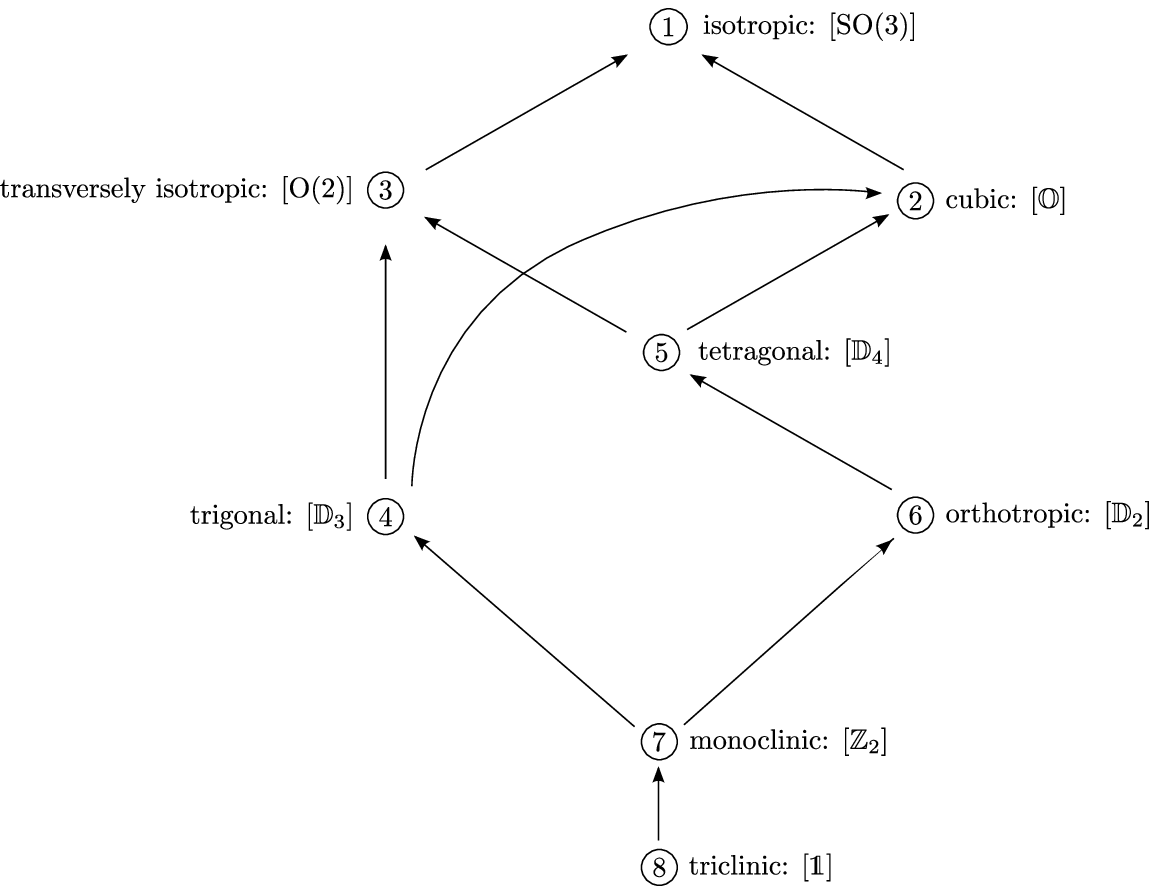}
  \caption{The partial ordering set for symmetry classes of $\HH^{4}(\RR^{3})$.}
  \label{fig:lattice}
\end{figure}


\end{document}